\title{Efficient Dimensionality Reduction for Canonical Correlation Analysis\footnote{An extended abstract of this work will appear in the 2013 International Conference of Machine Leanring (ICML).}}
\author{
{\bf Haim Avron}\\
\small Business Analytics \& Mathematical Sciences\\
IBM T. J. Watson Research Center\\
\texttt{haimav@us.ibm.com} \\
\and
{\bf Christos Boutsidis} \\
\small Business Analytics \& Mathematical Sciences\\
IBM T. J. Watson Research Center\\
\texttt{cboutsi@us.ibm.com} \\
\and
{\bf Sivan Toledo } \\
{\small Blavatnik School of Computer Science} \\
Tel-Aviv University \\
\texttt{stoledo@tau.ac.il} \\
\and
{\bf Anastasios Zouzias }\\
\small Mathematical \& Computational Sciences\\
IBM Z\"{u}rich Research Lab\\
\texttt{azo@zurich.ibm.com} \\
}
\long\def\symbolfootnote[#1]#2{\begingroup%
\def\thefootnote{\fnsymbol{footnote}}\footnote[#1]{#2}\endgroup}
\newcommand{\TNorm }[1]{\mbox{}\|#1\|_2  }
\newcommand{\TNormS}[1]{\mbox{}\|#1\|_2^2}
\newtheorem{theorem}{\bf Theorem}[]
\newtheorem{lemma}[theorem]{Lemma}
\newtheorem{definition}[theorem]{Definition}
\newtheorem{corollary}[theorem]{Corollary}
\newcommand{\transp}{^{\textsc{T}}}
\newcommand{\mat}[1]{{\ensuremath{\bm{\mathrm{#1}}}}}
\newcommand{\abs }[1]{\left|#1\right|}
\newcommand{\pinv}[1]{ {#1}^{+}}
\newcommand{\tpinv}[1]{ (\pinv{#1})\transp}
\def\one{{\bm{1}}}
\def\rank{\hbox{\rm rank}}
\def\b{{\mathbf b}}
\def\e{{\mathbf e}}
\def\p{{\mathbf p}}
\def\u{{\mathbf u}}
\def\v{{\mathbf v}}
\def\matA{\mat{A}}
\def\matB{\mat{B}}
\def\matC{\mat{C}}
\def\matD{\mat{D}}
\def\matE{\mat{E}}
\def\matF{\mat{F}}
\def\calF{{\cal F}}
\def\matG{\mat{G}}
\def\matH{\mat{H}}
\def\matI{\mat{I}}
\def\matL{\mat{L}}
\def\matP{\mat{P}}
\def\matQ{\mat{Q}}
\def\matR{\mat{R}}
\def\matS{\mat{S}}
\def\matU{\mat{U}}
\def\matV{\mat{V}}
\def\matW{\mat{W}}
\def\matX{\mat{X}}
\def\matY{\mat{Y}}
\def\matZ{\mat{Z}}
\def\matSig{\mat{\Sigma}}
\def\matTh{\mat{\Theta}}
\def\matOmega{\mat{\Omega}}
\def\matPhi{\mat{\Phi}}
\def\matPsi{\mat{\Psi}}
\newcommand{\infnorm}[1]{\ensuremath{\|#1\|_\infty}}
\def\w{{\mathbf{w}}}
\DeclareMathSymbol{\Prob}{\mathbin}{AMSb}{"50}
\newcommand\remove[1]{}
\newcommand\ignore[1]{}
\def\nnz{{ \rm nnz }}
\def\math#1{$#1$}
\def\frac#1#2{{#1\over #2}}
\DeclareMathSymbol{\R}{\mathbin}{AMSb}{"52}
\def\p{{\mathbf p}}
\def\x{{\mathbf x}}
\def\y{{\mathbf y}}
\def\z{{\mathbf z}}
\def\a{{\mathbf a}}
\def\b{{\mathbf b}}
\def\norm#1{{\|#1\|}}
\def\ceil#1{{\left\lceil\,#1\,\right\rceil}}
\def\dotfil{\leaders\hbox to 1.5mm{.}\hfill}
\def\eps{\epsilon}
\newcommand{\ip}[2]{\left\langle {#1},\ {#2} \right\rangle}
\begin{document}

\maketitle



\begin{abstract}
We present a fast algorithm for approximate Canonical Correlation Analysis (CCA).
Given a pair of tall-and-thin matrices, the proposed algorithm first employs a randomized
dimensionality reduction transform to reduce the size of the input matrices,
and then applies any CCA algorithm to the new pair of matrices.
The algorithm computes an approximate CCA to the original pair of matrices with provable guarantees, while requiring asymptotically less operations than the state-of-the-art exact algorithms.

\end{abstract}

\section{Introduction}

Canonical Correlation Analysis (CCA)~\cite{Hot36} is an important technique in statistics, data analysis, and data mining. CCA  has been successfully applied in many statistics and machine learning applications, e.g. dimensionality reduction~\cite{SCY10}, clustering~\cite{CKLS09}, learning of word embeddings~\cite{DFU11}, sentiment classification~\cite{DRFU12}, discriminant learning~\cite{SFGT12}, and object recognition~\cite{KKC07}.
In many ways CCA is analogous to Principal Component Analysis (PCA), but instead of analyzing a single data-set (in matrix form), the goal of CCA is to analyze the relation between a pair of data-sets (each in matrix form). From a statistical point of view, PCA extracts the maximum covariance directions between elements in a single matrix, whereas CCA finds the direction of maximal correlation between a pair of matrices.
From a linear algebraic point of view, CCA measures the similarities between two subspaces (those spanned by the columns of each of the two matrices analyzed). From a geometric point of view, CCA computes the cosine of the \emph{principle} angles between the two subspaces.

There are different ways to define the canonical correlations of a pair of matrices, and all these methods are equivalent~\cite{GZ95}.
The linear algebraic formulation of Golub and Zha~\cite{GZ95}, which we present shortly,
serves our algorithmic point of view best.
\begin{definition}\label{def}
Let $\matA \in \R^{m \times n}$ and $\matB \in \R^{m \times \ell}$ , and assume that $p = \rank(\matA) \geq \rank(\matB) = q$.
The {\em canonical correlations}
$$\sigma_1\left( \matA, \matB \right) \ge \sigma_2\left( \matA, \matB \right) \ge \cdots \ge \sigma_q\left( \matA, \matB \right)$$
of the matrix pair $(\matA, \matB)$ are defined recursively by the following formula:
\begin{eqnarray*}
\sigma_i\left(\matA, \matB \right) = \max_{ \x \in {\cal A}_i, \y \in {\cal B}_i }  \sigma \left( \matA \x, \matB \y \right) = : \sigma\left( \matA \x_i, \matB \y_i \right) ,
\quad i=1,\ldots ,q
\end{eqnarray*}
where
\begin{itemize}
	
\item $ \sigma\left(\u, \v \right)  = | \u\transp \v | / \left( \TNorm{\u} \TNorm{\v} \right)$,

\item $ {\cal A}_i = \{ \x : \matA \x \neq \bf{0}, \matA \x \perp \{ \matA \x_1,\ldots,\matA \x_{i-1} \} \} $,

\item $ {\cal B}_i = \{ \y : \matB \y \neq \bf{0}, \matB \y \perp \{ \matB \y_1,\ldots,\matB \y_{i-1} \} \} $.
\end{itemize}
The unit vectors
\begin{eqnarray*} \matA \x_1 / \TNorm{\matA \x_1}, \dots, \matA \x_q / \TNorm{\matA \x_q},
\quad and
\quad  \matB \y_1 / \TNorm{\matB \y_1}, \dots, \matB \y_q / \TNorm{\matB \y_q},
\end{eqnarray*}
 are called the {\em canonical} or {\em principal vectors}.
The vectors
\begin{eqnarray*}
\x_1 / \TNorm{\matA \x_1}, \dots, \x_q / \TNorm{\matA \x_q},
\quad and
\quad \y_1 / \TNorm{\matB \y_1}, \dots, \y_q / \TNorm{\matB \y_q},
\end{eqnarray*}
are called {\em canonical weights} (or {\em projection vectors}). Note that the canonical weights and the canonical vectors are \emph{not} uniquely defined.
\end{definition}

\subsection{Main Result}

The main contribution of this article (see Theorem~\ref{thm:alg}) is a fast algorithm to compute an approximate CCA. The algorithm computes an additive-error approximation to \emph{all} the canonical correlations. It also computes a set of approximate canonical weights with provable guarantees. We show that the proposed algorithm is asymptotically faster compared to the standard method of Bj{\"o}rck and Golub~\cite{BG73}. To the best of our knowledge, this is the first sub-cubic time algorithm for approximate CCA that has provable guarantees.

The proposed algorithm is based on \emph{dimensionality reduction}: given a pair of matrices $(\matA, \matB)$, we transform the pair to a new pair $(\hat{\matA}, \hat{\matB})$ that has much fewer rows, and then compute the canonical correlations of the new pair exactly, alongside a set of canonical weights, e.g. using the Bj{\"o}rck and Golub algorithm.
We prove that with high probability the canonical correlations of $(\hat{\matA}, \hat{\matB})$ are close to the canonical correlations of $(\matA, \matB)$, and that any set of canonical weights of $(\hat{\matA}, \hat{\matB})$ can be used to construct a set of approximately orthogonal canonical vectors of $(\hat{\matA}, \hat{\matB})$.
The transformation of $(\matA, \matB)$ into $(\hat{\matA}, \hat{\matB})$ is done in two steps. First, we apply the \emph{Randomized Walsh-Hadamard Transform (RHT)} to both $\matA$ and $\matB$. This is a unitary transformation, so the canonical correlations are preserved exactly. On the other hand, we show that with high probability, the transformed matrices have their ``information'' equally spread among all the input rows, so now the transformed matrices are amenable to uniform sampling. In the second step, we uniformly sample (without replacement) a sufficiently large set of rows and rescale them to form $(\hat{\matA}, \hat{\matB})$. The combination of RHT and uniform sampling is often called \emph{Subsampled Randomized Walsh-Hadamard Transform (SRHT)} in the literature~\cite{Tro11}. Note that
other variants of dimensionality reduction~\cite{Sar06} might be appropriate as well, but for concreteness we focus on the SRHT (see also Section~\ref{sec:sparse}).

Our dimensionality reduction scheme is particularly effective when the matrices are tall-and-thin, that is they have much more rows than columns. Targeting such matrices is natural: in typical CCA applications, columns typically correspond to features or labels and rows correspond to samples or training data. By computing the CCA on as many instances as possible (as much training data as possible), we get the most reliable estimates of application-relevant quantities.
However in current algorithms adding instances (rows) is expensive, e.g. in Bj{\"o}rck and Golub algorithm we pay $O(n^2+\ell^2)$ for each row. Our algorithm allows practitioners to run CCA on huge data sets because we reduce the cost of an extra row to almost $O(n+\ell)$.

We also discuss a variant of our dimensionality reduction scheme that is more suitable for sparse matrices (Section~\ref{sec:sparse}), and show that it is not possible to replace the additive error guarantees in our analysis with relative error guarantees (Section~\ref{sec:error:lowerbound}). Finally, we demonstrate that our algorithm is faster than the standard algorithm in practice by 30-60\% even on fairly small matrices  (Section~\ref{sec:experiments}).

\subsection{Related Work}

Dimensionality reduction has been the driving force behind many recent algorithms for accelerating key machine learning and linear algebraic tasks. A representative example is linear regression, i.e., solve the least squares problem $\min_{\x}\TNorm{\matA \x - \b}$, where $\b \in \R^{m}$. If $m \gg n$, then one can use the SRHT to reduce the dimension of $\matA$ and $\b$, to form $\hat{\matA}$ and $\hat{\b}$, and then solve the small problem $\min_{\x}\TNorm{\hat{\matA} \x - \hat{\b}}$. This process will return an approximate solution to the original problem~\cite{Sar06,BD09,DMMS11}. Alternatively, one can observe that $\matA\transp \matA$ and $\hat{\matA}\transp \hat{\matA}$ are spectrally close, so $\hat{\matA}$ is an effective preconditioner for $\matA$~\cite{RT08,AMT10}. Other problems that can be accelerated using dimensionality reduction include:
(i) approximate PCA (via low-rank matrix approximation)~\cite{HMT};
(ii) matrix multiplication~\cite{Sar06};
(iii) K-means clustering~\cite{BZD10};
(iv) approximation of matrix coherence and statistical leverage~\cite{DMMW12}; to name only a few.

Our approach uses similar techniques as the algorithms mentioned above. For example, Lemma~\ref{lemma:sampling-ortho}
plays a central role in these algorithms as well. However, our analysis requires the use of advanced ideas from matrix perturbation theory and it leads to two new technical lemmas that might be of independent interest:
Lemmas~\ref{lem:pert4} and~\ref{lem:pert5} provide bounds for the singular values of the product of two \emph{different} sampled orthonormal matrices.
Previous work only provides bounds for products of the \emph{same} matrix (Lemma~\ref{lemma:sampling-ortho};
see also~\cite[Corollary 11]{Sar06})

Dimensionality reduction techniques for accelerating CCA have been suggested or used in the past.
One common technique is to simply use less samples by uniformly sampling the rows.
Although this technique might work reasonably well in many instances, it may fail for others
unless all rows are sampled. In fact, Theorem~\ref{thm1}
analyzes uniform sampling, and establishes bounds on the required sample size.

Sun et al. suggest a two-stage approach which involves first solving a least-squares problem, and then using the solution to reduce the problem size~\cite{SCY10}. However, their technique involves explicitly factoring one of the two matrices, which takes cubic time. Therefore, their method is especially effective when one of the two matrices has significantly less columns than the other. When the two matrices have about the same number of columns, there is no asymptotic performance gain. In contrast, our method is sub-cubic in any case.

Finally, it is worth noting that CCA itself has been used for dimensionality reduction~\cite{SJY08, CKLS09, SCY10}.
This is not the focus of this article; we suggest a dimensionality reduction technique to accelerate CCA.

\section{Preliminaries}
%

%
%
We use $i:j$ to denote the set $\{i,\dots,j\}$, and $[n]=1:n$. We use \math{\matA,\matB,\ldots} to denote matrices and \math{\a,\b,\ldots} to denote column vectors.
$\matI_{n}$ is the $n \times n$ identity matrix;  $\bm{0}_{m \times n}$ is the $m \times n$ matrix of zeros.
We denote the number of non-zero elements in $\matA$ by $\nnz(\matA)$. We denote by $\mathcal{R}(\cdot)$ the column space of its argument matrix. We denote by $[\matA ; \matB]$ the matrix
obtained by concatenating the columns of $\matB$ next to the columns of $\matA$. Given a subset of indices $T \subseteq [m]$, the corresponding sampling matrix $\matS$ is the $|T|\times m$ matrix obtained by discarding from $\matI_{m}$ the rows whose index is not in $T$. Note that $\matS \matA$ is the matrix obtained by keeping only the rows in $\matA$ whose index {\em appears} in $T$.
A symmetric matrix $\matA$ is positive semi-definite (PSD), denoted by $0 \preceq \matA$, if $\x\transp \matA \x \geq 0$ for every vector $\x$.
For any two symmetric matrices $\matX$ and $\matY$ of the same size, $\matX \preceq \matY$ denotes
that $\matY - \matX$ is a PSD matrix.

We denote the  {\em compact} (or {\em thin}) $\mbox{SVD}$ of a matrix $\matA \in \R^{m \times n}$ of rank $p$ by $\matA = \matU_{\matA} \matSig_\matA \matV_\matA\transp$, with $\matU_{\matA} \in \R^{m \times p}$, $\matSig_\matA \in \R^{p \times p}$, and $\matV_\matA\transp \in \R^{p \times n}$. The Moore-Penrose pseudo-inverse of $\matA$ is
$\pinv{\matA} = \matV_\matA \matSig_\matA^{-1} \matU_\matA\transp \in \R^{n \times m}$. We denote the singular values of $\matA$ by $\sigma_1(\matA) \geq \sigma_2(\matA) \geq \dots \geq \sigma_{p}(\matA)$.

%
%

\subsection{The Bj{\"o}rck and Golub Algorithm}

There are quite a few algorithms to compute the canonical correlations~\cite{GZ95}. One of the most popular methods is due to Bj{\"o}rck and Golub~\cite{BG73}. It is based on the following observation.
\begin{theorem}[\cite{BG73}]\label{thm:bjork-golub}
Assume that the columns of $\matQ \in \R^{m \times p}$ ($m \geq p$) and $\matW \in \R^{m \times q}$ ($m \geq q$) form an
orthonormal basis for the range of $\matA$ and $\matB$ (respectively). Let $\matQ\transp \matW=\matU \matSig \matV\transp$ be its compact SVD. The diagonal elements of $\matSig$ are the canonical correlations of $(\matA, \matB)$. The canonical vectors are given by the first $q$ columns of $\matQ \matU$ (for $\matA$) and $\matW \matV$ (for $\matB$).
\end{theorem}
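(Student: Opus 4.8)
The plan is to change variables so that the recursive optimization in Definition~\ref{def} turns into the standard ``greedy'' (deflation) characterization of the singular values of $\matM:=\matQ\transp\matW\in\R^{p\times q}$. First I would reduce to the orthonormal bases: since the columns of $\matQ$ form an orthonormal basis of $\mathcal R(\matA)$, every nonzero vector of the form $\matA\x$ can be written uniquely as $\matQ\a$ with $\a\neq\bm{0}$, and conversely every such $\matQ\a$ is of the form $\matA\x$; the same holds for $\matB$, $\matW$, $\b$. From $\matQ\transp\matQ=\matI_p$ and $\matW\transp\matW=\matI_q$ we get $\TNorm{\matQ\a}=\TNorm{\a}$, $(\matQ\a)\transp(\matQ\a')=\a\transp\a'$, and $(\matQ\a)\transp(\matW\b)=\a\transp\matM\b$. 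Hence $\sigma(\matA\x,\matB\y)=|\a\transp\matM\b|/(\TNorm{\a}\,\TNorm{\b})$ whenever $\matA\x=\matQ\a$ and $\matB\y=\matW\b$, and the orthogonality constraint ``$\matA\x\perp\matA\x_j$'' is equivalent to ``$\a\perp\a_j$'' (and similarly on the $\matB$ side). Thus the recursion defining $\sigma_1(\matA,\matB)\ge\cdots\ge\sigma_q(\matA,\matB)$ is exactly the recursion that, at step $i$, maximizes $|\a\transp\matM\b|/(\TNorm{\a}\,\TNorm{\b})$ over nonzero $\a$ orthogonal to all previously chosen left maximizers and nonzero $\b$ orthogonal to all previously chosen right maximizers.

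Next I would identify this recursion with the singular values. Write the SVD $\matM=\matU\matSig\matV\transp$ and let $\u_i$, $\v_i$ be the columns of $\matU$, $\matV$. Taking $(\a_i,\b_i)=(\u_i,\v_i)$ at every step is admissible (the $\u_i$ are mutually orthonormal, as are the $\v_i$) and produces value $|\u_i\transp\matM\v_i|=\sigma_i(\matM)$, giving the lower bound $\sigma_i(\matA,\matB)\ge\sigma_i(\matM)$. For the matching upper bound, any admissible $\a$ at step $i$ is orthogonal to $\u_1,\dots,\u_{i-1}$, so $\matM\transp\a=\sum_{k=i}^{q}(\u_k\transp\a)\,\sigma_k(\matM)\,\v_k$ and therefore $\TNorm{\matM\transp\a}\le\sigma_i(\matM)\,\TNorm{\a}$ (Bessel); combined with $|\a\transp\matM\b|=|(\matM\transp\a)\transp\b|\le\TNorm{\matM\transp\a}\,\TNorm{\b}$ this yields $\sigma_i(\matA,\matB)\le\sigma_i(\matM)$. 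Hence $\sigma_i(\matA,\matB)=\sigma_i(\matM)$, the $i$-th diagonal entry of $\matSig$. Finally, reading off the canonical vectors: a maximizer $\x_i$ from the previous step satisfies $\matA\x_i=\matQ\u_i$, so the $i$-th canonical vector for $\matA$ is $\matA\x_i/\TNorm{\matA\x_i}=\matQ\u_i/\TNorm{\u_i}=\matQ\u_i$, i.e.\ the $i$-th column of $\matQ\matU$; by symmetry $\matB\y_i/\TNorm{\matB\y_i}$ is the $i$-th column of $\matW\matV$. Letting $i$ run over $1,\dots,q$ completes the proof.

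The main obstacle is that the maximizers in Definition~\ref{def} are not unique: when $\matM$ has a repeated singular value the maximizer at a given step is pinned down only up to a rotation inside the corresponding singular subspace, so one must check that the value obtained at step $i$ does not depend on the particular maximizers chosen at steps $1,\dots,i-1$. I would settle this with a short deflation argument --- any maximizer at an earlier step is necessarily a pair of singular vectors for that step's correlation, and after re-choosing an orthonormal singular basis of each repeated-value subspace so that it contains the selected maximizers, the computation above goes through unchanged --- or, alternatively, invoke the known equivalence of the various definitions of canonical correlations~\cite{GZ95}. (One harmless technicality: if $\rank(\matM)<q$ one should read $\matSig$ as a $q\times q$ diagonal matrix with trailing zeros and extend $\matU,\matV$ to $q$ orthonormal columns; this only affects canonical vectors associated with a zero canonical correlation, for which the choice is non-unique in any case.)
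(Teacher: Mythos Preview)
The paper does not prove this theorem; it is quoted as a known result from Bj{\"o}rck and Golub~\cite{BG73} and immediately put to use (via Corollary~\ref{cor:bjork-golub}) without any argument. So there is nothing in the paper to compare against.

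Your proposal is a correct self-contained proof. The change of variables $\matA\x\leftrightarrow\matQ\a$, $\matB\y\leftrightarrow\matW\b$ is exactly the right reduction, and once you are working with $\matM=\matQ\transp\matW$ the greedy lower/upper bound argument is the standard variational characterization of singular values. You also correctly identify the one genuine subtlety: Definition~\ref{def} is only well-posed once one checks that the value attained at step $i$ is independent of which maximizers were selected at steps $1,\dots,i-1$. Your suggested deflation fix---any step-$j$ maximizer must already be a left/right singular-vector pair for $\sigma_j(\matM)$, so one can rotate within each singular subspace to absorb the previously chosen maximizers and rerun the argument---closes this cleanly. The parenthetical about padding $\matSig$ with trailing zeros when $\rank(\matM)<q$ is also apt, since the statement uses the \emph{compact} SVD yet asserts the existence of $q$ canonical vectors.
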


Theorem~\ref{thm:bjork-golub} implies that once we have a pair of matrices $\matQ$ and $\matW$ with orthonormal columns whose column space spans the same column space of $\matA$ and $\matB$, respectively, then all we need is to compute the singular value decomposition of $\matQ\transp \matW$.  Bj{\"o}rck and Golub suggest the use of QR decompositions, but  $\matU_\matA$ and $\matU_\matB$ will serve as well. Both options require $ O \left(m\left(n^2 + \ell^2\right) \right)$ time.
%
\begin{corollary}\label{cor:bjork-golub}
Frame Definition~\ref{def}. Let $\matU\transp_\matA \matU_\matB=\matU \matSig \matV\transp$ be the compact SVD of $\matU\transp_\matA \matU_\matB$. Then, for $i\in[q]$:
$
\sigma_i(\matA, \matB) = \matSig_{ii}$. The canonical weights are given by the columns of $\matV_\matA \matSig^{-1}_{\matA} \matU$ (for $\matA$) and $\matV_\matB \matSig^{-1}_{\matB} \matV$ (for $\matB$).
\end{corollary}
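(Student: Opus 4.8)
The plan is to derive this as a direct instantiation of Theorem~\ref{thm:bjork-golub}, choosing the orthonormal bases to be exactly the left singular vector matrices $\matU_\matA$ and $\matU_\matB$ from the compact SVDs $\matA = \matU_\matA\matSig_\matA\matV_\matA\transp$ and $\matB = \matU_\matB\matSig_\matB\matV_\matB\transp$. First I would observe that, since $\matSig_\matA$ is invertible (it is $p\times p$ with $p=\rank(\matA)$) and $\matV_\matA$ has orthonormal columns, the columns of $\matU_\matA$ indeed form an orthonormal basis for $\mathcal{R}(\matA)$, and similarly $\matU_\matB$ for $\mathcal{R}(\matB)$; this is precisely the hypothesis needed to apply Theorem~\ref{thm:bjork-golub} with $\matQ = \matU_\matA$ and $\matW = \matU_\matB$. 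Then, letting $\matU_\matA\transp\matU_\matB = \matU\matSig\matV\transp$ be the compact SVD, Theorem~\ref{thm:bjork-golub} immediately yields $\sigma_i(\matA,\matB) = \matSig_{ii}$ for $i\in[q]$, which is the first claim.

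For the canonical weights, I would recall from Definition~\ref{def} that the canonical weights for $\matA$ are the vectors $\x_i/\TNorm{\matA\x_i}$, and that the associated canonical vectors $\matA\x_i/\TNorm{\matA\x_i}$ are, by Theorem~\ref{thm:bjork-golub}, the first $q$ columns of $\matQ\matU = \matU_\matA\matU$. So if $\w_i$ denotes the $i$-th column of $\matV_\matA\matSig_\matA^{-1}\matU$, the task reduces to checking that $\matA\w_i$ equals the $i$-th column of $\matU_\matA\matU$ (a unit vector), so that $\w_i$ is a valid choice of $\x_i/\TNorm{\matA\x_i}$. This is the short computation
\[
\matA\,\matV_\matA\matSig_\matA^{-1}\matU \;=\; \matU_\matA\matSig_\matA\matV_\matA\transp\,\matV_\matA\matSig_\matA^{-1}\matU \;=\; \matU_\matA\matU,
\]
using $\matV_\matA\transp\matV_\matA = \matI_p$; the columns of $\matU_\matA\matU$ are unit vectors since both $\matU_\matA$ and $\matU$ have orthonormal columns. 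The argument for $\matB$ with $\matV_\matB\matSig_\matB^{-1}\matV$ is identical, using $\matB\,\matV_\matB\matSig_\matB^{-1}\matV = \matU_\matB\matV$ and that the canonical vectors for $\matB$ are the first $q$ columns of $\matW\matV = \matU_\matB\matV$.

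The only point requiring a little care — and the one I would flag as the main (very mild) obstacle — is the bookkeeping around dimensions and the non-uniqueness noted in Definition~\ref{def}: $\matU_\matA\transp\matU_\matB$ is $p\times q$ with $p\ge q$, so its compact SVD has $\matU\in\R^{p\times q}$, $\matSig\in\R^{q\times q}$, $\matV\in\R^{q\times q}$, and one must confirm that $\matV_\matA\matSig_\matA^{-1}\matU$ has exactly $q$ columns and that these furnish a full set of $q$ canonical weights. Because the canonical weights and vectors are not uniquely defined, the statement should be read as asserting that the displayed columns constitute \emph{a} valid set, which is exactly what the computation above establishes; no contradiction with non-uniqueness arises. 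Everything else is a direct substitution into Theorem~\ref{thm:bjork-golub}.
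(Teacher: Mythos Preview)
Your proposal is correct and matches the paper's intent: the corollary is stated without proof in the paper, immediately after Theorem~\ref{thm:bjork-golub}, precisely because it is the direct instantiation $\matQ=\matU_\matA$, $\matW=\matU_\matB$ that you carry out. Your verification $\matA\,\matV_\matA\matSig_\matA^{-1}\matU=\matU_\matA\matU$ (and the analogous one for $\matB$) is exactly the missing step the reader is expected to fill in, and your remark on non-uniqueness and dimension bookkeeping is appropriate.
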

\subsection{Matrix Coherence and Sampling from an Orthonormal Matrix}
Matrix coherence is a fundamental concept in the analysis of matrix sampling algorithms (e.g.~\cite{TR10,IW12}).
There a quite a few similar but different ways to define the coherence. In
this article we use the following definition. Given a matrix $\matA$ with $m$ rows,
the \emph{coherence} of $\matA$ is defined as
$$\mu(\matA)=\max_{i\in{[m]}}\TNormS{ \e_i^\top \matU_{\matA}  },$$
where $\e_i$ is the $i$-th standard basis (column) vector of $\R^m$.
Note that the coherence of $\matA$ is a property of the column space of $\matA$, and does not depend on the actual choice of $\matA$. Therefore, if $\mathcal{R}(\matA) = \mathcal{R}(\matB)$ then $\mu(\matA) = \mu(\matB)$. Furthermore, it is easy to verify that if $\mathcal{R}(\matA) \subseteq \mathcal{R}(\matB)$ then $\mu(\matA) \leq \mu(\matB)$. Finally, we mention that for every matrix $\matA$ with $m$ rows:
$\rank(\matA)/m \leq \mu(\matA) \leq 1.$

We focus on tall-and-thin matrices, i.e. matrices with (much) more rows than columns.
We are interested in  dimensionality reduction techniques that (approximately) preserve
the singular values of the original matrix. The simplest idea to do dimensionality reduction
in  tall-and-thin matrices is uniform sampling of the rows of the matrix.
Coherence measures how susceptible the matrix is to uniform sampling; the following lemma shows that
not too many samples are required when the coherence is small. The bound is almost tight~\cite[Section 3.3]{Tro11}.

\begin{lemma}[Sampling from Orthonormal Matrix, Corollary to Lemma~3.4 from~\cite{Tro11}]
\label{lemma:sampling-ortho}
Let $\matQ \in \R^{m \times d}$ have orthonormal columns. Let $0 < \epsilon < 1$ and $0 < \delta < 1$.
Let $r$ be an integer such that
\[
6 \epsilon^{-2} m  \mu(\matQ) \log (3d/\delta) \leq r \leq m \,.
\]
Let $T$ be a random subset of $[m]$ of cardinality $r$, drawn from a uniform distribution over such subsets,
and let $\matS$ be the $|T|\times m$ sampling matrix corresponding to $T$ rescaled by $\sqrt{m/r}$.
Then, with probability of at least $1-\delta$, for $i\in[d]$:
$  \sqrt{1-\epsilon} \le \sigma_i(\matS \matQ) \le  \sqrt{1+\epsilon}.$
\end{lemma}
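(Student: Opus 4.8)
The plan is to reduce the claim to a matrix Chernoff bound for a sum of random positive semidefinite matrices drawn by uniform sampling without replacement; this is essentially the content of Lemma~3.4 of~\cite{Tro11}, so the work is in setting up the reduction and tracking constants. First I would pass from singular values of $\matS\matQ$ to eigenvalues of the Gram matrix. Since $\matS$ is the sampling matrix of $T$ rescaled by $\sqrt{m/r}$, we have $\matS\transp\matS = (m/r)\sum_{i\in T}\e_i\e_i\transp$, hence
\[
\matM \;:=\; (\matS\matQ)\transp(\matS\matQ) \;=\; \matQ\transp\matS\transp\matS\matQ \;=\; \frac{m}{r}\sum_{i\in T}\y_i\y_i\transp,
\qquad \y_i := \matQ\transp\e_i \in \R^d .
\]
Because $\matQ$ has orthonormal columns, $\matU_\matQ = \matQ$ and $\sum_{i=1}^{m}\y_i\y_i\transp = \matQ\transp\matQ = \matI_d$; and since $T$ is a uniformly random subset of size $r$, each index belongs to $T$ with probability $r/m$, so $\Expect{\matM} = \matI_d$. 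As $\matM$ is PSD and $\sigma_i(\matS\matQ) = \sqrt{\lambda_i(\matM)}$, it suffices to show that with probability at least $1-\delta$ every eigenvalue of $\matM$ lies in $[1-\epsilon, 1+\epsilon]$.

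Next I would control the summands. Each $(m/r)\,\y_i\y_i\transp$ is PSD with spectral norm $(m/r)\TNormS{\y_i} = (m/r)\TNormS{\e_i\transp\matQ} \le (m/r)\,\mu(\matQ)$, by the definition of coherence. Thus $\matM$ is a sum of $r$ of the $m$ fixed rank-one matrices $\{(m/r)\y_i\y_i\transp\}_{i=1}^m$, chosen uniformly without replacement, with mean $\matI_d$ and per-term spectral bound $(m/r)\mu(\matQ)$. Feeding these quantities into the matrix Chernoff inequality for sampling without replacement (\cite[Lemma~3.4]{Tro11}) bounds both the upper tail $\Probab{\lambdamax{\matM} \ge 1+\epsilon}$ and the lower tail $\Probab{\lambda_{\min}(\matM) \le 1-\epsilon}$ by $d$ times a factor of the form $e^{-c\,\epsilon^2 r/(m\mu(\matQ))}$ with an absolute constant $c$ (the constants for the two tails differ, and the lower tail genuinely needs $\epsilon<1$). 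The hypothesis $r \ge 6\,\epsilon^{-2}m\,\mu(\matQ)\log(3d/\delta)$ is chosen so that the sum of these two probabilities is at most $\delta$; a union bound over the two events then yields $1-\epsilon \le \lambda_{\min}(\matM) \le \lambdamax{\matM} \le 1+\epsilon$ with probability at least $1-\delta$, and taking square roots gives the stated bounds on $\sigma_i(\matS\matQ)$.

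The main obstacle is constant bookkeeping, not ideas: the upper and lower matrix-Chernoff exponents are not symmetric, and the sampling is without replacement rather than with replacement, which one handles via the standard domination of the matrix Laplace transform of a without-replacement sum by that of the corresponding with-replacement sum. All of the quantitative content is already packaged in~\cite[Lemma~3.4]{Tro11}, so in the final write-up this lemma follows by instantiating that result with the uniform row-sampling distribution described above and simplifying the resulting tail bound to the clean form stated here.
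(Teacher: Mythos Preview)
Your proposal is correct and takes essentially the same approach as the paper: both reduce directly to Lemma~3.4 of~\cite{Tro11}. The paper's proof is in fact even terser than yours---it simply instantiates the parameters of that lemma ($\ell = \alpha M \log(k/\delta)$, $\alpha = 6/\epsilon^2$, $\delta_{\text{tropp}} = \eta = \epsilon$) without unpacking the Gram-matrix reduction or the matrix-Chernoff mechanics, since Tropp's Lemma~3.4 is already stated in terms of row sampling from an orthonormal matrix rather than as a raw matrix Chernoff bound.
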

\begin{proof}
Apply Lemma 3.4 from~\cite{Tro11} with the following choice of parameters:
$\ell = \alpha M \log(k/\delta),$
$\alpha = 6/\epsilon^2,$ and
$\delta_{tropp} = \eta = \epsilon$.
Here, $\ell$, $\alpha$, $M$, $k$, $\eta$ are the parameters of  Lemma 3.4 from~\cite{Tro11};
also $\delta_{tropp}$ plays the role of $\delta$, an error parameter, of  Lemma 3.4 from~\cite{Tro11}.
$\epsilon$ and $\delta$ are from our Lemma.
\end{proof}

In the above lemma, $T$ is obtained by sampling coordinates from $[m]$ \emph{without} replacement. Similar results can be shown for sampling with replacement, or using Bernoulli variables~\cite{IW12}.

\subsection{Randomized Fast Unitary Transforms}\label{sec:wht}

Matrices with high coherence pose a problem for algorithms based on uniform row sampling. One way to circumvent this problem is to use a coherence-reducing transformation. It is important that this transformation will not change the solution to the problem.

One popular coherence-reducing method is applying a randomized fast unitary transform.  The crucial observation is that many problems can be safely transformed using unitary matrices. This is also true for CCA: $\sigma_i(\matQ \matA, \matQ \matB) = \sigma_i(\matA, \matB)$ if $\matQ$ is unitary (i.e., $\matQ\transp \matQ$ is equal to the identity matrix). If the unitary matrix is chosen carefully, it can reduce the coherence. However, any fixed unitary matrix will fail to reduce the coherence on some matrices.

The solution is to couple a fixed unitary transform with some randomization. More specifically, the construction is 
$ \calF=  \matF  \matD\,$,
where $\matD$ is a random diagonal matrix of size $m$ whose entries are independent random signs,
and $\matF$ is some fixed unitary matrix. An important quantity is 
the maximum squared element in $\matF$ (we denote this quantity with $\eta$): for any fixed $\matX \in \R^{n\times m}$ it can be shown that with constant probability, 
$\mu(\calF \matX) \leq O\left( \eta n  \log(m) \right)$~\cite{AMT10}. So, it is important for $\eta$ to be small. It is also necessary that $\matF$ can be applied quickly to $\matX$. FFT and FFT-like transforms have both these properties, and work well in practice due to the availability of high quality implementations. 

Another fast unitary transform that has the above two properties is the Walsh-Hadamard Transform (WHT), which is defined as follows.
Fix an integer $m = 2^h$, for $h = 1,2,3, \ldots$. The (non-normalized) $m \times m$ matrix of the Walsh-Hadamard Transform (WHT) is defined recursively as,
$$ \matH_m = \left[
\begin{array}{cc}
  \matH_{m/2} &  \matH_{m/2} \\
  \matH_{m/2} & -\matH_{m/2}
\end{array}\right],
\ \mbox{with} \
\matH_2 = \left[
\begin{array}{cc}
  +1 & +1 \\
  +1 & -1
\end{array}\right].
$$
The $m \times m$ normalized matrix of the Walsh-Hadamard transform is $\matH = m^{-\frac{1}{2}} \matH_m$.
%

The recursive nature of the WHT allows us to compute $\matH \matX$ for an $m \times n$ matrix $\matX$ in time $O(m n \log(m))$.
However, in our case we are interested in $\matS \matH \matX$ where $\matS$ is a $r$-row sampling matrix. To compute $\matS \matH \matX$ only $O(m n \log(r))$ operations
suffice~\cite[Theorem 2.1]{AL08}.

Combining the WHT with a random diagonal sign matrix is called the Randomized Walsh-Hadamard Transform (RHT) 
\begin{definition}[Randomized Walsh-Hadamard Transform (RHT)]
\label{def:rht}
Let $m = 2^h$ for some positive integer $h$. A \emph{Randomized Walsh-Hadamard Transform (RHT)} is an $m \times m$ matrix of the form $$ \matTh =  \matH  \matD\,$$
where $\matD$ is a random diagonal matrix of size $m$ whose entries are independent random signs,
and $\matH$ is a normalized Walsh-Hadamard matrix of size $m$.
\end{definition}

For concreteness, our analysis uses the RHT since it has the tightest coherence reducing bound. 
Our results generalize to other randomized fast unitary transforms, perhaps with some slightly different bounds. 

\begin{lemma} [RHT bounds Coherence, Lemma 3.3 from~\cite{Tro11}]
\label{lem:rht-reduce}
Let $\matA$ be an $m\times n$  ($m \ge n$, $m=2^h$ for some positive integer $h$) matrix, and let $\matTh$ be an RHT. Then, with probability of at least $1-\delta$,
\[\mu(\matTh \matA) \leq \frac1{m}\left( \sqrt{n} + \sqrt{8 \log(m / \delta)} \right)^2\,.\]
\end{lemma}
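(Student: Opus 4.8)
The plan is to exploit the fact that coherence is a property of the column space alone, so we may replace $\matA$ by any orthonormal basis $\matU_{\matA}$ of $\mathcal{R}(\matA)$. Since $\matTh$ is unitary, $\matTh\matU_{\matA}$ has orthonormal columns and spans $\mathcal{R}(\matTh\matA)$, so $\mu(\matTh\matA) = \max_{i\in[m]}\TNormS{\e_i^\top\matTh\matU_{\matA}}$. The whole statement then reduces to a tail bound for a single $\TNorm{\e_i^\top\matH\matD\matU_{\matA}}$, followed by a union bound over the $m$ rows. The appearance of the constant $8$ and the form $\bigl(\sqrt n+\sqrt{8\log(m/\delta)}\bigr)^2$ strongly suggests a mean-plus-deviation argument with a subgaussian concentration inequality, and this is indeed the route I would take.

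Fix a row index $i$ and let $\h_i^\top$ denote the $i$-th row of $\matH$, so that every entry of $\h_i$ equals $\pm m^{-1/2}$. Then $\e_i^\top\matTh\matU_{\matA} = (\matD\h_i)^\top\matU_{\matA}$, and the key elementary observation is that the entrywise products of the fixed signs of $\matH$ with the independent random signs on the diagonal of $\matD$ are again i.i.d. Rademacher; hence $\matD\h_i = m^{-1/2}\gammab$ for a Rademacher vector $\gammab\in\{\pm1\}^m$ (whose distribution does not depend on $i$). Using $\TNorm{\matU_{\matA}\y}=\TNorm{\y}$ for $\y\in\R^{\rank(\matA)}$, this gives
\[
\TNormS{\e_i^\top\matTh\matU_{\matA}} \;=\; \tfrac1m\,\TNormS{\matU_{\matA}^\top\gammab} \;=\; \tfrac1m\,\TNormS{\matP\gammab},
\]
where $\matP=\matU_{\matA}\matU_{\matA}^\top$ is the orthogonal projector onto $\mathcal{R}(\matA)$. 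So it suffices to control $\TNorm{\matP\gammab}$ for a fixed projector $\matP$ of rank at most $n$ and a Rademacher vector $\gammab$.

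The main step is concentration. The function $f(\gammab)=\TNorm{\matP\gammab}$ is convex (a norm composed with a linear map) and $1$-Lipschitz with respect to the Euclidean norm since $\TNorm{\matP}=1$, so by Talagrand's concentration inequality for convex Lipschitz functions of independent bounded random variables, $\Probab{f(\gammab)\ge\Expect{f(\gammab)}+t}\le\expe^{-t^2/8}$. The mean is bounded via Jensen's inequality: $\Expect{f(\gammab)}\le\sqrt{\Expect{\TNormS{\matP\gammab}}}=\sqrt{\Expect{\trace(\matP\gammab\gammab^\top)}}=\sqrt{\trace(\matP)}=\sqrt{\rank(\matA)}\le\sqrt n$. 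Taking $t=\sqrt{8\log(m/\delta)}$ makes the tail probability at most $\delta/m$, so with probability at least $1-\delta/m$ we have $\TNormS{\e_i^\top\matTh\matU_{\matA}}\le\frac1m\bigl(\sqrt n+\sqrt{8\log(m/\delta)}\bigr)^2$; a union bound over $i\in[m]$ then yields the claimed bound on $\mu(\matTh\matA)=\max_i\TNormS{\e_i^\top\matTh\matU_{\matA}}$ with probability at least $1-\delta$.

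I expect the only genuinely nontrivial ingredient to be the convex-Lipschitz concentration inequality with the precise constant $8$ in the exponent; everything else is bookkeeping. The one subtlety worth spelling out carefully is why the fixed signs of $\matH$ can be absorbed into the randomness of $\matD$, which is what replaces $\matD\h_i$ by a clean scaled Rademacher vector, makes the per-row bound uniform in $i$, and shows the argument depends on $\matH$ only through the flatness of its entries (here $\eta=1/m$). An alternative that avoids Talagrand's inequality is to expand $\TNormS{\matP\gammab}$ as a Rademacher quadratic form and invoke a Hanson–Wright type bound, at the cost of somewhat worse constants.
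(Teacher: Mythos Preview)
The paper does not give its own proof of this lemma; it is quoted verbatim from Tropp~\cite{Tro11} as Lemma~3.3 and left unproved. Your argument is essentially the proof that appears in that reference: pass to an orthonormal basis $\matU_{\matA}$, observe that each row of $\matTh\matU_{\matA}$ is $m^{-1/2}$ times a Rademacher sum of the rows of $\matU_{\matA}$ (because the fixed signs of $\matH$ are absorbed by the random signs of $\matD$), bound the mean of $\TNorm{\matU_{\matA}^\top\gammab}$ by $\sqrt{\rank(\matA)}\le\sqrt{n}$ via Jensen, invoke a convex--Lipschitz concentration inequality for Rademacher vectors to obtain the $\sqrt{8\log(m/\delta)}$ deviation term, and finish with a union bound over the $m$ rows. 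The argument is correct and matches the cited source.
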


\section{Perturbation Bounds for Matrix Products}\label{sec:pert}
This section states three new technical lemmas which analyze the perturbation of the singular values
of the product of a pair of matrices after dimensionality reduction.
These lemmas are essential for our analysis in subsequent sections, but they might be of independent interest as well.
We first state three well known results.
%
%
\begin{lemma}[\cite{EI95} Theorem 3.3]\label{lem:pert1}
Let $\matPsi \in \R^{p \times q}$ and $ \matPhi =  \matD_{\matL} \matPsi \matD_{\matR} $
with $\matD_{\matL} \in \R^{p \times p}$ and $\matD_{\matR} \in \R^{q \times q}$ being non-singular matrices. 
Let $$\gamma = \max\{  \TNorm{ \matD_{\matL} \matD_{\matL}\transp - \matI_p },  \TNorm{ \matD_{\matR}\transp \matD_{\matR} - \matI_q } \}.$$
Then, for all $i=1,\ldots,\rank(\matPsi):$
$  |  \sigma_i\left( \matPhi \right) -  \sigma_i\left( \matPsi \right)   |  \le \gamma \cdot  \sigma_i\left( \matPsi \right). $
\end{lemma}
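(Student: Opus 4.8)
The plan is to obtain two one-sided multiplicative bounds, $\sigma_i(\matPhi) \le (1+\gamma)\,\sigma_i(\matPsi)$ and $\sigma_i(\matPhi) \ge (1-\gamma)\,\sigma_i(\matPsi)$, and then read off $|\sigma_i(\matPhi) - \sigma_i(\matPsi)| \le \gamma\,\sigma_i(\matPsi)$. If $\gamma \ge 1$ the lower bound is vacuous (its right-hand side is non-positive while $\sigma_i(\matPhi)\ge 0$), so we may assume $\gamma < 1$.

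First I would record the elementary one-sided submultiplicativity of singular values: for conformable $\matM,\matN$ and every valid index $i$, $\sigma_i(\matM\matN) \le \TNorm{\matM}\,\sigma_i(\matN)$ and $\sigma_i(\matM\matN) \le \TNorm{\matN}\,\sigma_i(\matM)$. This comes straight from the Courant--Fischer characterization $\sigma_i(\matX) = \min_{\dim\mathcal{W}=n-i+1}\ \max_{0\ne \x\in\mathcal{W}} \TNorm{\matX\x}/\TNorm{\x}$: choosing $\mathcal{W}$ optimal for $\matN$ and using $\TNorm{\matM\matN\x}\le\TNorm{\matM}\TNorm{\matN\x}$ gives the first inequality, and the second follows by transposition since $\sigma_i(\matM\matN)=\sigma_i((\matM\matN)\transp)$. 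Applying this twice to $\matPhi = \matD_{\matL}(\matPsi\matD_{\matR})$ yields $\sigma_i(\matPhi) \le \TNorm{\matD_{\matL}}\,\TNorm{\matD_{\matR}}\,\sigma_i(\matPsi)$. For the reverse inequality I would use nonsingularity to write $\matPsi = \matD_{\matL}^{-1}\matPhi\matD_{\matR}^{-1}$ and apply the same estimate, obtaining $\sigma_i(\matPsi) \le \TNorm{\matD_{\matL}^{-1}}\,\TNorm{\matD_{\matR}^{-1}}\,\sigma_i(\matPhi)$, i.e. $\sigma_i(\matPhi) \ge \sigma_{\min}(\matD_{\matL})\,\sigma_{\min}(\matD_{\matR})\,\sigma_i(\matPsi)$.

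Next I would translate the hypothesis on $\gamma$ into scalar bounds. Since $\TNormS{\matD_{\matL}} = \lambda_{\max}(\matD_{\matL}\matD_{\matL}\transp)$ and $\matD_{\matL}\matD_{\matL}\transp = \matI_p + (\matD_{\matL}\matD_{\matL}\transp - \matI_p)$, Weyl's inequality gives $\lambda_{\max}(\matD_{\matL}\matD_{\matL}\transp) \le 1 + \gamma$ and $\lambda_{\min}(\matD_{\matL}\matD_{\matL}\transp) \ge 1 - \gamma$, hence $\TNorm{\matD_{\matL}} \le \sqrt{1+\gamma}$ and $\sigma_{\min}(\matD_{\matL}) \ge \sqrt{1-\gamma}$. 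The identical argument applied to $\matD_{\matR}\transp\matD_{\matR}$ gives $\TNorm{\matD_{\matR}} \le \sqrt{1+\gamma}$ and $\sigma_{\min}(\matD_{\matR}) \ge \sqrt{1-\gamma}$. Substituting these four bounds into the two displayed inequalities of the previous paragraph yields $(1-\gamma)\,\sigma_i(\matPsi) \le \sigma_i(\matPhi) \le (1+\gamma)\,\sigma_i(\matPsi)$, which is exactly the claim. (The restriction $i\le\rank(\matPsi)$ is not actually needed, since for larger $i$ both singular values vanish as $\matD_{\matL},\matD_{\matR}$ are nonsingular, so $\rank(\matPhi)=\rank(\matPsi)$; it is kept only because the later applications use it.)

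There is no deep obstacle here; the only points needing a little care are (i) chaining the one-sided submultiplicativity through a two-sided product of \emph{rectangular} matrices with the correct (transposed vs. non-transposed) norm factor at each step, and (ii) the routine bookkeeping by which $\TNorm{\,\cdot\, - \matI}\le\gamma$ becomes spectral containment in $[1-\gamma,\,1+\gamma]$ via Weyl, together with the observation that the case $\gamma\ge 1$ trivializes the lower bound so that no hypothesis $\gamma<1$ is required.
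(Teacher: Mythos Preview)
The paper does not supply its own proof of this lemma; it simply cites Theorem~3.3 of Eisenstat and Ipsen~\cite{EI95}. Your argument is correct and self-contained: the two-sided product bound $\sigma_{\min}(\matD_{\matL})\sigma_{\min}(\matD_{\matR})\,\sigma_i(\matPsi)\le\sigma_i(\matPhi)\le\TNorm{\matD_{\matL}}\TNorm{\matD_{\matR}}\,\sigma_i(\matPsi)$ together with the spectral containment $\sigma_j(\matD_{\matL})^2,\sigma_j(\matD_{\matR})^2\in[1-\gamma,1+\gamma]$ (from Weyl applied to $\matD_{\matL}\matD_{\matL}\transp-\matI_p$ and $\matD_{\matR}\transp\matD_{\matR}-\matI_q$) gives exactly $(1-\gamma)\sigma_i(\matPsi)\le\sigma_i(\matPhi)\le(1+\gamma)\sigma_i(\matPsi)$, and you handle the degenerate case $\gamma\ge 1$ cleanly. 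This is essentially the standard proof of the cited result, so nothing further is needed.
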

\begin{lemma}[Weyl's inequality for singular values; \cite{HJ85} Corollary 7.3.8]\label{lem:pert2}
Let $\matPhi, \matPsi \in \R^{m \times n}$. Then, for all $i=1,\ldots, \min( m,n):$
$$| \sigma_i\left(\matPhi\right)- \sigma_i\left(\matPsi\right)  |  \le \TNorm{\matPhi - \matPsi}.$$
\end{lemma}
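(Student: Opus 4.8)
The plan is to deduce the bound from the Courant--Fischer (min--max) variational characterization of the singular values, which turns the whole statement into a one-line application of the triangle inequality. Recall that for every $\matM \in \R^{m\times n}$ and every $i \in \{1,\dots,\min(m,n)\}$,
\[
\sigma_i(\matM) \;=\; \max_{\substack{\mathcal{S}\subseteq\R^n \\ \dim\mathcal{S}=i}}\;\; \min_{\substack{\x\in\mathcal{S} \\ \TNorm{\x}=1}} \TNorm{\matM\x}.
\]
If one does not wish to take this for granted, it follows in two lines from the compact SVD $\matM = \matU_\matM \matSig_\matM \matV_\matM\transp$: the span of the first $i$ right singular vectors certifies the inequality ``$\ge \sigma_i(\matM)$'', while any $i$-dimensional subspace $\mathcal{S}$ intersects the $(n-i+1)$-dimensional span of the last $n-i+1$ right singular vectors nontrivially, which certifies ``$\le \sigma_i(\matM)$''.

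First I would set $\matE = \matPhi - \matPsi$ and note that for every unit vector $\x$, the triangle inequality and the definition of the spectral norm give $\TNorm{\matPhi\x} \le \TNorm{\matPsi\x} + \TNorm{\matE\x} \le \TNorm{\matPsi\x} + \TNorm{\matE}$. Then I would fix an $i$-dimensional subspace $\mathcal{S}^\star$ attaining the outer maximum in the characterization for $\matPhi$, so that $\sigma_i(\matPhi) = \min_{\x\in\mathcal{S}^\star,\,\TNorm{\x}=1}\TNorm{\matPhi\x}$, and push the pointwise bound through the minimum:
\[
\sigma_i(\matPhi) \;\le\; \min_{\substack{\x\in\mathcal{S}^\star \\ \TNorm{\x}=1}}\bigl(\TNorm{\matPsi\x} + \TNorm{\matE}\bigr) \;=\; \TNorm{\matE} + \min_{\substack{\x\in\mathcal{S}^\star \\ \TNorm{\x}=1}}\TNorm{\matPsi\x} \;\le\; \TNorm{\matE} + \sigma_i(\matPsi),
\]
the last step because $\mathcal{S}^\star$ is one admissible competitor in the maximum defining $\sigma_i(\matPsi)$. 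This gives $\sigma_i(\matPhi) - \sigma_i(\matPsi) \le \TNorm{\matPhi-\matPsi}$; swapping the roles of $\matPhi$ and $\matPsi$ (which does not change $\TNorm{\matPhi-\matPsi}$) yields the reverse inequality, and the two together give $|\sigma_i(\matPhi) - \sigma_i(\matPsi)| \le \TNorm{\matPhi-\matPsi}$.

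There is no genuinely hard step here; the only points that need a little care are (i) invoking the variational formula with subspaces of the correct dimension and over the index range $i \le \min(m,n)$ (for $i$ exceeding $\rank(\matPhi)$ or $\rank(\matPsi)$ the relevant singular values vanish and the argument still goes through), and (ii) checking that the outer maximum is actually attained, so that a true optimizer $\mathcal{S}^\star$ exists --- this holds since we maximize a continuous function over the compact Grassmannian of $i$-dimensional subspaces, or alternatively one takes an $\varepsilon$-near-optimal $\mathcal{S}$ and lets $\varepsilon\to 0$. A second route, if one prefers to reduce to the Hermitian case, is the Jordan--Wielandt dilation: the eigenvalues of $\left[\begin{smallmatrix}\zeromtx & \matM \\ \matM\transp & \zeromtx\end{smallmatrix}\right]$ are $\pm\sigma_i(\matM)$ together with zeros, and its spectral norm equals $\TNorm{\matM}$, so the claim follows from Weyl's inequality for eigenvalues of Hermitian matrices applied to the dilations of $\matPhi$ and $\matPsi$; I would nonetheless present the direct min--max argument as the cleaner one.
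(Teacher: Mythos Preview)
Your argument is correct: the Courant--Fischer min--max characterization combined with the pointwise triangle inequality gives the bound cleanly, and the symmetry in $\matPhi,\matPsi$ handles the absolute value. The paper does not prove this lemma at all; it simply states it with a citation to Horn and Johnson (Corollary~7.3.8), treating it as a standard tool. So there is nothing to compare against, and your self-contained derivation is a perfectly good proof of the cited result.
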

\begin{lemma}[Conjugating the PSD ordering; Observation 7.7.2 in~\cite{HJ85}]\label{lem:pert3}
Let $\matPhi, \matPsi \in \R^{n \times n}$ be symmetric matrices with $\matPhi \preceq \matPsi$. Then, for every $n \times m$ matrix $\matZ:$
$$\matZ \transp \matPhi \matZ \preceq \matZ\transp \matPsi \matZ.$$
\end{lemma}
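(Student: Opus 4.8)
The plan is to unwind the definition of the Loewner (PSD) ordering and reduce the claim to a trivial change of variables. By definition, $\matPhi \preceq \matPsi$ means that $\matPsi - \matPhi$ is PSD, i.e. $\x\transp(\matPsi - \matPhi)\x \ge 0$ for every $\x \in \R^n$. What we must show is that $\matZ\transp\matPsi\matZ - \matZ\transp\matPhi\matZ$ is PSD. Since $\matPhi$ and $\matPsi$ are symmetric, both $\matZ\transp\matPhi\matZ$ and $\matZ\transp\matPsi\matZ$ are symmetric, hence so is their difference; and by linearity the difference equals $\matZ\transp(\matPsi - \matPhi)\matZ$, so it suffices to show this conjugated matrix is PSD.

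It then remains to check the scalar (quadratic form) criterion. I would fix an arbitrary $\y \in \R^m$ and set $\x := \matZ\y \in \R^n$. Then
\[
\y\transp\big(\matZ\transp(\matPsi - \matPhi)\matZ\big)\y = (\matZ\y)\transp(\matPsi - \matPhi)(\matZ\y) = \x\transp(\matPsi - \matPhi)\x \ge 0,
\]
where the inequality is exactly the hypothesis $\matPhi \preceq \matPsi$ applied to the particular vector $\x$. Since $\y$ was arbitrary, $\matZ\transp(\matPsi - \matPhi)\matZ \succeq 0$, i.e. $\matZ\transp\matPhi\matZ \preceq \matZ\transp\matPsi\matZ$, which is the claim.

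There is essentially no obstacle here: the whole content is the observation that the image of $\matZ$ is contained in $\R^n$, so nonnegativity of the quadratic form of $\matPsi - \matPhi$ over all of $\R^n$ forces nonnegativity over that image. The only points deserving a word of care are (i) recording that conjugation $\matM \mapsto \matZ\transp \matM \matZ$ preserves symmetry, so that the Loewner ordering is even meaningful for the conjugated pair, and (ii) using the scalar characterization $\y\transp\matM\y \ge 0$ for all $\y$ of PSD-ness, rather than an eigenvalue statement, since $\matZ$ need not be square. Neither is substantive, and the argument is a few lines at most.
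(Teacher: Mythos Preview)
Your argument is correct: reducing to the quadratic-form characterization of PSD and substituting $\x=\matZ\y$ is exactly the standard one-line proof. The paper does not supply its own proof of this lemma at all; it simply cites Observation~7.7.2 in Horn and Johnson, so there is nothing further to compare.
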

We now present the new technical lemmas.
\begin{lemma}\label{lem:pert4}
Let $\matA \in \R^{m \times n}$ ($m \geq n$)  and $\matB \in \R^{m \times \ell}$ ($m \geq \ell$). Define $\matC := [\matA ; \matB] \in \R^{m\times(n+\ell)}$, and suppose $\matC$ has rank $\omega$, so $\matU_{\matC}\in\R^{m\times\omega}$.  Let $\matS \in \R^{r \times m}$ be any matrix such that 
$$ \sqrt{1-\epsilon} \leq \sigma_{\omega}\left(\matS \matU_{\matC} \right) \leq \sigma_1\left(\matS \matU_{\matC} \right)  \le \sqrt{1+\epsilon},$$
for some $0 < \epsilon < 1$ . Then, for $i=1,\dots,\min(n,\ell)$,
$$|\sigma_i \left( \matA\transp\matB \right)  - \sigma_i\left( \matA\transp \matS \transp \matS \matB \right)| \le \epsilon\cdot\TNorm{\matA}\cdot\TNorm{\matB}\,.$$
\end{lemma}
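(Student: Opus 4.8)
The plan is to reduce everything to a statement about the single matrix $\matC = [\matA;\matB]$ and its orthonormal basis $\matU_\matC$, so that the hypothesis on $\matS$ can be applied directly. Write the compact SVD $\matC = \matU_\matC \matSig_\matC \matV_\matC\transp$. Since $\matA$ and $\matB$ are column-submatrices of $\matC$, there are fixed matrices $\matF_\matA \in \R^{\omega \times n}$ and $\matF_\matB \in \R^{\omega \times \ell}$ with $\matA = \matU_\matC \matF_\matA$ and $\matB = \matU_\matC \matF_\matB$ (namely $\matF_\matA = \matSig_\matC \matV_\matC\transp$ restricted to the $\matA$-columns, etc.), and crucially $\TNorm{\matF_\matA} = \TNorm{\matA}$, $\TNorm{\matF_\matB} = \TNorm{\matB}$ because $\matU_\matC$ has orthonormal columns. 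Then $\matA\transp\matB = \matF_\matA\transp \matF_\matB$ and $\matA\transp \matS\transp\matS\matB = \matF_\matA\transp (\matU_\matC\transp \matS\transp \matS \matU_\matC) \matF_\matB$. So, setting $\matM := \matU_\matC\transp \matS\transp \matS \matU_\matC \in \R^{\omega\times\omega}$, the two matrices whose singular values we compare are $\matF_\matA\transp \matF_\matB$ and $\matF_\matA\transp \matM \matF_\matB$.

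Next I would control $\matM - \matI_\omega$ in spectral norm. The hypothesis says all singular values of $\matS\matU_\matC$ lie in $[\sqrt{1-\epsilon},\sqrt{1+\epsilon}]$, equivalently all eigenvalues of $\matM$ lie in $[1-\epsilon,1+\epsilon]$; since $\matM$ is symmetric, $\TNorm{\matM - \matI_\omega} \le \epsilon$. Now estimate the perturbation directly:
\[
\TNorm{\matF_\matA\transp \matM \matF_\matB - \matF_\matA\transp \matF_\matB} = \TNorm{\matF_\matA\transp (\matM - \matI_\omega) \matF_\matB} \le \TNorm{\matF_\matA}\cdot \TNorm{\matM - \matI_\omega}\cdot \TNorm{\matF_\matB} \le \epsilon \, \TNorm{\matA}\,\TNorm{\matB}.
\]
Finally, apply Weyl's inequality for singular values (Lemma~\ref{lem:pert2}) to the pair $\matF_\matA\transp \matF_\matB$ and $\matF_\matA\transp \matM \matF_\matB$ (both are $n\times\ell$ matrices), which gives, for every $i \le \min(n,\ell)$,
\[
|\sigma_i(\matA\transp\matB) - \sigma_i(\matA\transp\matS\transp\matS\matB)| = |\sigma_i(\matF_\matA\transp\matF_\matB) - \sigma_i(\matF_\matA\transp\matM\matF_\matB)| \le \epsilon\,\TNorm{\matA}\,\TNorm{\matB},
\]
which is exactly the claim.

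I do not expect a serious obstacle here; the only points that need care are (i) checking that $\matU_\matC$ indeed has orthonormal columns so that $\TNorm{\matF_\matA} = \TNorm{\matA}$ (this is just the compact SVD of $\matC$), and (ii) passing from the singular-value bound on $\matS\matU_\matC$ to the eigenvalue bound on $\matM = (\matS\matU_\matC)\transp(\matS\matU_\matC)$, which is immediate since $\sigma_i(\matS\matU_\matC)^2 = \lambda_i(\matM)$. An alternative to the Weyl step would be to invoke Lemma~\ref{lem:pert1} with a multiplicative splitting $\matM = \matD_\matL\matD_\matL\transp$, but that only yields a relative bound $\epsilon\,\sigma_i(\matA\transp\matB)$, which is weaker than (and does not directly give) the stated additive bound $\epsilon\,\TNorm{\matA}\,\TNorm{\matB}$; so the direct norm estimate followed by Weyl is the cleaner route.
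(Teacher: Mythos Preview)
Your proof is correct and follows the same overall strategy as the paper: bound $\TNorm{\matA\transp\matS\transp\matS\matB - \matA\transp\matB}$ by $\epsilon\,\TNorm{\matA}\,\TNorm{\matB}$ and then apply Weyl's inequality (Lemma~\ref{lem:pert2}).

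The one genuine difference is in the bookkeeping that produces the norm bound. The paper expands $\matA$ and $\matB$ via their \emph{individual} SVDs, obtaining $\TNorm{\matU_\matA\transp\matS\transp\matS\matU_\matB - \matU_\matA\transp\matU_\matB}\cdot\TNorm{\matA}\cdot\TNorm{\matB}$, and then needs a separate variational argument (writing the spectral norm as a supremum over unit vectors and using $\mathcal{R}(\matU_\matA),\mathcal{R}(\matU_\matB)\subseteq\mathcal{R}(\matU_\matC)$) to pass from $\matU_\matA,\matU_\matB$ to $\matU_\matC$ and conclude $\TNorm{\matU_\matC\transp\matS\transp\matS\matU_\matC - \matI_\omega}\le\epsilon$. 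Your route---writing $\matA=\matU_\matC\matF_\matA$, $\matB=\matU_\matC\matF_\matB$ from the outset---makes $\matM=\matU_\matC\transp\matS\transp\matS\matU_\matC$ appear directly and bypasses that variational step entirely. This is a bit cleaner; the paper's version has the minor advantage that the intermediate bound $\TNorm{\matU_\matA\transp\matS\transp\matS\matU_\matB - \matU_\matA\transp\matU_\matB}\le\epsilon$ is isolated and reused verbatim in the proof of Lemma~\ref{lem:pert6}.
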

\begin{proof}
Using Weyl's inequality for the singular values of arbitrary matrices (Lemma~\ref{lem:pert2}) we obtain,
\begin{eqnarray*}
	| \sigma_i \left( \matA\transp\matB  \right)   - \sigma_i\left( \matA\transp \matS \transp \matS \matB \right)|
	 & \leq & \TNorm{  \matA\transp \matS \transp \matS \matB - \matA\transp\matB }\\
     &  =   & \TNorm{  \matV_{\matA}\matSig_{\matA}\left(\matU_{\matA}\transp \matS \transp \matS \matU_{\matB}  - \matU_{\matA}\transp\matU_{\matB} \right)\matSig_{\matB} \matV\transp_{\matB} } \\
     & \leq & \TNorm{  \matU_{\matA}\transp \matS\transp \matS \matU_{\matB} - \matU_{\matA}\transp \matU_{\matB} } \cdot\TNorm{\matA}\cdot\TNorm{\matB}\,.
\end{eqnarray*}
Next, we argue that $
\TNorm{  \matU_{\matA}\transp \matS\transp \matS \matU_{\matB} - \matU_{\matA}\transp \matU_{\matB} } \le \TNorm{\matU_{\matC}\transp \matS\transp \matS \matU_{\matC}   - \matI_{\omega}}$.
Indeed, we now have
\begin{eqnarray*}
	\TNorm{  \matU_{\matA}\transp \matS\transp \matS \matU_{\matB} - \matU_{\matA}\transp \matU_{\matB} } 
	& = & \sup_{\TNorm{\w}=1,\ \TNorm{\z}=1} | \w\transp \matU_{\matA}\transp \matS\transp \matS \matU_{\matB} \z - \w\transp \matU_{\matA}\transp \matU_{\matB}\z | \\
	& = & \sup_{\TNorm{\x}=\TNorm{\y} = 1,\ \x\in{\mathcal{R}(\matU_{\matA})},\ \y\in{\mathcal{R} (\matU_{\matB})} } | \x\transp \matS\transp \matS \y - \x\transp \y | \\
	& \leq & \sup_{\TNorm{\x}=\TNorm{\y} = 1,\ \x\in{\mathcal{R}(\matU_{\matC})},\ \y\in{\mathcal{R}(\matU_{\matB}) }} | \x\transp \matS\transp \matS \y - \x\transp \y | \\
	& \leq & \sup_{\TNorm{\x}=\TNorm{y} = 1,\ \x\in{\mathcal{R}(\matU_{\matC})},\ \y\in{\mathcal{R}(\matU_{\matC})}} | \x\transp \matS\transp \matS \y - \x\transp \y | \\	
	&   =  & \sup_{\TNorm{\w}=1,\ \TNorm{\z}=1} | \w\transp \matU_{\matC}\transp \matS\transp \matS \matU_{\matC} \z  - \w\transp\matU_{\matC}\transp \matU_{\matC} \z | \\	
	&   =  & \TNorm{\matU_{\matC}\transp \matS\transp \matS \matU_{\matC}   - \matI_{\omega}}.
\end{eqnarray*}
In the above, all the equalities follow by the definition of the spectral norm of a matrix while the two inequalities follow
because $\mathcal{R}(\matU_{\matA}) \subseteq \mathcal{R}(\matU_{\matC})$ and $\mathcal{R}(\matU_{\matB}) \subseteq \mathcal{R}(\matU_{\matC})$, respectively.

To conclude the proof, recall that we assumed that for $i \in [\omega]$:
$$1-\epsilon \le \lambda_i \left( \matU_{\matC}\transp \matS\transp \matS \matU_{\matC} \right) \le 1+\epsilon.$$
\end{proof}

\begin{lemma}\label{lem:pert5}
Let $\matA \in \R^{m \times n}$ ($m \geq n$) and $\matB \in \R^{m \times \ell}$ ($m \geq \ell$).  Let $\matS \in \R^{r \times m}$ be any matrix such that $\rank(\matS \matA) = \rank(\matA)$ and $\rank(\matS \matB)=\rank(\matB)$, and all singular values of $\matS \matU_{\matA}$ and $\matS \matU_{\matB}$ are inside $[\sqrt{1-\epsilon},\sqrt{1+\epsilon}]$ for some
$0 < \epsilon < 1/2$.
Then, for $i=1,\dots,\min(n,\ell)$,
$$|\sigma_i\left( \matU_{\matA}\transp \matS\transp \matS \matU_{\matB} \right) -   \sigma_i \left( \matU_{\matS\matA}\transp\matU_{\matS \matB} \right) |
\le 2 \epsilon \left( 1 + \epsilon \right)\,.$$
\end{lemma}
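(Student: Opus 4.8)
The plan is to recognize the second matrix, $\matU_{\matS\matA}\transp\matU_{\matS\matB}$, as a mildly ``whitened'' version of the first, $\matM:=\matU_{\matA}\transp\matS\transp\matS\matU_{\matB}$, and then quote the multiplicative perturbation bound of Lemma~\ref{lem:pert1}. Introduce the $p\times p$ and $q\times q$ Gram matrices $\matF_{\matA}:=(\matS\matU_{\matA})\transp(\matS\matU_{\matA})=\matU_{\matA}\transp\matS\transp\matS\matU_{\matA}$ and $\matF_{\matB}:=(\matS\matU_{\matB})\transp(\matS\matU_{\matB})$, where $p=\rank(\matA)$, $q=\rank(\matB)$. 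By hypothesis every singular value of $\matS\matU_{\matA}$ and $\matS\matU_{\matB}$ lies in $[\sqrt{1-\epsilon},\sqrt{1+\epsilon}]$, so $\matF_{\matA}$ and $\matF_{\matB}$ are symmetric positive definite with all eigenvalues in $[1-\epsilon,1+\epsilon]$; in particular $\matF_{\matA}^{-1/2}$ and $\matF_{\matB}^{-1/2}$ are well defined and symmetric.

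The first step is to pick a convenient pair of orthonormal bases. Since $\rank(\matS\matA)=\rank(\matA)=p$ and $\matS\matA=\matS\matU_{\matA}\matSig_{\matA}\matV_{\matA}\transp$ with $\matSig_{\matA}\matV_{\matA}\transp$ of full row rank, we have $\mathcal{R}(\matS\matA)=\mathcal{R}(\matS\matU_{\matA})$; and $(\matS\matU_{\matA}\matF_{\matA}^{-1/2})\transp(\matS\matU_{\matA}\matF_{\matA}^{-1/2})=\matF_{\matA}^{-1/2}\matF_{\matA}\matF_{\matA}^{-1/2}=\matI_p$, so the columns of $\matS\matU_{\matA}\matF_{\matA}^{-1/2}$ form an orthonormal basis of $\mathcal{R}(\matS\matA)$ (and similarly for $\matB$). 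The singular values of $\matU_{\matS\matA}\transp\matU_{\matS\matB}$ are unchanged if $\matU_{\matS\matA},\matU_{\matS\matB}$ are replaced by other orthonormal bases of the same two column spaces (this multiplies the product by orthogonal matrices on the left and right), so we may take $\matU_{\matS\matA}=\matS\matU_{\matA}\matF_{\matA}^{-1/2}$ and $\matU_{\matS\matB}=\matS\matU_{\matB}\matF_{\matB}^{-1/2}$. This yields the identity
\[
\matU_{\matS\matA}\transp\matU_{\matS\matB}=\matF_{\matA}^{-1/2}\,(\matS\matU_{\matA})\transp(\matS\matU_{\matB})\,\matF_{\matB}^{-1/2}=\matF_{\matA}^{-1/2}\,\matM\,\matF_{\matB}^{-1/2}.
\]

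The second step applies Lemma~\ref{lem:pert1} with $\matPsi=\matM$, $\matPhi=\matF_{\matA}^{-1/2}\matM\matF_{\matB}^{-1/2}$, $\matD_{\matL}=\matF_{\matA}^{-1/2}$, and $\matD_{\matR}=\matF_{\matB}^{-1/2}$, which are square and nonsingular. Here $\matD_{\matL}\matD_{\matL}\transp=\matF_{\matA}^{-1}$ and $\matD_{\matR}\transp\matD_{\matR}=\matF_{\matB}^{-1}$ have eigenvalues in $[\frac{1}{1+\epsilon},\frac{1}{1-\epsilon}]$, so $\gamma=\max\{\TNorm{\matF_{\matA}^{-1}-\matI_p},\TNorm{\matF_{\matB}^{-1}-\matI_q}\}\le\frac{\epsilon}{1-\epsilon}\le 2\epsilon$, the last inequality using $\epsilon<1/2$. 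Lemma~\ref{lem:pert1} then gives $|\sigma_i(\matU_{\matS\matA}\transp\matU_{\matS\matB})-\sigma_i(\matM)|\le\gamma\,\sigma_i(\matM)\le 2\epsilon\,\sigma_i(\matM)$ for $i\le\rank(\matM)$, and $\sigma_i(\matM)\le\TNorm{\matM}\le\TNorm{\matS\matU_{\matA}}\TNorm{\matS\matU_{\matB}}\le 1+\epsilon$ finishes the estimate. For indices $\rank(\matM)<i\le\min(n,\ell)$ both singular values vanish (since $\matF_{\matA}^{-1/2},\matF_{\matB}^{-1/2}$ are invertible, $\rank(\matU_{\matS\matA}\transp\matU_{\matS\matB})=\rank(\matM)$), so the bound holds trivially there.

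I expect the only genuinely delicate point to be the first step: one must verify that the whitened factors $\matS\matU_{\matA}\matF_{\matA}^{-1/2}$ and $\matS\matU_{\matB}\matF_{\matB}^{-1/2}$ really are orthonormal bases of the correct subspaces (this is exactly where the rank hypotheses $\rank(\matS\matA)=\rank(\matA)$, $\rank(\matS\matB)=\rank(\matB)$ enter) and that $\sigma_i(\matU_{\matS\matA}\transp\matU_{\matS\matB})$ does not depend on the choice of orthonormal bases. Once the identity $\matU_{\matS\matA}\transp\matU_{\matS\matB}=\matF_{\matA}^{-1/2}\matM\matF_{\matB}^{-1/2}$ is in place, the rest is routine bookkeeping with Lemma~\ref{lem:pert1}. (One could organize the computation the other way, writing $\matM=\matF_{\matA}^{1/2}(\matU_{\matS\matA}\transp\matU_{\matS\matB})\matF_{\matB}^{1/2}$ and taking $\matPsi=\matU_{\matS\matA}\transp\matU_{\matS\matB}$; since then $\gamma\le\epsilon$ and $\sigma_i(\matU_{\matS\matA}\transp\matU_{\matS\matB})\le 1$ this even yields the sharper bound $\epsilon$, but $2\epsilon(1+\epsilon)$ is all that is needed downstream.)
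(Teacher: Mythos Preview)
Your proof is correct, and it is genuinely cleaner than the paper's. Both arguments ultimately invoke Lemma~\ref{lem:pert1}, but they choose $\matD_{\matL},\matD_{\matR}$ very differently. The paper expands $\matU_{\matA}\transp\matS\transp\matS\matU_{\matB}$ and $\matU_{\matS\matA}\transp\matU_{\matS\matB}$ through the full SVDs of $\matA,\matB,\matS\matA,\matS\matB$, takes $\matD_{\matL}=\matSig_{\matS\matA}^{-1}\matV_{\matS\matA}\transp\matV_{\matA}\matSig_{\matA}$ (and similarly for $\matD_{\matR}$), and then spends most of the proof bounding $\gamma$: it conjugates the PSD sandwich $(1-\epsilon)\matI\preceq\matU_{\matA}\transp\matS\transp\matS\matU_{\matA}\preceq(1+\epsilon)\matI$ by $\matSig_{\matA}\matV_{\matA}\transp\pinv{(\matS\matA)}$ (Lemma~\ref{lem:pert3}) and manipulates the result to get $\gamma\le 2\epsilon$. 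Your whitening choice $\matD_{\matL}=\matF_{\matA}^{-1/2}$, $\matD_{\matR}=\matF_{\matB}^{-1/2}$ bypasses all of that: once you note that $\matS\matU_{\matA}\matF_{\matA}^{-1/2}$ is a legitimate orthonormal basis for $\mathcal{R}(\matS\matA)$, the identity $\matU_{\matS\matA}\transp\matU_{\matS\matB}=\matF_{\matA}^{-1/2}\matM\matF_{\matB}^{-1/2}$ drops out, and the eigenvalue bound on $\matF_{\matA}^{-1},\matF_{\matB}^{-1}$ is immediate from the hypothesis. Your parenthetical observation that reversing the roles of $\matPsi$ and $\matPhi$ yields the sharper bound $\epsilon$ (rather than $2\epsilon(1+\epsilon)$) is also correct and is not noted in the paper.
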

\begin{proof}
For every $i=1,\ldots,q$ we have,
\small
\begin{eqnarray*}
|\sigma_i\left( \matU_{\matA}\transp \matS\transp \matS \matU_{\matB} \right) - \sigma_i \left( \matU_{\matS\matA}\transp\matU_{\matS\matB} \right) |
 &=& |\sigma_i\left(  \matSig_{\matA}^{-1} \matV_{\matA}\transp \matA\transp \matS\transp \matS \matB \matV_{\matB} \matSig_{\matB}^{-1}   \right)\\
& - & \sigma_i \left( \matSig_{\matS\matA}^{-1} \matV_{\matS\matA}\transp \matA\transp \matS\transp \matS \matB \matV_{\matS \matB} \matSig_{\matS \matB}^{-1} \right) | \\
&\leq&  \gamma \cdot \sigma_i\left( \matSig_{\matA}^{-1} \matV_{\matA}\transp\matA\transp \matS\transp \matS \matB \matV_{\matB} \matSig_{\matB}^{-1} \right) \\
&=&  \gamma \cdot \sigma_i\left( \matU_{\matA}\transp \matS\transp \matS \matU_{\matB} \right) \\
&\le& \gamma \cdot \TNorm{ \matU_{\matA}\transp \matS\transp } \cdot  \sigma_i\left( \matS \matU_{\matB} \right) \\
&\le& \gamma \cdot  \left( 1 + \epsilon\right)
\end{eqnarray*}
\normalsize
with $$ \gamma = \max(
\TNorm{\matSig_{\matS\matA}^{-1} \matV_{\matS\matA}\transp \matV_{\matA} \matSig_{\matA}^2 \matV_{\matA}\transp \matV_{\matS\matA} \matSig_{\matS\matA}^{-1} - \matI_p}  ,
\TNorm{\matSig_{\matS\matB}^{-1} \matV_{\matS\matB}\transp \matV_{\matB} \matSig_{\matB}^2 \matV_{\matB}\transp \matV_{\matS\matB} \matSig_{\matS\matB}^{-1} - \matI_q}
)\,.$$
In the above, the first inequality follows using Lemma~\ref{lem:pert1}: set
\begin{eqnarray*}
\matPsi = \matSig_{\matA}^{-1} \matV_{\matA}\transp \matA\transp \matS\transp \matS \matB \matV_{\matB} \matSig_{\matB}^{-1},
\end{eqnarray*}
\begin{eqnarray*}
\matD_L := \matSig^{-1}_{\matS \matA}\matV_{\matS\matA}\transp \matV_{\matA} \matSig_{\matA}
\quad \text{and}
\quad  \matD_R:=\matSig_{\matB}\matV_{\matB}\transp \matV_{\matS\matB} \matSig_{\matS\matB}^{-1}.
\end{eqnarray*}
Moreover,
\begin{eqnarray*}
\matD_L\Psi \matD_R &=&   \left( \matSig^{-1}_{\matS \matA}\matV_{\matS\matA}\transp \matV_{\matA} \matSig_{\matA}\right) 
\left( \matSig_{\matA}^{-1} \matV_{\matA}\transp \matA\transp \matS\transp \matS \matB \matV_{\matB} \matSig_{\matB}^{-1} \right) 
\left(\matSig_{\matB}\matV_{\matB}\transp \matV_{\matS\matB} \matSig_{\matS\matB}^{-1} \right) \\  
&=&\matSig^{-1}_{\matS \matA}\matV_{\matS\matA}\transp \matV_{\matA}  \matV_{\matA}\transp \matA\transp \matS\transp \matS \matB \matV_{\matB}
\matV_{\matB}\transp \matV_{\matS\matB} \matSig_{\matS\matB}^{-1}  \\  
&=& \matSig_{\matS\matA}^{-1} \matV_{\matS\matA}\transp \matA\transp \matS\transp \matS \matB \matV_{\matS \matB} \matSig_{\matS \matB}^{-1},
\end{eqnarray*}
since $\matA = \matA \matV_{\matA}\matV_{\matA}\transp,$ and
$\matB = \matB \matV_{\matB}\matV_{\matB}\transp.$
To apply Lemma~\ref{lem:pert1} we need to show that $\matD_L$ and $\matD_R$ are non-singular. We will prove that $\matD_L$ is non-singular 
(the same argument applies to $\matD_R$). $\matD_L$ is non-singular if and only if $\matV_{\matS\matA}\transp \matV_{\matA}$ is non-singular. Since $\rank(\matV_{\matS\matA})=\rank(\matV_{\matA})$, it follows that the range of $\matV_{\matS\matA}$ equals to the range of $\matV_{\matA}$. So $\matV_{\matS\matA} = \matV_{\matA} \matW$ for some unitary matrix $\matW$ of size $p$. $\matV_{\matS\matA}\transp \matV_{\matA} = \matW^\top$ and $\matW$ is non-singular and so is 
$\matD_L$.
The second inequality follows because for any two matrices $\matX, \matY:$ $\sigma_i(\matX \matY) \le \TNorm{\matX} \sigma_i(\matY)$.
Finally, in the third inequality we used the fact that $\TNorm{ \matU_{\matA}\transp \matS\transp } \le \sqrt{1+\epsilon}$ and $\sigma_i\left( \matS \matU_{\matB} \right) \le \sqrt{1+\epsilon}.$
We now bound
$\TNorm{\matSig_{\matS\matA}^{-1} \matV_{\matS\matA}\transp \matV_{\matA} \matSig_{\matA}^2 \matV_{\matA}\transp \matV_{\matS\matA} \matSig_{\matS\matA}^{-1} - \matI_p}$.
(The second term in the max expression of $\gamma$ can be bounded in a similar fashion, so we omit the proof.)
\small
\begin{eqnarray*}
	 \TNorm{\matSig_{\matS\matA}^{-1} \matV_{\matS\matA}\transp \matV_{\matA} \matSig_{\matA}^2 \matV_{\matA}\transp \matV_{\matS\matA} \matSig_{\matS\matA}^{-1} - \matI_p}
	&= & \TNorm{\matSig_{\matS\matA}^{-1} \matV_{\matS\matA}\transp \matA\transp\matA  \matV_{\matS\matA} \matSig_{\matS\matA}^{-1} - \matI_p}\\
	&= & \TNorm{ \matU_{\matS\matA}\transp \tpinv{(\matS\matA)} \matA\transp \matA  \pinv{(\matS\matA )} \matU_{\matS\matA}
     - \matU_{\matS \matA}\transp \matU_{\matS \matA}\matU_{\matS \matA}\transp \matU_{\matS \matA}}\\
	&= & \TNorm{ \matU_{\matS\matA}\transp \left( \tpinv{(\matS\matA)} \matA\transp \matA  \pinv{(\matS\matA )} - \matU_{\matS \matA} \matU_{\matS \matA}\transp\right) \matU_{\matS\matA}}\\
	&\leq & \TNorm{\tpinv{(\matS\matA)} \matA\transp \matA  \pinv{(\matS\matA )} - \matU_{\matS \matA} \matU_{\matS \matA}\transp}\\
\end{eqnarray*}
\normalsize
where we used $\matA\transp \matA = \matV_{\matA} \matSig_{\matA}^2 \matV_{\matA}\transp,$ and
$\pinv{(\matS\matA )} \matU_{\matS\matA} = \matV_{\matS\matA} \matSig_{\matS\matA}^{-1}$.
Recall that, all the singular values of $\matS \matU_{\matA}$ are between $\sqrt{1 - \epsilon}$ and $\sqrt{1 + \epsilon}$, so:
$$ (1-\epsilon) \matI_p \preceq \matU_{\matA}\transp \matS\transp \matS \matU_{\matA} \preceq (1+\epsilon ) \matI_p.$$
Conjugating the above PSD ordering with $\matSig_{\matA} \matV_{\matA}\transp\pinv{(\matS \matA)}$ (see Lemma~\ref{lem:pert3}), it follows that
\begin{eqnarray*}
 (1  - \epsilon) \tpinv{(\matS\matA)} \matA\transp \matA  \pinv{(\matS\matA )}
  \preceq  \matU_{\matS \matA} \matU_{\matS \matA}\transp  \preceq  (1+\epsilon) \tpinv{(\matS\matA)} \matA\transp \matA  \pinv{(\matS\matA )}
\end{eqnarray*}
since $\matU_\matA\transp \matU_\matA = \matI_p $ and $ (\pinv{(\matS\matA)})\transp \matA\transp\matS\transp \matS\matA \pinv{(\matS\matA)} = \matU_{\matS\matA} \matU_{\matS\matA}\transp$.
Rearranging terms, it follows that
\begin{eqnarray*}
	\frac1{1+\epsilon} \matU_{\matS \matA} \matU_{\matS \matA}\transp  & \preceq & \tpinv{(\matS\matA)} \matA\transp \matA  \pinv{(\matS\matA )}
	 \preceq  \frac1{1-\epsilon} \matU_{\matS \matA} \matU_{\matS \matA}\transp.
\end{eqnarray*}
Since $0 < \epsilon < 1/2$, it holds that $\frac1{1-\epsilon/3}  \leq 1 + 2\epsilon$ and $\frac1{1+\epsilon} \geq 1 - \epsilon,$ hence
\begin{eqnarray*}
-2\epsilon \matU_{\matS \matA} \matU_{\matS \matA}\transp
\preceq  \tpinv{(\matS\matA)} \matA\transp \matA  \pinv{(\matS\matA )} - \matU_{\matS \matA} \matU_{\matS \matA}\transp
 \preceq  2\epsilon \matU_{\matS \matA} \matU_{\matS \matA}\transp
\end{eqnarray*}
using standard properties of the PSD ordering. This implies that
\begin{eqnarray*}
 \TNorm{\tpinv{(\matS\matA)} \matA\transp \matA  \pinv{(\matS\matA )}  -  \matU_{\matS \matA} \matU_{\matS \matA}\transp}
  \leq  2\epsilon \TNorm{ \matU_{\matS \matA} \matU_{\matS \matA}\transp } = 2\epsilon \,.
\end{eqnarray*}
Indeed, let $\x_+$ be the unit eigenvector of the symmetric matrix
$$\tpinv{(\matS\matA)} \matA\transp \matA  \pinv{(\matS\matA )} - \matU_{\matS \matA} \matU_{\matS \matA}\transp$$ corresponding to its maximum eigenvalue. The PSD ordering implies that
\begin{eqnarray*}
\lambda_{\max}\left( \tpinv{(\matS\matA)} \matA\transp \matA  \pinv{(\matS\matA )} - \matU_{\matS \matA} \matU_{\matS \matA}\transp \right)
\leq  2 \epsilon \x_+\transp \matU_{\matS \matA} \matU_{\matS \matA}\transp \x_+ \leq 2\epsilon \TNorm{\matU_{\matS \matA} \matU_{\matS \matA}\transp}= 2\epsilon.
\end{eqnarray*}
Similarly,
$$\lambda_{\min}\left( \tpinv{(\matS\matA)} \matA\transp \matA  \pinv{(\matS\matA )} - \matU_{\matS \matA} \matU_{\matS \matA}\transp \right)> - 2\epsilon,$$
which shows the claim.
\end{proof}

\begin{lemma}\label{lem:pert6}
Repeat the conditions of Lemma~\ref{lem:pert4}.
Then, for all $\w \in \R^n$ and $\y \in \R^{\ell}$, we have
$$ \abs{ \w\transp \matA\transp \matB \y - \w\transp \matA\transp \matS\transp \matS \matB \y  }
\leq \epsilon \cdot \TNorm{\matA \w} \cdot \TNorm{\matB \y}.
$$
\end{lemma}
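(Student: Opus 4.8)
The plan is to follow the argument in the proof of Lemma~\ref{lem:pert4} almost verbatim, but to keep it \emph{pointwise} instead of passing to a supremum over unit vectors. First I would set $\u := \matA\w$ and $\v := \matB\y$, so that the quantity to be bounded is exactly $\abs{\u\transp\v - \u\transp\matS\transp\matS\v}$. The key structural observation is that $\u \in \mathcal{R}(\matA) \subseteq \mathcal{R}(\matC) = \mathcal{R}(\matU_{\matC})$, and likewise $\v \in \mathcal{R}(\matB) \subseteq \mathcal{R}(\matU_{\matC})$; here I use that the columns of $\matA$ and of $\matB$ are among the columns of $\matC = [\matA;\matB]$, so their column spaces sit inside $\mathcal{R}(\matC)$, which is spanned by $\matU_{\matC}$.

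Next I would write $\u = \matU_{\matC}\a$ with $\a := \matU_{\matC}\transp\u$ and $\v = \matU_{\matC}\b$ with $\b := \matU_{\matC}\transp\v$; since $\matU_{\matC}$ has orthonormal columns and $\u,\v$ lie in its range, $\matU_{\matC}\matU_{\matC}\transp\u = \u$ and $\matU_{\matC}\matU_{\matC}\transp\v = \v$, whence $\TNorm{\a} = \TNorm{\u} = \TNorm{\matA\w}$ and $\TNorm{\b} = \TNorm{\v} = \TNorm{\matB\y}$. Substituting these expressions,
$$\w\transp\matA\transp\matS\transp\matS\matB\y - \w\transp\matA\transp\matB\y = \u\transp\matS\transp\matS\v - \u\transp\v = \a\transp\bigl(\matU_{\matC}\transp\matS\transp\matS\matU_{\matC} - \matI_{\omega}\bigr)\b,$$
so by the definition of the spectral norm,
$$\abs{\w\transp\matA\transp\matB\y - \w\transp\matA\transp\matS\transp\matS\matB\y} \le \TNorm{\matU_{\matC}\transp\matS\transp\matS\matU_{\matC} - \matI_{\omega}}\cdot\TNorm{\a}\cdot\TNorm{\b}.$$

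Finally, the hypothesis inherited from Lemma~\ref{lem:pert4} says that every singular value of $\matS\matU_{\matC}$ lies in $[\sqrt{1-\epsilon},\sqrt{1+\epsilon}]$, equivalently $1-\epsilon \le \lambda_i\bigl(\matU_{\matC}\transp\matS\transp\matS\matU_{\matC}\bigr) \le 1+\epsilon$ for all $i\in[\omega]$, so $\TNorm{\matU_{\matC}\transp\matS\transp\matS\matU_{\matC} - \matI_{\omega}} \le \epsilon$. Combining this with the displayed bound and the norm identities $\TNorm{\a} = \TNorm{\matA\w}$, $\TNorm{\b} = \TNorm{\matB\y}$ gives the claim. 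There is essentially no obstacle here: the one point that needs care is the containment $\mathcal{R}(\matA),\mathcal{R}(\matB)\subseteq\mathcal{R}(\matU_{\matC})$, which is precisely what allows the generic quadratic form $\u\transp(\matS\transp\matS-\matI)\v$ to be rewritten with $\matI_{\omega}$ in place of the ambient identity; everything else is the definition of the operator norm together with the spectral assumption on $\matS\matU_{\matC}$.
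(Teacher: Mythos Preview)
Your proof is correct and follows essentially the same approach as the paper: both arguments reduce the bilinear difference to the spectral bound $\TNorm{\matU_{\matC}\transp\matS\transp\matS\matU_{\matC}-\matI_{\omega}}\le\epsilon$ via the containment $\mathcal{R}(\matA),\mathcal{R}(\matB)\subseteq\mathcal{R}(\matU_{\matC})$. The paper routes through the SVDs of $\matA$ and $\matB$ and the intermediate matrix $\matE=\matU_{\matA}\transp\matS\transp\matS\matU_{\matB}-\matU_{\matA}\transp\matU_{\matB}$ (then cites the proof of Lemma~\ref{lem:pert4} for $\TNorm{\matE}\le\epsilon$), whereas you project $\matA\w$ and $\matB\y$ onto $\matU_{\matC}$ directly; this is a minor streamlining, not a different idea.
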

\begin{proof}
Let $\matE =  \matU_{\matA}\transp \matS\transp \matS \matU_{\matB} - \matU_{\matA}\transp \matU_{\matB}$. Now,
\begin{eqnarray*}
\abs{ \w\transp \matA\transp \matB \y - \w\transp \matA\transp \matS\transp \matS \matB \y  } &=&    \abs{ \w\transp \matV_{\matA}\matSig_{\matA} \matE \matSig_{\matB} \matV_{\matB}\transp \y  }\\
&\le&  \TNorm{\w\transp \matV_{\matA}\matSig_{\matA}} \TNorm{\matE} \TNorm{\matSig_{\matB} \matV_{\matB}\transp \y}\\
&=&    \TNorm{\w\transp \matV_{\matA}\matSig_{\matA}\matU_{\matA}\transp} \TNorm{\matE} \TNorm{\matU_{\matB}\matSig_{\matB} \matV_{\matB}\transp \y}\\
&=&    \TNorm{\w\transp \matA\transp} \TNorm{\matE} \TNorm{\matB \y}\\
&=&    \TNorm{\matE}  \TNorm{\matA \w} \TNorm{\matB \y}
\end{eqnarray*}
Now, the proof of Lemma~\ref{lem:pert4} ensures that $\TNorm{\matE} \le \epsilon$.
\end{proof}
\section{CCA of Row Sampled Pairs}

Given $\matA$ and $\matB$, one straightforward way to accelerate CCA is to sample rows uniformly from both matrices, and to compute the CCA of the smaller matrices. 
In this section we show that if we sample enough rows, then the canonical correlations of the sampled pair are close to the canonical correlations of the original pair. Furthermore, the canonical weights of the sampled pair can be used to find approximate canonical vectors. Not surprisingly, the sample size depends on the coherence. More specifically, it depends on the coherence of $[\matA ; \matB]$.

\begin{theorem}\label{thm1}
Suppose $\matA \in \R^{m \times n}$ ($m \geq n$) has rank $p$ and $\matB \in \R^{m \times \ell}$ ($m \geq \ell$) has rank $q \le p$. Let $0 < \epsilon < 1/2$ be an accuracy parameter and $0 < \delta < 1$ be a failure probability parameter. Let $\omega = \rank([\matA ; \matB]) \leq p+q$. Let $r$ be an integer such that
\[
54 \epsilon^{-2} m  \mu([\matA ; \matB]) \log ( 12 \omega /\delta) \leq r \leq m \,.
\]
Let $T$ be a random subset of $[m]$ of cardinality $r$, drawn from a uniform distribution over such subsets,
and let $\matS \in \R^{r \times m}$ be the sampling matrix corresponding to $T$ rescaled by $\sqrt{m/r}$.
Denote $\hat{\matA} = \matS \matA$ and $\hat{\matB} = \matS  \matB$.

Let $\hat{\sigma}_1,\dots,\hat{\sigma}_q$  be the exact canonical correlations of $(\hat{\matA}, \hat{\matB})$,
and let $$\w_1=\hat{\x}_1 / \TNorm{\hat{\matA} \hat{\x}_1}, \dots, \w_p=\hat{\x}_q / \TNorm{\hat{\matA} \hat{\x}_q}\,,$$ and
$$\p_1=\hat{\y}_1 / \TNorm{\hat{\matB} \hat{\y}_1}, \dots, p_q = \hat{\y}_q / \TNorm{\hat{\matB} \hat{\y}_q}$$ be the exact canonical weights of $(\hat{\matA}, \hat{\matB})$. With probability of at least $1-\delta$ all the following hold simultaneously:
\begin{enumerate}[(a)]
    \item
    (Approximation of Canonical Correlations)
    For every $i=1,2,\ldots ,q$: 
    
    $$ | \sigma_i\left(\matA, \matB \right) - \sigma_i\left( \hat\matA, \hat\matB \right)| \le  \epsilon + 2\epsilon^2 / 9 = O(\epsilon)\,.$$
	\item
	(Approximate Orthonormal Bases)
    The vectors $\{\matA \w_i\}_{i\in{[q]}}$ form an approximately orthonormal basis. That is,
    for any $c \in [q]$, $$\frac{1}{1+\epsilon/3} \leq \TNormS{\matA \w_c} \leq \frac{1}{1-\epsilon/3}\,,$$
	and for any $i\neq j$, \[|\ip{\matA \w_i}{ \matA \w_j}| \leq \frac{\epsilon}{3 - \epsilon}.\]
	Similarly, for the set of $\{\matB \p_i\}_{i\in{[q]}}$.
	\item
	(Approximate Correlation)
    For every $i=1,2,\ldots ,q$:
    \small
	$$ \frac{\sigma_i(\matA,\matB)}{1+\epsilon/3} - \frac{\epsilon/3}{1-\epsilon/9} \leq \sigma(\matA \w_i, \matB \p_i) \leq \frac{\sigma_i(\matA,\matB)}{1-\epsilon/3} + \frac{\epsilon/3}{(1-\epsilon/3)^2}\,.$$
\normalsize
\end{enumerate}
\end{theorem}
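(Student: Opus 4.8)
Everything hinges on one application of Lemma~\ref{lemma:sampling-ortho} to the $m\times\omega$ matrix $\matU_{[\matA;\matB]}$ (whose coherence is $\mu([\matA;\matB])$), with accuracy parameter $\epsilon/3$ and failure probability $\delta$: since $54\epsilon^{-2}=6(\epsilon/3)^{-2}$ and $\log(12\omega/\delta)\ge\log(3\omega/\delta)$, the hypothesis on $r$ is exactly what that lemma requires, so with probability at least $1-\delta$ every singular value of $\matS\matU_{[\matA;\matB]}$ lies in $[\sqrt{1-\epsilon/3},\sqrt{1+\epsilon/3}]$. Writing any unit vector of $\mathcal{R}(\matU_{\matA})$ or $\mathcal{R}(\matU_{\matB})$ as $\matU_{[\matA;\matB]}\z$ with $\TNorm{\z}=1$, the same two-sided bound passes to every singular value of $\matS\matU_{\matA}$ and of $\matS\matU_{\matB}$; in particular $\rank(\matS\matA)=p$ and $\rank(\matS\matB)=q$, so $\hat\sigma_1,\dots,\hat\sigma_q$ and the weights $\w_i,\p_i$ are well defined, and Lemmas~\ref{lem:pert4},~\ref{lem:pert5},~\ref{lem:pert6} all apply with their error parameter set to $\epsilon/3$ (legitimate because $\epsilon/3<1/2$). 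Condition on this event henceforth.

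\textbf{Part (a).} By Corollary~\ref{cor:bjork-golub}, $\sigma_i(\matA,\matB)=\sigma_i(\matU_{\matA}\transp\matU_{\matB})$ and, applied to $(\hat\matA,\hat\matB)=(\matS\matA,\matS\matB)$, $\hat\sigma_i=\sigma_i(\matU_{\matS\matA}\transp\matU_{\matS\matB})$. Insert the intermediate matrix $\matU_{\matA}\transp\matS\transp\matS\matU_{\matB}$: Lemma~\ref{lem:pert4}, applied to the pair $(\matU_{\matA},\matU_{\matB})$ (so the two spectral norms are $1$ and the stacking has range $\mathcal{R}(\matU_{[\matA;\matB]})$), bounds $|\sigma_i(\matU_{\matA}\transp\matU_{\matB})-\sigma_i(\matU_{\matA}\transp\matS\transp\matS\matU_{\matB})|\le\epsilon/3$, and Lemma~\ref{lem:pert5}, applied to $(\matA,\matB)$, bounds $|\sigma_i(\matU_{\matA}\transp\matS\transp\matS\matU_{\matB})-\hat\sigma_i|\le 2(\epsilon/3)(1+\epsilon/3)$. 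The triangle inequality gives $|\sigma_i(\matA,\matB)-\hat\sigma_i|\le\epsilon+2\epsilon^2/9$, which is claim (a).

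\textbf{Part (b).} The vectors $\matS\matA\w_1,\dots,\matS\matA\w_q$ are orthonormal by Definition~\ref{def} (they are the normalized, mutually orthogonal vectors $\matS\matA\hat\x_i$), i.e. $\w_i\transp\matA\transp\matS\transp\matS\matA\w_j=\delta_{ij}$. Apply Lemma~\ref{lem:pert6} with both matrix arguments equal to $\matA$: the choice $\w=\y=\w_c$ gives $|\TNormS{\matA\w_c}-1|\le(\epsilon/3)\TNormS{\matA\w_c}$, which rearranges to $\tfrac1{1+\epsilon/3}\le\TNormS{\matA\w_c}\le\tfrac1{1-\epsilon/3}$; the choice $\w=\w_i,\y=\w_j$ with $i\ne j$ gives $|\ip{\matA\w_i}{\matA\w_j}|\le(\epsilon/3)\TNorm{\matA\w_i}\TNorm{\matA\w_j}\le\tfrac{\epsilon/3}{1-\epsilon/3}=\tfrac{\epsilon}{3-\epsilon}$. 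The statement for $\{\matB\p_i\}$ is identical with $\matB$ in place of $\matA$.

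\textbf{Part (c) — the main obstacle.} Write $\sigma(\matA\w_i,\matB\p_i)=|\ip{\matA\w_i}{\matB\p_i}|/(\TNorm{\matA\w_i}\TNorm{\matB\p_i})$, and note $\ip{\matS\matA\w_i}{\matS\matB\p_i}=\pm\hat\sigma_i$ while $\TNorm{\matS\matA\w_i}=\TNorm{\matS\matB\p_i}=1$. Bounding the numerator via Lemma~\ref{lem:pert6} against $\pm\hat\sigma_i$, the denominator via part (b), and $\hat\sigma_i$ against $\sigma_i(\matA,\matB)$ via part (a) already gives an estimate of the stated shape, but with an additive term several times too large; the difficulty — the step I expect to be hardest — is extracting the precise constants, because the numerator perturbation and the gap $|\hat\sigma_i-\sigma_i(\matA,\matB)|$ both stem from the single matrix $\matE=\matU_{\matA}\transp(\matS\transp\matS-\matI_m)\matU_{\matB}$, so treating them independently over-counts the sampling error. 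The fix I would use is to argue on the normalized cross-Gram matrix $\mat{N}$ with $\mat{N}_{ij}=\ip{\matA\w_i}{\matB\p_j}/(\TNorm{\matA\w_i}\TNorm{\matB\p_j})$, so that $\mat{N}_{ii}=\pm\sigma(\matA\w_i,\matB\p_i)$. Using $\matU_{\matS\matA}=\matS\matU_{\matA}\matM_{\matA}$, $\matU_{\matS\matB}=\matS\matU_{\matB}\matM_{\matB}$ (with $\matM_{\matA},\matM_{\matB}$ having all singular values in $[\sqrt{(1-\epsilon/3)/(1+\epsilon/3)},\sqrt{(1+\epsilon/3)/(1-\epsilon/3)}]$, exactly as in the proof of Lemma~\ref{lem:pert5}) together with the singular vectors $\matU',\matV'$ of $\matU_{\matS\matA}\transp\matU_{\matS\matB}$, one checks $\matA\matW=\matU_{\matA}\matM_{\matA}\matU'$ and $\matB\matP=\matU_{\matB}\matM_{\matB}\matV'$, whence $\mat{N}=\widetilde{\matU}\transp(\matU_{\matA}\transp\matU_{\matB})\widetilde{\matV}$ for matrices $\widetilde{\matU},\widetilde{\matV}$ of the same quasi-isometric type; Lemma~\ref{lem:pert1} (or a direct computation) then yields $\sigma_i(\mat{N})\le\tfrac{1+\epsilon/3}{1-\epsilon/3}\,\sigma_i(\matA,\matB)$, while the same change of basis sends $\matU_{\matA}\transp\matS\transp\matS\matU_{\matB}$ to the diagonal matrix with entries $\hat\sigma_j/(\TNorm{\matA\w_j}\TNorm{\matB\p_j})$, so by Lemma~\ref{lem:pert6} $\mat{N}$ lies within spectral norm $O(\epsilon)$ of that diagonal matrix. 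Since $\hat\sigma_1\ge\dots\ge\hat\sigma_q$ and each $\TNorm{\matA\w_j}\TNorm{\matB\p_j}\in[\tfrac1{1+\epsilon/3},\tfrac1{1-\epsilon/3}]$, the diagonal entries of $\mat{N}$ are sorted up to $O(\epsilon)$ slack, so Weyl's inequality (Lemma~\ref{lem:pert2}) transfers the bound on $\sigma_i(\mat{N})$ to a two-sided bound on $|\mat{N}_{ii}|=\sigma(\matA\w_i,\matB\p_i)$; the lower bound follows symmetrically. Pushing the $\epsilon$'s through this comparison carefully — rather than through the naive three-way combination — is what produces the stated $\tfrac{\sigma_i(\matA,\matB)}{1\mp\epsilon/3}$ and the lower-order additive corrections, and is where I expect the bookkeeping to be most delicate.
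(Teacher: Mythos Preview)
Your setup and the proofs of (a) and (b) are essentially the paper's, with one cosmetic difference: the paper applies Lemma~\ref{lemma:sampling-ortho} three separate times (to $\matU_\matA$, $\matU_\matB$, and $\matU_{[\matA;\matB]}$) with failure probability $\delta/3$ each and then takes a union bound, whereas you apply it once to $\matU_{[\matA;\matB]}$ and inherit the bounds on the two sub-blocks. Your version is cleaner and equally valid.

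For (c) you over-think it. The paper's proof is exactly the ``naive'' direct computation you describe and then dismiss: bound the denominator $\TNorm{\matA\w_i}\TNorm{\matB\p_i}$ via part (b), split the numerator as $\ip{\hat\matA\w_i}{\hat\matB\p_i}+\w_i\transp(\matA\transp\matB-\hat\matA\transp\hat\matB)\p_i$, apply Lemma~\ref{lem:pert6} to the second piece, and arrive at $\tfrac{\hat\sigma_i}{1-\epsilon/3}+\tfrac{\epsilon/3}{(1-\epsilon/3)^2}$. The paper then simply writes ``by (a)'' to replace $\hat\sigma_i$ by $\sigma_i(\matA,\matB)$ --- and you are right that this last substitution does not literally reproduce the constants printed in the theorem statement (it would introduce an extra additive $O(\epsilon)$ term). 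That is a sloppiness in the paper's bookkeeping, not a signal that a fundamentally different argument is required. Your cross-Gram matrix $\mat{N}$ detour through Lemma~\ref{lem:pert1} is unnecessary, and the step where you invoke Weyl's inequality to pass from $\sigma_i(\mat{N})$ to the individual entry $|\mat{N}_{ii}|$ would itself need more work: Weyl controls singular values, not diagonal entries, and ``diagonal entries sorted up to $O(\epsilon)$ slack'' does not by itself pin $\mat{N}_{ii}$ to $\sigma_i(\mat{N})$ with the precision you claim.
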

\begin{proof}
Let $\matC :=  [\matU_{\matA} ; \matU_{\matB}]$. Lemma~\ref{lemma:sampling-ortho} implies that each of the following three assertions hold with probability of at least $1-\delta/3$, hence all three events hold simultaneously with probability of at least $1-\delta$:
\begin{itemize}
\item For every $r\in[p]$: $\sqrt{1-\epsilon/3} \le \sigma_r(\matS \matU_{\matA}) \le  \sqrt{1+\epsilon/3}\,.$
\item For every $k\in[q]$: $\sqrt{1-\epsilon/3} \le \sigma_k(\matS \matU_{\matB}) \le  \sqrt{1+\epsilon/3}\,.$
\item For every $h\in[\omega]$: $\sqrt{1-\epsilon/3} \le \sigma_h(\matS \matU_{\matC}) \le  \sqrt{1+\epsilon/3}\,.$
\end{itemize}
We now show that if indeed all three events hold, then (a)-(c) hold as well.

{\bf Proof of (a).} Corollary~\ref{cor:bjork-golub} implies that $\sigma_i(\matA, \matB) = \sigma_i(\matU\transp_\matA \matU_\matB)$, and
$\sigma_i(\matS \matA, \matS \matB) = \sigma_i(\matU\transp_{\matS \matA} \matU_{\matS \matB})$. We now use the triangle inequality to get,
\begin{eqnarray*}
| \sigma_i \left( \matA, \matB \right) -  \sigma_i \left(\matS \matA, \matS\matB \right) |
    & = &  | \sigma_i \left( \matU_{\matA}\transp\matU_{\matB} \right) -   \sigma_i \left( \matU_{\matS\matA}\transp\matU_{\matS\matB} \right) |  \\
    & \leq &  | \sigma_i \left( \matU_{\matA}\transp\matU_{\matB} \right)  - \sigma_i\left( \matU_{\matA}\transp \matS\transp \matS \matU_{\matB} \right)|
 	   +    |\sigma_i\left( \matU_{\matA}\transp \matS\transp \matS \matU_{\matB} \right) -   \sigma_i \left( \matU_{\matS\matA}\transp\matU_{\matS\matB} \right) |.
\end{eqnarray*}
To conclude the proof, use Lemma~\ref{lem:pert4} and Lemma~\ref{lem:pert5} to bound these two terms, respectively.

{\bf Proof of (b).} For any $c \in [q]$,
$$ \TNorm{\matA \w_c} = \TNorm{\matA \w_c} / \TNorm{\hat{\matA} \w_c}$$
since $\TNorm{\hat{\matA} \w_c} = 1$. Now Lemma~\ref{lem:pert6} implies the first inequality.

For any $i\neq j$
\small
\begin{eqnarray*}
	|\ip{\matA \w_i}{ \matA \w_j }| & \leq & |\w_i\transp \hat{\matA}\transp \hat{\matA}  \w_j|
	   +   |\w_i\transp (\hat{\matA}\transp \hat{\matA} - \matA\transp \matA)\w_j| \\
    &   =  & |\w_i\transp (\hat{\matA}\transp \hat{\matA} - \matA\transp \matA)\w_j| \\
	& \leq & \frac{\epsilon}{3} \TNorm{\matA \w_i} \TNorm{\matA \w_j} \\
	& \leq & \frac{\epsilon/3}{1-\eps/3} \TNorm{\hat{\matA}\w_i}\TNorm{\hat{\matA}\w_j}\\
	&   =  & \frac{\epsilon}{3-\eps}.
\end{eqnarray*}
\normalsize
In the above, we used the triangle inequality,
the fact that the $\w_i$'s are the canonical weights of $\hat{\matA}$,
and Lemma~\ref{lem:pert6}.

{\bf Proof of (c).} We only prove the upper bound. The lower bound is similar, and we omit it.\\
\small
\begin{eqnarray*}
\sigma\left(\matA \w_i, \matB \p_i\right)	 & = & \frac{\ip{\matA\w_i}{\matB\p_i}}{\TNorm{\matA\w_i}\TNorm{\matB\p_i}} \\
					& \leq & \frac{1}{1-\epsilon/3}\cdot\ip{\matA\w_i}{\matB\p_i}\\
                                              &   =  & \frac{1}{1-\epsilon/3}\cdot \left( \ip{\hat{\matA}\w_i}{\hat{\matB}\p_i}
                                                 +  \w\transp_i\left(\matA\transp\matB - \hat{\matA}\transp\hat{\matB} \right)\p_i \right) \\
                                              &   \leq  & \frac{\sigma\left(\hat{\matA} \x_i, \hat{\matB}\y_i\right)}{1-\epsilon/3}
                                                 +   \frac{\epsilon/3}{1-\epsilon/3}\cdot\TNorm{\matA\w_i}\cdot\TNorm{\matB\p_i} \\
											  &   \leq  & \frac{\sigma\left(\hat{\matA} \w_i, \hat{\matB}\p_i\right)}{1-\epsilon/3} + \frac{\epsilon/3}{(1-\epsilon/3)^2}
\end{eqnarray*}
\normalsize
In the above,
the first equality follows by the definition of $\sigma(\cdot,\cdot)$,
the first inequality by using $$1=\TNormS{\hat{\matA}\w_i} \leq (1+\eps) \TNormS{\matA\w_i},$$ (same holds for $\matB\p_i$),
the second inequality from Lemma~\ref{lem:pert6},
the third inequality  by using $$(1-\eps)\TNormS{\matA\w_i}\leq \TNormS{\hat{\matA} \w_i} =1,$$ (same holds for $\matB\p_i$),
and the last inequality by (a).
\end{proof}

\section{Fast Approximate CCA}\label{sec:alg}

First, we define what we mean by approximate CCA.
\begin{definition}[Approximate CCA] \label{def:approxCCA}
For $0 \leq \eta \leq 1$, an {\em $\eta$-approximate CCA of $(\matA, \matB)$}, is a set of positive numbers $\hat{\sigma}_1,\dots,\hat{\sigma}_q$ together with a set of vectors $\w_1,\dots,\w_q$ for $\matA$ and a set of vectors $\p_1,\dots,\p_q$ for $\matB$, such that
\begin{enumerate}[(a)]
\item For every $i\in[q]$, $$|\sigma_i(\matA, \matB) - \hat{\sigma}_i | \leq \eta\,.$$
\item For every $i\in[q]$, $$|\TNormS{\matA \w_i} - 1 | \leq \eta\,,$$ and for $i\neq j$, $$|\ip{\matA \w_i}{ \matA \w_j}| \leq \eta\,.$$ Similarly, for the set of $\{\matB \p_i\}_{i\in{[q]}}$.
\item For every $i\in[q]$, $$|\sigma_i(\matA, \matB) - \sigma(\matA \w_i, \matB \p_i) | \leq \eta\,.$$
\end{enumerate}
\end{definition}

We are now ready to present our fast algorithm for approximate CCA of a pair of tall-and-thin matrices. Algorithm~\ref{alg:approx} gives the pseudo-code description of our algorithm.

The analysis in the previous section (Theorem~\ref{thm1}) shows that if we sample enough rows, the canonical correlations and weights of the sampled matrices are an $O(\epsilon)$-approximate CCA of $(\matA, \matB)$. However, to turn this observation into a concrete algorithm we need an upper bound on the coherence of $[\matA ; \matB]$. It is conceivable that in certain scenarios such an upper bound might be known in advance, or that it can be computed quickly~\cite{DMMW12}. However, even if we know the coherence, it might be as large as one, which will imply that sampling the entire matrix is needed.

To circumvent this problem, our algorithm uses the RHT to reduce the coherence of the matrix pair before sampling rows from it. That is, instead of sampling rows from $(\matA, \matB)$  we sample rows from $(\matTh \matA, \matTh\matB)$, where $\matTh$ is a RHT matrix (Definition~\ref{def:rht}). This unitary transformation bounds the coherence with high probability, so we can use Theorem~\ref{thm1} to compute the number of rows required for an $O(\epsilon)$-approximate CCA.
We now sample the transformed pair $(\matTh \matA, \matTh\matB)$ to obtain $(\hat{\matA}, \hat{\matB})$. Now the canonical correlations and weights of $(\hat{\matA}, \hat{\matB})$ are computed and returned.

\begin{algorithm}[t]
\caption{Fast Approximate CCA}
\label{alg:approx}
\begin{algorithmic}[1]
\STATE {\bf Input:} $\matA \in \R^{m \times n}$ of rank $p$, $\matB \in \R^{m \times \ell}$ of rank $q$, $0< \epsilon < 1/2$, and $\delta$ ($n\geq l$, $p \ge q$).
\medskip
\STATE $r \longleftarrow \min(54\epsilon^{-2}\left[\sqrt{n+\ell} + \sqrt{8\log(12m/\delta)} \right]^2 \log (3(n+\ell)/\delta), m)$
\STATE Let $\matS$ be the sampling matrix of a random subset of $[m]$ of cardinality $r$ (uniform distribution).
\STATE Draw a random diagonal matrix $\matD$ of size $m$ with $\pm 1$ on its diagonal with equal probability.
\STATE $\hat{\matA} \longleftarrow \matS \matH \cdot (\matD \matA)$ using fast subsampled WHT (see Section~\ref{sec:wht}).
\STATE $\hat{\matB} \longleftarrow \matS \matH \cdot (\matD \matB)$ using fast subsampled WHT (see Section~\ref{sec:wht}).
\STATE Compute and return the canonical correlations and the canonical weights of $( \hat{\matA},\hat{\matB} )$
(e.g. using Bj{\"o}rck and Golub's algorithm).
\end{algorithmic}
\end{algorithm}

\begin{theorem}\label{thm:alg}
With probability of at least $1-\delta$, Algorithm~\ref{alg:approx} returns an $O(\epsilon)$-approximate CCA of $(\matA, \matB)$. Assuming Bj{\"o}rck and Golub's algorithm is used in line 7, Algorithm~\ref{alg:approx} runs in time
$$O\left(  m n \log{m} +  \epsilon^{-2}\left[\sqrt{n} + \sqrt{\log(m/\delta)}\right]^2 \log(n/\delta) n^2\right)\,.$$
\end{theorem}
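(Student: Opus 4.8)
The plan is to combine Theorem~\ref{thm1} with the coherence bound of Lemma~\ref{lem:rht-reduce}, and then account for the running time of each line of Algorithm~\ref{alg:approx}. First I would handle correctness. Since $\matTh = \matH\matD$ is unitary, we have $\sigma_i(\matTh\matA, \matTh\matB) = \sigma_i(\matA,\matB)$, and the canonical weights of $(\matTh\matA, \matTh\matB)$ coincide with those of $(\matA,\matB)$; moreover $\TNorm{\matTh\matA\w} = \TNorm{\matA\w}$ and $\ip{\matTh\matA\w_i}{\matTh\matA\w_j} = \ip{\matA\w_i}{\matA\w_j}$, so it suffices to produce an $O(\epsilon)$-approximate CCA of the transformed pair $(\matTh\matA, \matTh\matB)$. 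Apply Lemma~\ref{lem:rht-reduce} to the matrix $[\matA;\matB]$, which has $m$ rows and at most $n+\ell$ columns: with probability at least $1-\delta/2$ (absorbing constants, or splitting the failure budget), $\mu(\matTh[\matA;\matB]) = \mu([\matTh\matA;\matTh\matB]) \le \frac1m(\sqrt{n+\ell} + \sqrt{8\log(2m/\delta)})^2$. Plugging this bound into the sample-size requirement of Theorem~\ref{thm1} applied to the pair $(\matTh\matA, \matTh\matB)$, the lower bound $54\epsilon^{-2} m\,\mu(\matTh[\matA;\matB])\log(12\omega/\delta)$ becomes exactly (up to the crude bound $\omega \le n+\ell$ and adjusting the log argument) the quantity $r$ chosen in line~2 of the algorithm. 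Conditioning on both good events (the RHT event and the sampling event of Theorem~\ref{thm1}), a union bound gives overall success probability $1-\delta$ after rescaling $\delta$ by a constant, and Theorem~\ref{thm1}(a)--(c) together with Definition~\ref{def:approxCCA} show that the returned quantities form an $O(\epsilon)$-approximate CCA: parts (a) and (c) are immediate, and part (b) follows since $\frac{1}{1\pm\epsilon/3} = 1 \pm O(\epsilon)$ and $\frac{\epsilon}{3-\epsilon} = O(\epsilon)$.

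Next I would bound the running time line by line. Lines~3--4 are negligible. Lines~5--6 apply the subsampled randomized Walsh--Hadamard transform: forming $\matD\matA$ and $\matD\matB$ costs $O(\nnz(\matA)+\nnz(\matB)) = O(m(n+\ell))$, and computing $\matS\matH$ applied to an $m\times n$ (resp. $m\times\ell$) matrix costs $O(mn\log r)$ (resp. $O(m\ell\log r)$) by the in-place subsampled WHT bound cited in Section~\ref{sec:wht}; since $r \le m$ and $\ell \le n$, this is $O(mn\log m)$. Line~7 runs Bj\"orck--Golub on the pair $(\hat\matA,\hat\matB) \in \R^{r\times n}\times\R^{r\times\ell}$, which by the discussion after Theorem~\ref{thm:bjork-golub} costs $O(r(n^2+\ell^2)) = O(rn^2)$. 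Substituting the value of $r$ from line~2 gives the second term $\epsilon^{-2}[\sqrt{n}+\sqrt{\log(m/\delta)}]^2\log(n/\delta)\,n^2$ (here $\sqrt{n+\ell}\le\sqrt{2n}$ and $\log(12m/\delta), \log(3(n+\ell)/\delta) = O(\log(m/\delta)), O(\log(n/\delta))$ respectively, all absorbed into the $O(\cdot)$). Adding the two contributions yields the claimed bound.

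The main obstacle I anticipate is purely bookkeeping rather than conceptual: one must carefully match the sample-size formula in line~2 of the algorithm (which is written in terms of $n+\ell$ and $m$) against the requirement $54\epsilon^{-2}m\,\mu([\matTh\matA;\matTh\matB])\log(12\omega/\delta)\le r$ in Theorem~\ref{thm1}, using $\omega\le n+\ell$ and the fact that Lemma~\ref{lem:rht-reduce}'s coherence bound is stated with a specific failure argument $\log(m/\delta)$ that must be reconciled with the $\log(12m/\delta)$ appearing in line~2; and one must track the $\delta$-splitting across the two high-probability events so the final union bound still reads $1-\delta$. A secondary subtlety is verifying that the hypotheses of Theorem~\ref{thm1} are genuinely met for the transformed pair --- in particular that $\rank(\matTh\matA)=\rank(\matA)=p$ and $\rank(\matTh\matB)=q$, which holds because $\matTh$ is invertible, so the rank and the ordering $q\le p$ are preserved. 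None of this requires new ideas, so the proof is essentially an assembly of Lemma~\ref{lem:rht-reduce}, Theorem~\ref{thm1}, and the SRHT cost bound.
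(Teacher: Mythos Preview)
Your proposal is correct and follows essentially the same approach as the paper's own proof: invoke Lemma~\ref{lem:rht-reduce} (with failure probability $\delta/2$) to bound $\mu([\matTh\matA;\matTh\matB])$, feed that bound into Theorem~\ref{thm1} (with failure probability $\delta/2$) applied to the transformed pair, take a union bound, use unitarity of $\matTh$ to transfer the $O(\epsilon)$-approximate CCA back to $(\matA,\matB)$, and then cost out lines~2--7 exactly as you describe. Your anticipated bookkeeping issues (matching the $r$ formula to the sample-size requirement via $\omega\le n+\ell$, and the $\delta$-splitting) are the only things to verify, and the paper handles them identically.
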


\begin{proof}
Lemma~\ref{lem:rht-reduce} ensures that with probability of at least $1-\delta/2$,
$$\mu([\matTh \matA; \matTh \matB]) \leq \frac{1}{m} \left( \sqrt{n+\ell} + \sqrt{8\log(3m/\delta)} \right)^2\,.$$
Assuming that the last inequality holds, Theorem~\ref{thm1} ensures that with probability of at least $1-\delta/2$,
the canonical correlations and weights of $(\hat{\matA},  \hat{\matB})$ form an $O(\epsilon)$-approximate CCA of
$(\matTh \matA, \matTh \matB)$. By the union bound, both events hold together with probability of at least $1-\delta$.
The RHT transforms applied to $\matA$ and $\matB$ are unitary, so for every $\eta$, an $\eta$-approximate CCA of $(\matTh \matA, \matTh \matB)$ is also an $\eta$-approximate CCA of $(\matA, \matB)$ (and vice versa).

{\bf Running time analysis.}
Step 2 takes $O(1)$ operations. Step 3 requires $O(r)$ operations.
Step 4 requires $O(m)$ operations.
Step 5 involves the multiplication of $\matA$ with $\matS \matH \matD$ from the left.
Computing $\matD \matA$ requires $O(mn)$ time. Multiplying $\matS \matH$ by $\matD \matA$ using fast subsampled WHT requires
$O(m n \log r )$ time, as explained in Section~\ref{sec:wht}.
Similarly, step 6 requires $O( m \ell \log r )$ operations.
Finally, step 7 takes $O( r n \ell + r (n^2 + \ell^2))$ time.  Assuming that $n \geq \ell$, the total running time is
$O(r n^2 + m n \log(r))$. Plugging the value for $r$, and using the fact that $r \leq m$, establishes our running time bound.
\end{proof}

\section{Fast Approximate CCA with Other Transforms}\label{sec:sparse}

Our discussion so far has focused on the case in which we reduce the dimensions of $\matA$ and $\matB$ via the SRHT.
In recent years several similar transforms have been suggested by various researchers. 
For example, one can use the Fast Johnson-Lindenstraus method of Ailon and Chazelle~\cite{AC06}.
This transform leads to an approximate CCA algorithm
with a similar additive error gaurantee and running time as in Theorem~\ref{thm:alg}. 

Recently, Clarkson and Woodruff described a transform that is particularly appealing if the input matrices 
$\matA$ and $\matB$ are sparse~\cite{CW12}. We present this transform in the following lemma along with 
theoretical guarantees similar to those of Lemma~\ref{lemma:sampling-ortho}.  The following lemma is due 
to Meng and Mahoney~\cite{MM13}, which analyzed the transform originally due to Clarkson and Woodruff~\cite{CW12}.
We only employ the lemma due to Meng and Mahoney~\cite{MM13} because it slightly improves upon the original result due to Clarkson and Woodruff~\cite[Theorem 19]{CW12}.
\begin{lemma}\label{lem:Mahoney}[Theorem 1 in~\cite{MM13} with $\epsilon, \delta$ replaced with $\epsilon/3, \delta/3,$ respectively.]
Given any matrix $\matX \in \R^{m \times d}$ with $m \gg d$, accuracy parameter $0 < \epsilon < 1/3$, and failure probability parameter $0 < \delta < 1,$
let $$r \geq \ceil{ \frac{ 243 (d^2 + d) }{\epsilon^2 \delta}}.$$
Construct an $r \times m$ matrix $\matOmega$ as follows: $\matOmega = \matS \matD$, where $\matS \in \R^{r \times m}$ has each column chosen independently
and uniformly from the $r$ standard basis vectors of $\R^{r}$ and $\matD \in \R^{m \times m}$ is a diagonal matrix with diagonal entries chosen independently and uniformly from $\{+1,-1\}$. Then with probability at least $1-\delta/3$, for every $j\in[d]$:
 $$\sqrt{1-\epsilon/3}  \cdot \sigma_j(\matX) \le \sigma_j(\matOmega \matX) \le  \sqrt{1+\epsilon/3} \cdot \sigma_j(\matX) \,.$$
Moreover, $\matOmega \matX$ can be calculated in $O(\nnz(\matX))$ arithmetic operations.
\end{lemma}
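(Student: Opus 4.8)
The plan is to derive the lemma directly from Theorem~1 of~\cite{MM13} by a change of variables in the accuracy and failure parameters, so that almost all the work is bookkeeping. First I would recall the precise form of that theorem: for $\matX \in \R^{m \times d}$ with $m \gg d$, an accuracy parameter $0 < \epsilon' < 1$, and a failure probability $0 < \delta' < 1$, if $r \geq 9(d^2+d)/(\epsilon'^2 \delta')$, then the sparse sign embedding $\matOmega = \matS\matD$ of the stated form satisfies, with probability at least $1-\delta'$, an oblivious subspace embedding guarantee for the column space of $\matX$, and moreover $\matOmega\matX$ is formed in $O(\nnz(\matX))$ arithmetic operations.

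Second, I would translate the subspace-embedding statement into the singular-value statement we want (this step is vacuous if~\cite{MM13} already phrases Theorem~1 in terms of singular values). Writing $\matX = \matU_\matX \matSig_\matX \matV_\matX\transp$ and invoking the Courant--Fischer min-max characterization, the bound $(1-\epsilon')\TNormS{\matX\vec{z}} \le \TNormS{\matOmega\matX\vec{z}} \le (1+\epsilon')\TNormS{\matX\vec{z}}$ holding for all $\vec{z} \in \R^d$ is equivalent to $(1-\epsilon')\sigma_j(\matX)^2 \le \sigma_j(\matOmega\matX)^2 \le (1+\epsilon')\sigma_j(\matX)^2$ for every $j \in [d]$; taking square roots yields exactly the displayed inequality, but with $\epsilon'$ in place of $\epsilon/3$.

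Third, I would set $\epsilon' = \epsilon/3$ and $\delta' = \delta/3$. The hypothesis $0 < \epsilon < 1/3$ ensures $0 < \epsilon' < 1$, so the cited theorem applies; the sample-size requirement becomes $r \geq 9(d^2+d)/((\epsilon/3)^2(\delta/3)) = 243(d^2+d)/(\epsilon^2\delta)$, which matches the displayed bound after taking the ceiling. The success probability becomes $1-\delta/3$, and the $O(\nnz(\matX))$ running time is unchanged by the substitution. This completes the derivation.

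There is no genuinely hard step here, since the statement is a parameter-rescaled restatement of a known result; the one place to be careful is the constant bookkeeping, namely verifying that the constant $243$ in the lemma is precisely what~\cite[Theorem~1]{MM13} produces under $(\epsilon,\delta) \mapsto (\epsilon/3,\delta/3)$, i.e. that~\cite{MM13} uses the constant $9$ so that the factor $27 = 3^2 \cdot 3$ from rescaling both parameters accounts for the jump from $9$ to $243$. A secondary point worth noting is that the $O(\nnz(\matX))$ cost covers both the diagonal sign flip $\matD\matX$ and the hashing step $\matS(\matD\matX)$, each of which touches each nonzero entry of $\matX$ a constant number of times.
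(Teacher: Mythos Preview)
Your proposal is correct and matches the paper's approach: the paper provides no separate proof for this lemma, as the bracketed attribution ``Theorem~1 in~\cite{MM13} with $\epsilon,\delta$ replaced with $\epsilon/3,\delta/3$'' already signals that the result is obtained by exactly the parameter substitution you spell out. Your additional remarks on translating the subspace-embedding formulation into singular-value bounds and on verifying the constant $9 \cdot 27 = 243$ are the right sanity checks, but the paper takes all of this for granted.
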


\begin{algorithm}[t]
\caption{Fast Approximate CCA with the Clarskon-Woodruff Transform~\cite{CW12}}
\label{alg:approx2}
\begin{algorithmic}[1]
\STATE {\bf Input:} $\matA \in \R^{m \times n}$ of rank $p$, $\matB \in \R^{m \times \ell}$ of rank $q$, $0< \epsilon < 1/3$, and $\delta$ ($n\geq l$, $p \ge q$).
\medskip
\STATE $r \longleftarrow \min\left( \frac{ 243 \left( \left(n+\ell \right)^2 + n+\ell \right) }{\epsilon^2 \delta}, m \right)$.
\STATE Let $\matS$ be an $r \times m$ matrix constructed as follows: it has each column chosen independently
and uniformly from the $r$ standard basis vectors of $\R^{r}$
\STATE Draw a random diagonal matrix $\matD$ of size $m$ with $\pm 1$ on its diagonal with equal probability.
\STATE $\hat{\matA} \longleftarrow \matS \cdot (\matD \matA)$.
\STATE $\hat{\matB} \longleftarrow \matS \cdot (\matD \matB)$.
\STATE Compute and return the canonical correlations and the canonical weights of $( \hat{\matA},\hat{\matB} )$
(e.g. using Bj{\"o}rck and Golub's algorithm).
\end{algorithmic}
\end{algorithm}
Similarly to Theorem~\ref{thm1} we have the following theorem.
\begin{theorem}\label{thmCW}
Suppose $\matA \in \R^{m \times n}$ ($m \geq n$) has rank $p$ and $\matB \in \R^{m \times \ell}$ ($m \geq \ell$) has rank $q \le p$. Let $0 < \epsilon < 1/3$ be an accuracy parameter and $0 < \delta < 1$ be a failure probability parameter. Let $\omega = \rank([\matA ; \matB]) \leq p+q$. Let $r$ be an integer such that
\[
\frac{ 243 (\omega^2 + \omega) }{\epsilon^2 \delta} \leq r \leq m \,.
\]
Let $\matOmega \in \R^{r \times m}$ be constructed as in Lemma~\ref{lem:Mahoney}. Denote $\hat{\matA} = \matOmega \matA$ and $\hat{\matB} = \matOmega  \matB$.

Let $\hat{\sigma}_1,\dots,\hat{\sigma}_q$  be the exact canonical correlations of $(\hat{\matA}, \hat{\matB})$,
and let $\w_1=\hat{\x}_1 / \TNorm{\hat{\matA} \hat{\x}_1},$ $\dots,\w_p=\hat{\x}_q / \TNorm{\hat{\matA} \hat{\x}_q}\,,$ and
$\p_1=\hat{\y}_1 / \TNorm{\hat{\matB} \hat{\y}_1}, \dots, p_q = \hat{\y}_q / \TNorm{\hat{\matB} \hat{\y}_q}$ be the exact canonical weights of $(\hat{\matA}, \hat{\matB})$. With probability of at least $1-\delta$ all three statements (a), (b), and (c) of Theorem~\ref{thm1} hold simultaneously.
\end{theorem}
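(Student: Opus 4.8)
The plan is to mimic the proof of Theorem~\ref{thm1} almost verbatim, replacing the SRHT sampling lemma (Lemma~\ref{lemma:sampling-ortho}) with the Clarkson--Woodruff/Meng--Mahoney lemma (Lemma~\ref{lem:Mahoney}), and then invoking the three perturbation lemmas (Lemmas~\ref{lem:pert4}, \ref{lem:pert5}, \ref{lem:pert6}) as black boxes exactly as before. The key observation is that the perturbation lemmas only require a sketching matrix $\matS$ whose action on the relevant orthonormal bases has all singular values inside $[\sqrt{1-\epsilon},\sqrt{1+\epsilon}]$; they do not care how $\matS$ was produced. So the whole argument reduces to: (i) establish that $\matOmega$ satisfies the three singular-value sandwich conditions on $\matU_{\matA}$, $\matU_{\matB}$, and $\matU_{\matC}$ (where $\matC=[\matU_{\matA};\matU_{\matB}]$ has rank $\omega$) with total failure probability at most $\delta$, and (ii) plug these into the identical chain of inequalities used in the proof of Theorem~\ref{thm1}.

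For step (i): apply Lemma~\ref{lem:Mahoney} three times, once with $\matX=\matU_{\matA}$ (which has $d=p\leq\omega$ orthonormal columns, so $\sigma_j(\matU_{\matA})=1$ for all $j$), once with $\matX=\matU_{\matB}$ ($d=q\leq\omega$), and once with $\matX=\matU_{\matC}$ ($d=\omega$). In each case, since the input matrix has orthonormal columns, the conclusion $\sqrt{1-\epsilon/3}\cdot\sigma_j(\matX)\le\sigma_j(\matOmega\matX)\le\sqrt{1+\epsilon/3}\cdot\sigma_j(\matX)$ becomes exactly $\sqrt{1-\epsilon/3}\le\sigma_j(\matOmega\matX)\le\sqrt{1+\epsilon/3}$. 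Since $r\geq 243(\omega^2+\omega)/(\epsilon^2\delta)\geq 243(d^2+d)/(\epsilon^2\delta)$ for each $d\in\{p,q,\omega\}$, the hypothesis of Lemma~\ref{lem:Mahoney} is met each time, and each event fails with probability at most $\delta/3$ (the lemma is already stated with $\delta$ replaced by $\delta/3$, yielding failure probability $\delta/3$). A union bound gives that all three sandwich conditions hold simultaneously with probability at least $1-\delta$. This also gives $\rank(\matOmega\matU_{\matA})=p=\rank(\matA)$ and $\rank(\matOmega\matU_{\matB})=q=\rank(\matB)$, which are exactly the rank-preservation hypotheses needed by Lemma~\ref{lem:pert5}; and note $\matU_{[\matOmega\matA;\matOmega\matB]}$ spans the same space as $\matOmega\matU_{\matC}$ so $\rank$ is preserved on $\matC$ too. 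Note that the accuracy parameter $\epsilon/3$ here plays the role of the ``$\epsilon$'' in the perturbation lemmas, so the constant $243=3^2\cdot 27=(3)^2\cdot 27$ in the sample-size bound reflects the $(\epsilon/3)^{-2}$ rescaling together with the $27$ from Lemma~\ref{lem:Mahoney}'s original constant.

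For step (ii): once the three singular-value conditions are in hand, the proofs of parts (a), (b), (c) go through word for word as in Theorem~\ref{thm1}, with $\matS$ replaced by $\matOmega$ throughout. Part (a) uses Corollary~\ref{cor:bjork-golub} to write $\sigma_i(\matA,\matB)=\sigma_i(\matU_{\matA}\transp\matU_{\matB})$ and $\sigma_i(\hat{\matA},\hat{\matB})=\sigma_i(\matU_{\matOmega\matA}\transp\matU_{\matOmega\matB})$, then splits the difference via the triangle inequality and bounds the two pieces with Lemma~\ref{lem:pert4} (needs the $\matU_{\matC}$ condition, giving error $\leq\epsilon/3$ since $\|\matU_{\matA}\|=\|\matU_{\matB}\|=1$) and Lemma~\ref{lem:pert5} (needs the $\matU_{\matA},\matU_{\matB}$ conditions and $\epsilon/3<1/2$, giving error $\leq 2(\epsilon/3)(1+\epsilon/3)$), for a total of $\epsilon+2\epsilon^2/9$. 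Parts (b) and (c) rely only on Lemma~\ref{lem:pert6} (which repeats the hypotheses of Lemma~\ref{lem:pert4}) and on the definition of canonical weights of the sketched pair, exactly as before. No new estimates are needed.

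The main obstacle, such as it is, is purely bookkeeping: verifying that the single sample-size bound $r\geq 243(\omega^2+\omega)/(\epsilon^2\delta)$ in the theorem statement simultaneously dominates the three separate requirements of Lemma~\ref{lem:Mahoney} for $d\in\{p,q,\omega\}$ (immediate, since $d\leq\omega$ and $t\mapsto t^2+t$ is increasing), and that the failure probabilities, each $\delta/3$ because the lemma is invoked in its already-rescaled form, sum to $\delta$ under the union bound. There is no genuine mathematical difficulty beyond what was already done for Theorem~\ref{thm1}; the perturbation machinery in Section~\ref{sec:pert} was deliberately stated in a transform-agnostic way (it only ever assumes a singular-value sandwich on $\matS$ applied to orthonormal bases), precisely so that this substitution is mechanical. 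Hence I would present the proof as: ``Apply Lemma~\ref{lem:Mahoney} three times and union bound; then the proof of Theorem~\ref{thm1} applies verbatim with $\matS$ replaced by $\matOmega$.''
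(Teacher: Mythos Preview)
Your proposal is correct and takes essentially the same approach as the paper: apply Lemma~\ref{lem:Mahoney} three times (to $\matU_{\matA}$, $\matU_{\matB}$, and $\matU_{\matC}$ where $\matC=[\matU_{\matA};\matU_{\matB}]$), union bound to get all three singular-value sandwich conditions with probability at least $1-\delta$, and then defer to the proof of Theorem~\ref{thm1} verbatim with $\matS$ replaced by $\matOmega$. Your write-up is in fact more detailed than the paper's own proof, which dispatches the argument in a few lines by simply stating the three events and invoking Theorem~\ref{thm1}.
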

\begin{proof}
Let $\matC :=  [\matU_{\matA} ; \matU_{\matB}]$. Lemma~\ref{lem:Mahoney} implies that each of the following three assertions hold with probability of at least $1-\delta/3$, hence all three hold simultaneously with probability of at least $1-\delta$:
\begin{itemize}
\item For every $r\in[p]$: $\sqrt{1-\epsilon/3} \le \sigma_r(\matOmega \matU_{\matA}) \le  \sqrt{1+\epsilon/3}\,.$
\item For every $k\in[q]$: $\sqrt{1-\epsilon/3} \le \sigma_k(\matOmega \matU_{\matB}) \le  \sqrt{1+\epsilon/3}\,.$
\item For every $h\in[\omega]$: $\sqrt{1-\epsilon/3} \le \sigma_h(\matOmega \matU_{\matC}) \le  \sqrt{1+\epsilon/3}\,.$
\end{itemize}
Recall that in the proof of Theorem~\ref{thm1} we have shown that if indeed all three hold, then (a)-(c) hold as well.
\end{proof}

Finally, similarly to Theorem~\ref{thm:alg} we have the following theorem for approximate CCA (see also Algorithm~\ref{alg:approx2}).
\begin{theorem}\label{thm:alg2}
With probability of at least $1-\delta$, Algorithm~\ref{alg:approx2} returns an $O(\epsilon)$-approximate CCA of $(\matA, \matB)$.
Assuming Bj{\"o}rck and Golub's algorithm is used in line 7, Algorithm~\ref{alg:approx2} runs in time
$$O\left( m + \nnz(\matA) + \nnz(\matB) +  n^4 \epsilon^{-2} \delta^{-1}  \right).$$
\vspace{-0.2in}
\end{theorem}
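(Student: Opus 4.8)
The plan is to follow the proof of Theorem~\ref{thm:alg} almost verbatim, substituting the Clarkson--Woodruff sparse embedding for the SRHT and Theorem~\ref{thmCW} for Theorem~\ref{thm1}. The argument is in fact a bit cleaner here: Lemma~\ref{lem:Mahoney} distorts \emph{every} singular value by at most a $\sqrt{1\pm\epsilon/3}$ factor with no preliminary coherence-reduction step, so there is no auxiliary event to union-bound over; the random sign flips $\matD$ are already absorbed into $\matOmega=\matS\matD$; and since Theorem~\ref{thmCW} is stated with its approximation guarantee expressed directly in terms of the \emph{original} pair $(\matA,\matB)$, no unitary-invariance reduction is needed at the end.

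\textbf{Correctness.} First I would check that the $r$ assigned in line~2 of Algorithm~\ref{alg:approx2} is an admissible choice for Theorem~\ref{thmCW}. Because $\omega=\rank([\matA;\matB])\le n+\ell$, we have $243(\omega^2+\omega)/(\epsilon^2\delta)\le 243((n+\ell)^2+n+\ell)/(\epsilon^2\delta)$, and both Lemma~\ref{lem:Mahoney} and Theorem~\ref{thmCW} tolerate any $r$ at least this large (up to $m$); the only point needing a careful word is the capped case $r=m$, exactly as in Theorem~\ref{thm:alg}. Granting admissibility, Theorem~\ref{thmCW} yields that, with probability at least $1-\delta$, statements (a), (b), (c) of Theorem~\ref{thm1} hold simultaneously for $(\hat{\matA},\hat{\matB})=(\matOmega\matA,\matOmega\matB)$ with accuracy parameter $\epsilon$. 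It then only remains to read these off as an $O(\epsilon)$-approximate CCA in the sense of Definition~\ref{def:approxCCA}: part (a) gives $|\sigma_i(\matA,\matB)-\hat{\sigma}_i|\le \epsilon+2\epsilon^2/9=O(\epsilon)$; part (b) gives $|\TNormS{\matA\w_i}-1|\le\max\{\tfrac{1}{1-\epsilon/3}-1,\,1-\tfrac{1}{1+\epsilon/3}\}=O(\epsilon)$ and $|\ip{\matA\w_i}{\matA\w_j}|\le \tfrac{\epsilon}{3-\epsilon}=O(\epsilon)$ for $i\ne j$, and likewise for the $\matB\p_i$; and part (c) yields $|\sigma_i(\matA,\matB)-\sigma(\matA\w_i,\matB\p_i)|=O(\epsilon)$ once the error terms are collected (using that canonical correlations lie in $[0,1]$). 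Hence the triple $(\hat{\sigma}_i,\w_i,\p_i)$ returned by the algorithm is an $O(\epsilon)$-approximate CCA of $(\matA,\matB)$.

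\textbf{Running time.} This is pure bookkeeping. Line~2 costs $O(1)$; line~3 picks one of $r$ basis indices for each of the $m$ columns of $\matS$, $O(m)$; line~4 costs $O(m)$. By Lemma~\ref{lem:Mahoney}, forming $\matOmega\matA=\matS(\matD\matA)$ in line~5 costs $O(\nnz(\matA))$ and, symmetrically, line~6 costs $O(\nnz(\matB))$. In line~7, the algorithm of Bj{\"o}rck and Golub applied to $\hat{\matA}\in\R^{r\times n}$ and $\hat{\matB}\in\R^{r\times\ell}$ costs $O(r(n^2+\ell^2)+rn\ell)=O(rn^2)$ since $n\ge\ell$; and, again because $n\ge\ell$, line~2 sets $r=O(n^2\epsilon^{-2}\delta^{-1})$, so line~7 is $O(n^4\epsilon^{-2}\delta^{-1})$. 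Adding the terms gives the stated bound $O(m+\nnz(\matA)+\nnz(\matB)+n^4\epsilon^{-2}\delta^{-1})$.

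I do not expect a genuine obstacle: all the substance already lives in Theorem~\ref{thmCW} and Lemma~\ref{lem:Mahoney}. The only places calling for attention are the parameter bookkeeping --- confirming that the $r$ chosen in terms of $n+\ell$ and capped at $m$ really satisfies the $\omega$-based hypothesis of Theorem~\ref{thmCW}, including the degenerate $r=m$ case --- and the purely mechanical conversion of the (a)--(c) bounds of Theorem~\ref{thm1} into the single additive $O(\epsilon)$ guarantee of Definition~\ref{def:approxCCA}.
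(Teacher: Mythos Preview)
Your proposal is correct and follows essentially the same approach as the paper: invoke Theorem~\ref{thmCW} using $\omega\le n+\ell$ to justify the choice of $r$, and then bookkeep the per-line costs (with Lemma~\ref{lem:Mahoney} for lines~5--6 and the $O(rn^2)$ estimate for line~7) to obtain the stated running time. You spell out in more detail than the paper does both the translation of the (a)--(c) bounds into Definition~\ref{def:approxCCA} and the capped $r=m$ edge case, but the structure of the argument is the same.
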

\begin{proof}
The bound is immediate from Theorem~\ref{thmCW} since $n + \ell \ge \omega$. So, we only need to analyze the running time.
Step 2 takes $O(1)$ operations. Step 3 requires $O(m)$ operations.
Step 4 requires $O(m)$ operations as well.
Step 5 involves the multiplication of $\matA$ with $\matS  \matD$ from the left. Lemma~\ref{lem:Mahoney} argues that this can be accomplished in $O( \nnz(\matA))$ arithmetic operations. Similarly, step 6 requires $O( \nnz(\matB))$ operations.
Finally, step 7 takes $O( r n \ell + r (n^2 + \ell^2))$ arithmetic operations.  Assuming that $n \geq \ell$, the total running time is
$O\left( m+\nnz(\matA)+\nnz(\matB) +   r n^2 \right)$. Plugging the value for $r= \left(243 \left( \left(n+\ell \right)^2 + n+\ell \right) \right) / \left(\epsilon^2 \delta \right)$
and using again that $n \ge \ell$ establishes the bound.
\end{proof}

\subsection*{Sufficient properties of a dimension reduction transform}
We  stress that the three bounds stated in the beginning of the proof of Theorem~\ref{thmCW} are
three sufficient conditions for \emph{any} matrix $\matOmega$ one would like to pick and design a dimensionality reduction algorithm for CCA with provable guarantees. 

\section{Relative vs. Additive Error}\label{sec:error:lowerbound}
In this section we prove that it is not possible to replace the additive error guarantees of Theorem~\ref{thm:alg} with relative error guarantees unless $r \approx m$. To prove such a statement we leverage tools from communication complexity~\cite{book:CC,CC:Yao}.
In general, communication complexity studies the following problem involving two parties (usually referred as Alice and Bob). Alice and Bob privately receive an $m$-bit string $\x$ and an $m$-bit string $\y$, respectively. The goal is to compute a certain function $f(\x,\y)$ with the least amount of communication (in bits) between them. We are assuming that they both follow a predefined communication protocol $\mathcal{P}$ agreed upon beforehand. The protocol consists of the players sending bits to each other until the value of $f$ can be determined, see~\cite{book:CC} for more details. Probabilistic protocols in which players have access to random bits (coin tosses) can be also defined\footnote{There are two models depending on whether the coin tosses are public or private. In the public random string model the players share a common random bit-string, while in the private model each player has his/her own private random bit-string. Here we focus on the public model.}. We say that a randomized protocol $\mathcal{P}$ computes a function $f$ with error $\delta$ if $\forall \x\in{\{0,1\}^m},\forall \y\in{\{0,1\}^m}:\ \Prob[ \mathcal{P}(\x,\y) = f(\x,\y)] \geq 1-\delta$. For $0<\delta<1/2$, $R_\delta(f)$ is the minimum worst case communication cost (in bits) over all randomized protocols that compute $f$ with error $\delta$.
In the proof we use a reduction to the set disjointness problem~\cite{CC:story}. The set disjointness problem $\text{DISJ}(\x,\y)$ is defined as follows: Alice gets an $\x\in\{0,1\}^m$ as input and Bob gets $\y\in\{0,1\}^m$. Their goal is to decide if there exists $i\in{[m]}$ so that $x_i = y_i = 1$ by exchanging as less information as possible. It is known that $R_{\delta}(\text{DISJ})=\Omega(m)$ for any constant $0<\delta<1/2$, see~\cite{DISJ},\cite[Theorem~17]{CC:story} for a modern proof. In the following lemma, we use the lower bound of the set disjointness problem to show that achieving relative error approximation for CCA (via using the SRHT specifically) while significantly reducing the dimensionality is impossible.
\begin{lemma}
Assume that given any matrix pair ($\matA$, $\matB$) and any constant $0<\eps <1$, Algorithm~\ref{alg:approx} computes a pair $(\hat{\matA}, \hat{\matB})$ by setting a sufficient large value for $r$ in Step $2$ so that the canonical correlations are relatively preserved with constant probability, i.e., with constant probability:
\begin{equation}\label{eq:relative}
	(1-\eps) \sigma_i(\matA,\matB) \leq \sigma_i(\hat{\matA},\hat{\matB}) \leq (1+\eps) \sigma_i(\matA,\matB) \quad i=1,\ldots ,q.
\end{equation}
Then, it follows that $r = \Omega( m/\log(m))$.
\end{lemma}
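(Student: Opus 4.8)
The plan is to argue by contradiction: if such a relative-error algorithm existed with $r = o(m/\log m)$, we could use it to build a cheap randomized communication protocol for $\text{DISJ}$ on $m$-bit inputs, contradicting $R_\delta(\text{DISJ}) = \Omega(m)$. The first step is to encode an instance $(\x,\y)$ of set disjointness into a matrix pair $(\matA,\matB)$ for which the top canonical correlation is $1$ when the sets intersect and is either $0$ or bounded away from $1$ when they are disjoint. A natural choice is to take $\matA, \matB \in \R^{m \times 1}$ (or with a few extra columns to control the rank hypothesis $p \ge q$) with $\matA = \x$ viewed as a column vector and $\matB = \y$; then $\sigma_1(\matA,\matB) = |\x\transp\y| / (\TNorm{\x}\TNorm{\y})$, which is nonzero exactly when there is a coordinate $i$ with $x_i = y_i = 1$, and is exactly $0$ when the sets are disjoint. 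The key feature to exploit is that under the relative-error guarantee~\eqref{eq:relative}, if $\sigma_1(\matA,\matB) = 0$ then $\sigma_1(\hat\matA,\hat\matB) = 0$ as well, whereas if the sets intersect then $\sigma_1(\hat\matA,\hat\matB) \ge (1-\eps)\sigma_1(\matA,\matB) > 0$; so the sign (zero vs. nonzero) of $\hat\sigma_1$ decides $\text{DISJ}$ with constant probability.

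Next I would turn this into a protocol. The crucial structural point is that Algorithm~\ref{alg:approx} applies a \emph{linear} map $\matS\matH\matD$ to each of $\matA$ and $\matB$ separately, and this map does not depend on the data — only on the shared random bits (which are allowed in the public-coin model) and on $m$. So Alice, holding $\x$, computes $\hat\matA = \matS\matH\matD\,\x \in \R^{r}$; Bob, holding $\y$, computes $\hat\matB = \matS\matH\matD\,\y \in \R^{r}$. Alice sends $\hat\matA$ to Bob (encoding each of the $r$ real entries to sufficiently many bits — one has to be slightly careful here, but since $\matH\matD$ has $\pm 1$ entries scaled by $m^{-1/2}$ and $\x$ is a $0/1$ vector, each entry of $m^{1/2}\hat\matA$ is an integer of magnitude at most $m$, so $O(\log m)$ bits per coordinate suffice), and Bob then computes $\sigma_1(\hat\matA,\hat\matB)$ locally and outputs ``intersecting'' iff it is nonzero. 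The total communication is $O(r\log m)$ bits. By the $\Omega(m)$ lower bound for $\text{DISJ}$, we get $r \log m = \Omega(m)$, i.e., $r = \Omega(m/\log m)$.

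The step I expect to be the main obstacle — or at least the one needing the most care — is the exact-arithmetic / bit-precision issue: the lower bound $R_\delta(\text{DISJ}) = \Omega(m)$ is about \emph{bits}, so I must ensure the reals exchanged in the protocol can be represented exactly (or with negligible error) in $O(\log m)$ bits each, and that the ``$\sigma_1$ is nonzero'' test is robust to the relative-error perturbation — in particular that a disjoint instance yields \emph{exactly} $\hat\sigma_1 = 0$ rather than something tiny. Using $0/1$ integer vectors for $\x,\y$ and the $\pm 1$-valued $\matH\matD$ makes all intermediate quantities integers (up to the global $m^{-1/2}$ scaling), which resolves the precision point cleanly; and since $\sigma_1(\matA,\matB) = 0$ forces the exact canonical correlation of the sampled pair to be $0$ regardless of $r$, the zero/nonzero dichotomy is preserved exactly, so the relative-error hypothesis is exactly what lets the protocol distinguish the two cases. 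A secondary detail is handling the rank conditions $p \ge q \ge 1$ and the requirement $\sigma_i$ be defined for $i \in [q]$ with $q = 1$; this is harmless, since a single nonzero column has rank $1$, and one only needs to handle the degenerate case $\x = \zeromtx$ or $\y = \zeromtx$ separately (which is trivial).
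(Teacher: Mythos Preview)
Your proposal is correct and follows essentially the same route as the paper: reduce from set disjointness by taking $\matA=\x$, $\matB=\y$ as single columns, use shared randomness so that Alice and Bob can each apply the linear map $\matS\matH\matD$ to their own input, send the $r$-dimensional sketch across using $O(\log m)$ bits per coordinate (exploiting that $\sqrt{m}\,\matS\matH\matD$ has $\pm1$ entries so the sketch is integer-valued with entries bounded by $m$), and decide disjointness by testing whether the approximate canonical correlation is zero or nonzero. The paper's proof is exactly this, with Bob testing whether $\ip{\widetilde\x}{\widetilde\y}=0$ rather than $\hat\sigma_1=0$, which is the same test.
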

\begin{proof}
The proof follows by a reduction to the set disjointness communication complexity problem. 
That is, assume that Alice gets an $\x\in\{0,1\}^m$ as input and Bob gets $\y\in\{0,1\}^m$ (both $\x,\y$ are non-zero). Their goal is to decide if there exists $i\in{[m]}$ so that $x_i = y_i = 1$.
Set $\epsilon = 1/2$ and let $0<\delta<1/2$ be a constant in Algorithm~\ref{alg:approx}. Now, we will describe a protocol that solves the set disjointness problem using Algorithm~\ref{alg:approx} for the special case of two one dimensional subspaces. Alice and Bob can compute $\widetilde{\x} = \sqrt{m}\matS \matH\matD \x$ and $\widetilde{\y} = \sqrt{m}\matS \matH\matD \y$, respectively (using shared randomness). Then, Alice sends $\widetilde{\x}$ to Bob. Under the hypothesis (Eqn.~\ref{eq:relative}), it holds that
	\[ \frac1{2}\frac{\ip{\x}{\y}}{\norm{\x}\norm{\y}} \leq \frac1{r} \frac{\ip{\widetilde{\x}}{\widetilde{\y}}}{\norm{\widetilde{\x}}\norm{\widetilde{\y}}} \leq \frac{3}{2} \frac{\ip{\x}{\y}}{\norm{\x}\norm{\y}}, \]
with constant probability since $\sigma_1(\x,\y) = \frac{\ip{\x}{\y}}{\norm{\x}\norm{\y}}$. Now, Bob can decide if there exists $i$, so that $x_i=y_i =1$ by checking if $\ip{\widetilde{\x}}{\widetilde{\y}}$ is zero or non-zero. Hence, this protocol decides the set disjointness problem with constant probability $\delta$. Now, since $\sqrt{m}\matS\matH \matD$ is an $r\times m$ matrix with entries from $\{-1,+1\}$ and $\x\in\{0,1\}^m$, it follows that $\widetilde{\x}$ is integer-valued with $\infnorm{\widetilde{\x}} \leq m$. Therefore, we can encode $\widetilde{\x}$ using $r \log(2m)$ bits. Since $R_\delta(\text{DISJ})= \Omega (m)$, the number of bits exchanged between Alice and Bob must be at least $Cm$ for some constant $C>0$. Therefore $r\log(2m) \geq C m$.
\end{proof}

\section{Experiments}\label{sec:experiments}
In this section we report the results of a few small-scale experiments.
Our experiments are not meant to be exhaustive. However they do show that our algorithm
can be modified slightly to achieve very good performance in practice while still producing acceptable results.

Our implementation of Algorithm~\ref{alg:approx} differs from the pseudo-code description in two ways. First, we use
$$r \longleftarrow \min(\epsilon^{-2}\left[\sqrt{n+\ell} + \sqrt{\log(m/\delta)} \right]^2 \log (n+\ell)/\delta), m)$$
for setting the sample size, i.e., we keep the same asymptotic behavior, but drop the constants. The constants in Algorithm~\ref{alg:approx} are rather large, so they preclude the possibility of beating Bj{\"o}rck and Golub's algorithm for reasonable matrix sizes. Our implementation
also differs in the choice of the underlying mixing matrix. Algorithm~\ref{alg:approx}, and the analysis, uses the WHT. 
However, as we discussed in Section~\ref{sec:wht}, other Fourier-type transforms will work as well 
and some of these alternative transforms have certain advantages that make them better suited for an actual implementation~\cite{AMT10}.
Specifically, we use the implementation of randomized Discrete Hartley Transform in the Blendenpik library~\cite{AMT10}\footnote{Available at http://www.mathworks.com/matlabcentral/fileexchange/25241-blendenpik.}.

We report the results of three experiments. In each experiment we run our code five times on a fixed pair of matrices (datasets) $\matA$ and $\matB$, and compare the different outputs to the true canonical correlations. The first two experiments involved synthetic datasets, for which we set $\epsilon=0.25$ and $\delta=0.05$. The last experiment was conducted on a real-life dataset, and we used $\epsilon=0.5$ and $\delta=0.2$. All experiments were conducted in a 64-bit version of MATLAB $7.9$. We used a Lenovo W520 Thinkpad: Intel Corei7-2760QM CPU running at  2.40 GHz, with 8GB RAM, running Linux 3.5.The measured running times are wall-clock times and were measured using the {\bf ftime} Linux system call.
\begin{figure*}[t]
\noindent \begin{centering}
\begin{tabular}{ccc}
\includegraphics[width=0.27\columnwidth]{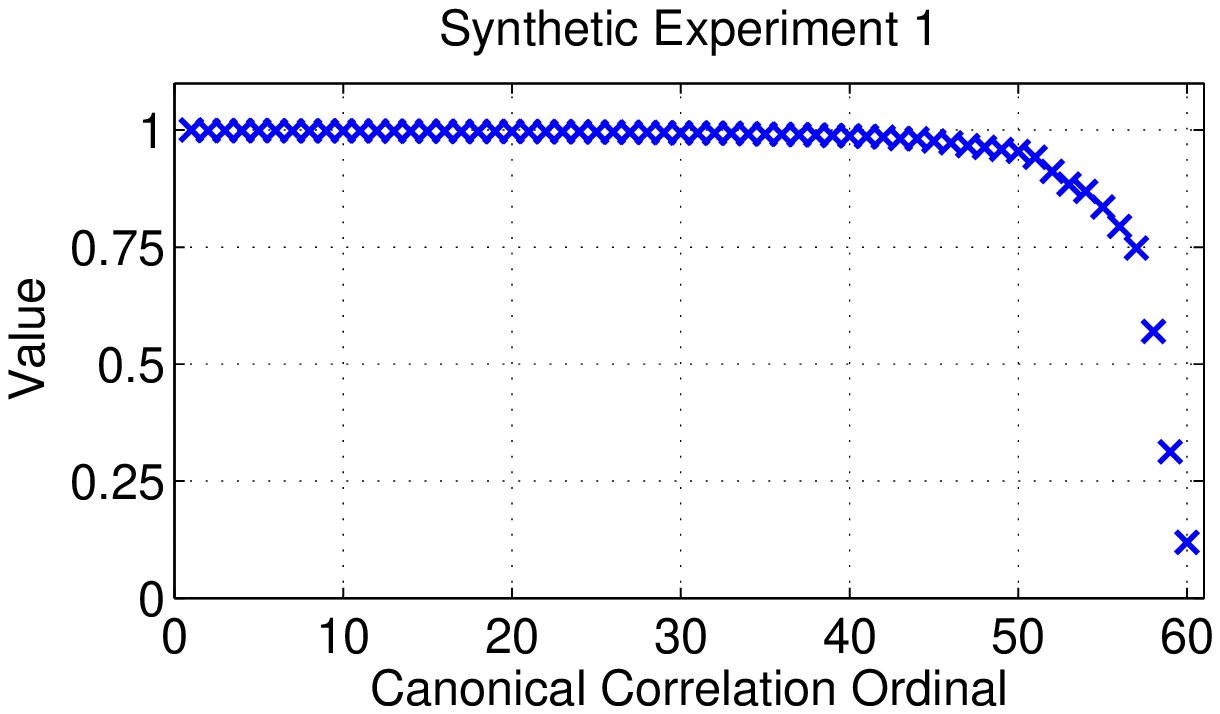} & \includegraphics[width=0.27\columnwidth]{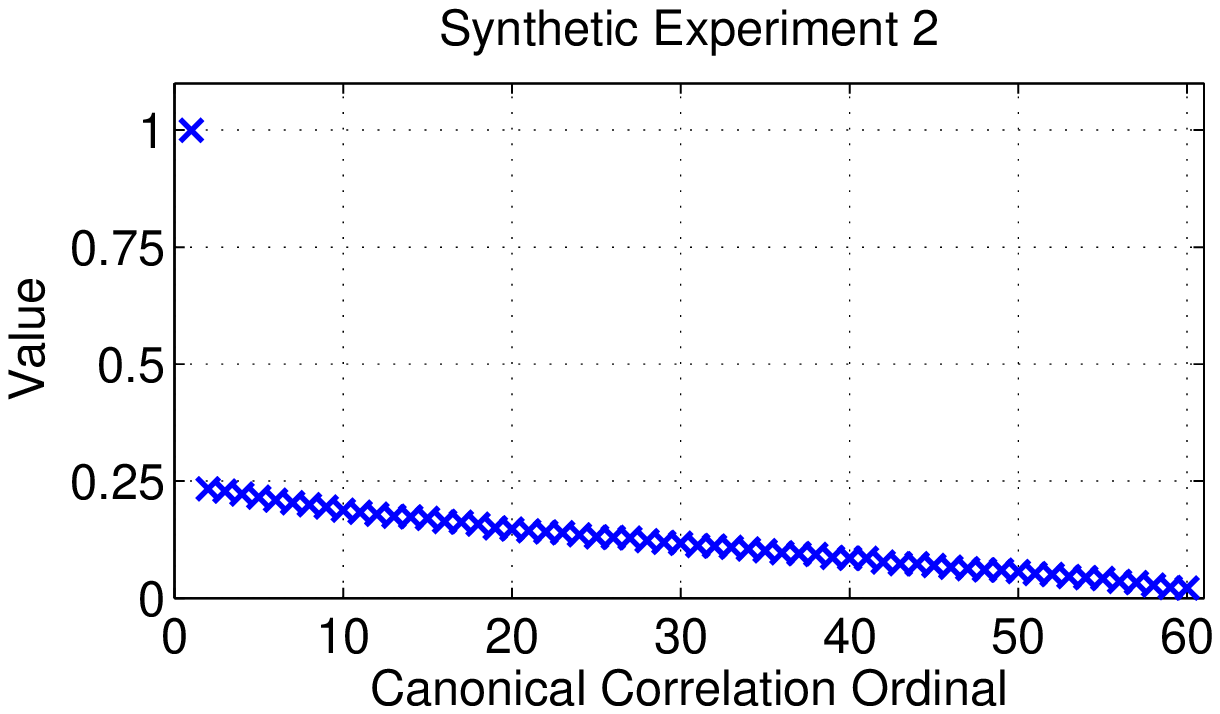} &
\includegraphics[width=0.27\columnwidth]{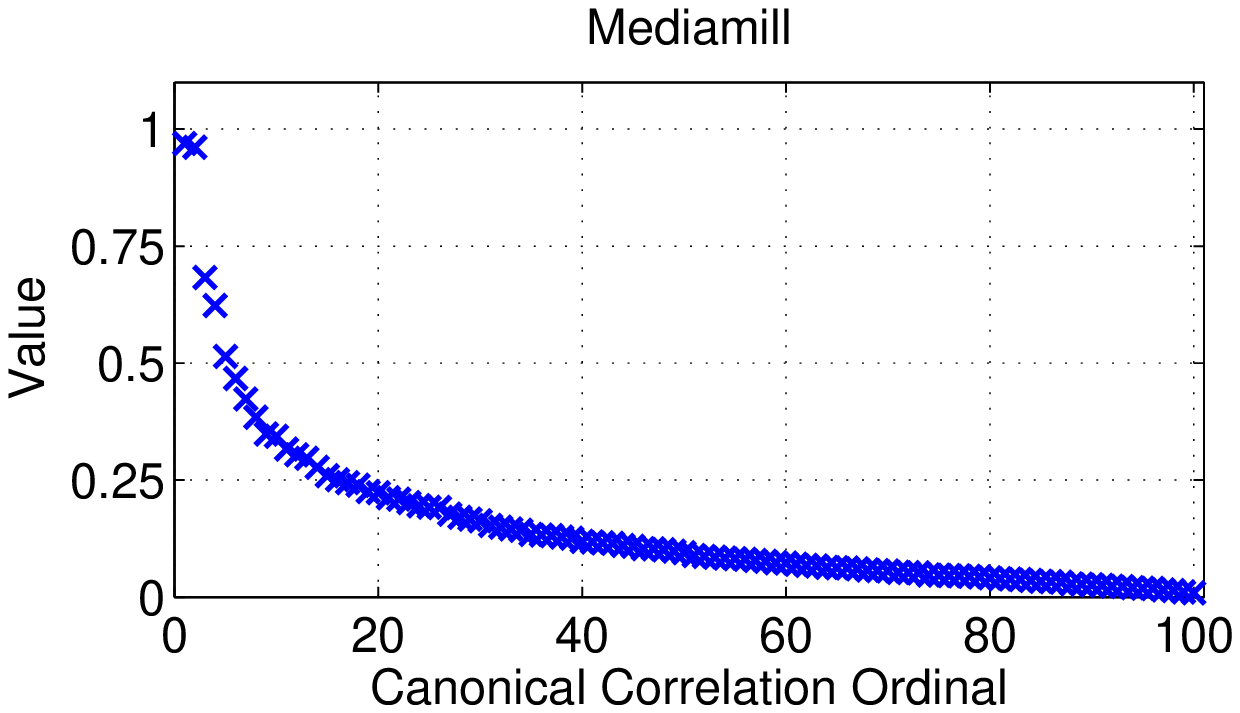} \tabularnewline
\textbf{(a)} & \textbf{(b)} & \textbf{(c)} \tabularnewline
\end{tabular}
\par\end{centering}
\caption{\label{fig:cc-values}The exact canonical correlations.}
\vspace*{0.1in}
\noindent \begin{centering}
\begin{tabular}{ccc}
\includegraphics[width=0.27\columnwidth]{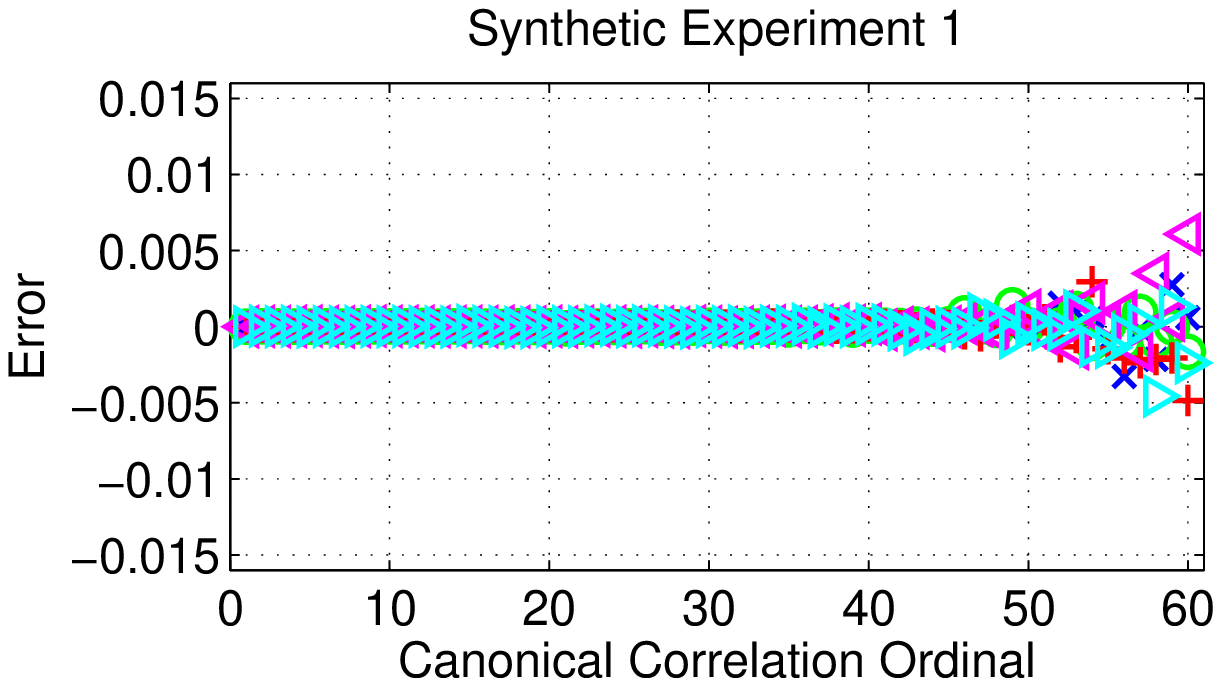} & \includegraphics[width=0.27\columnwidth]{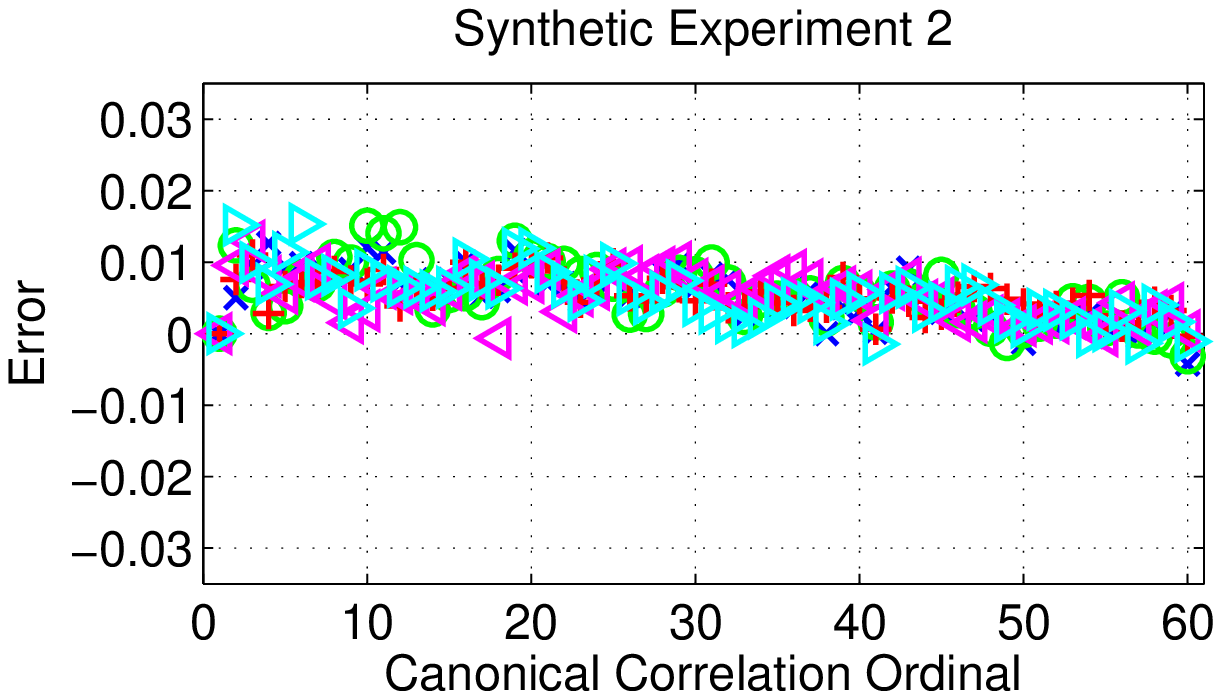} &
\includegraphics[width=0.27\columnwidth]{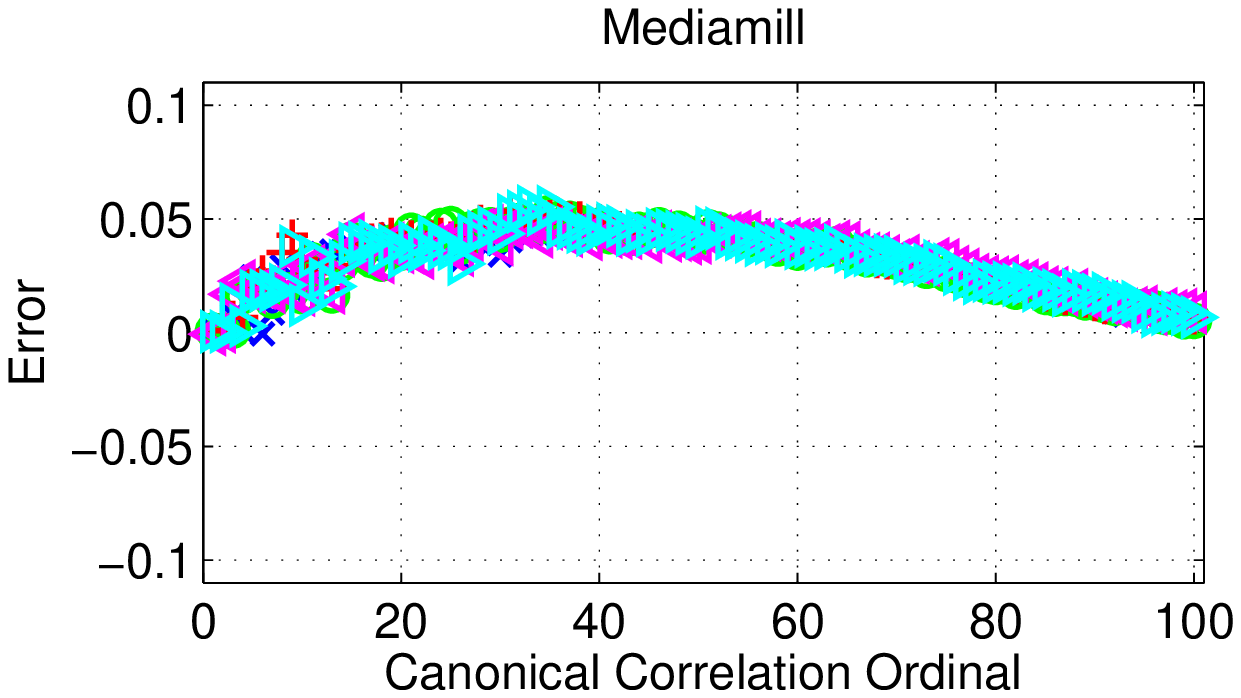}\tabularnewline
\textbf{(a)} & \textbf{(b)} & \textbf{(c)} \tabularnewline
\end{tabular}
\par\end{centering}
\caption{\label{fig:cc-approx}Error in approximation of the canonical correlations.}
\vspace*{0.1in}
\noindent \begin{centering}
\begin{tabular}{ccc}
\includegraphics[width=0.27\columnwidth]{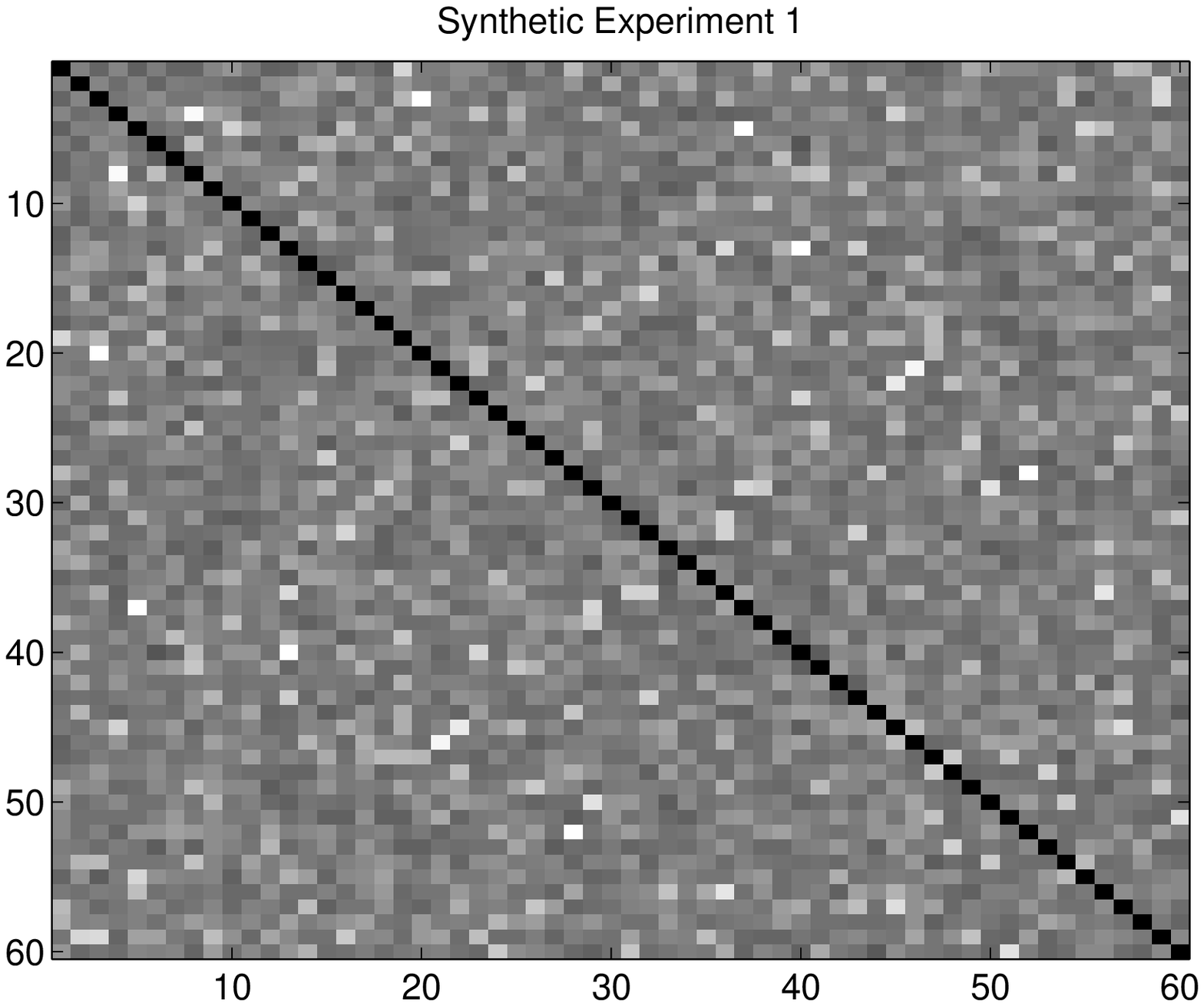} & \includegraphics[width=0.27\columnwidth]{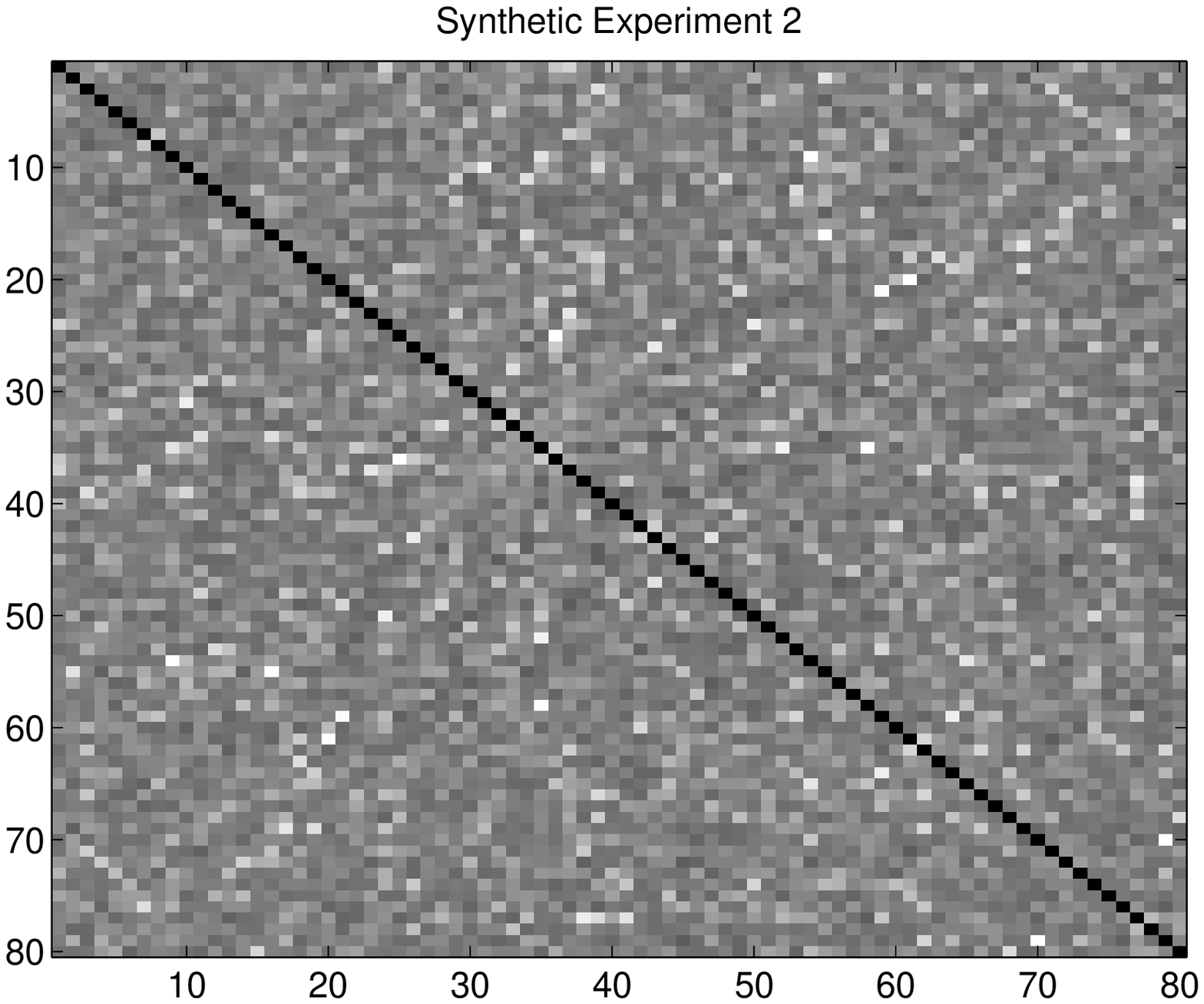} &
\includegraphics[width=0.27\columnwidth]{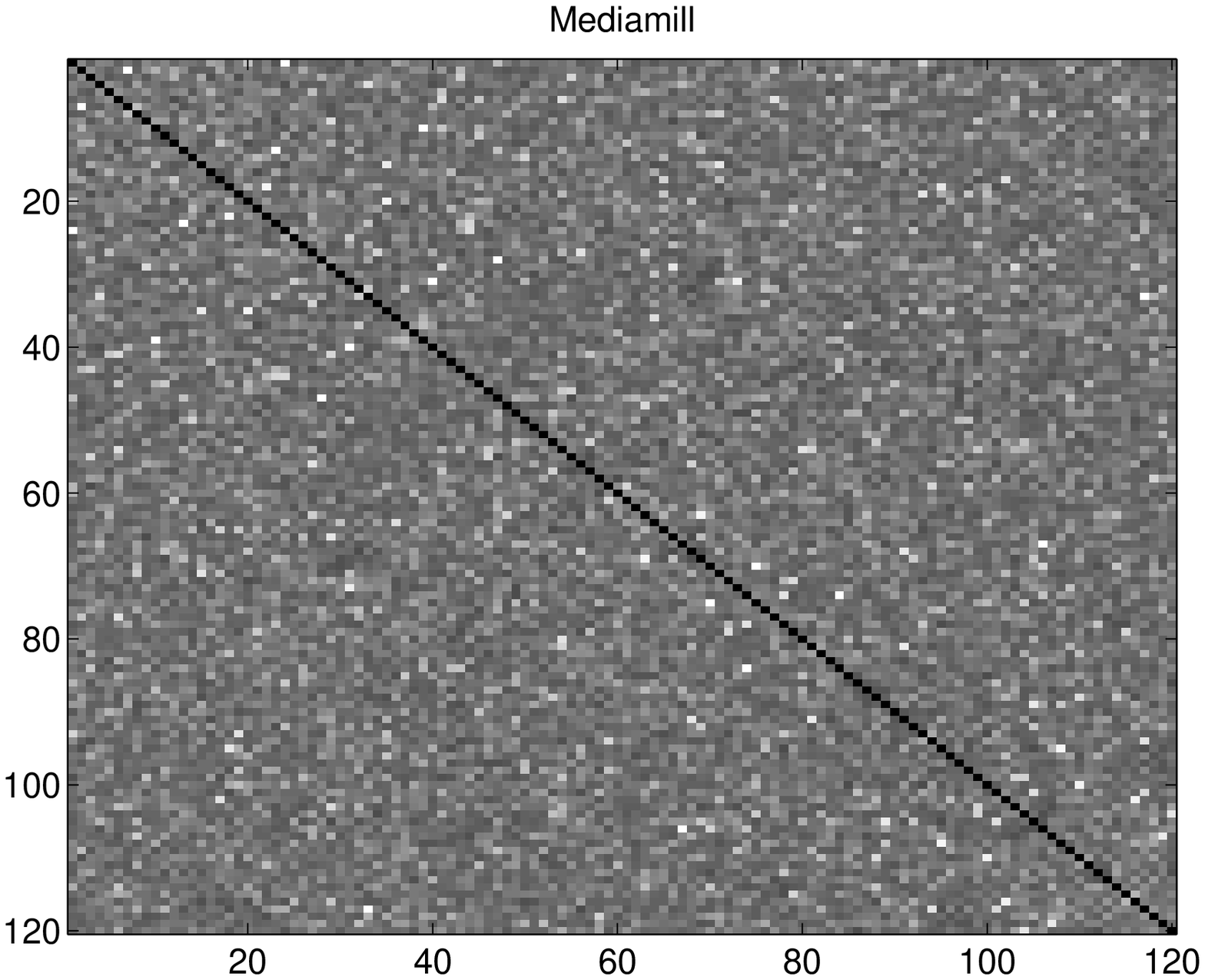}\tabularnewline
\textbf{(a)} & \textbf{(b)} & \textbf{(c)} \tabularnewline
\end{tabular}
\par\end{centering}
\caption{\label{fig:cc-aw}Visualization of the absolute value of the enteries in $\matW\transp \matA\transp \matA \matW$ in one of the runs. Color vary between white and black, with black is $1$ and white is $10^{-5}$.}
\vspace*{0.1in}
\noindent \begin{centering}
\begin{tabular}{ccc}
\includegraphics[width=0.27\columnwidth]{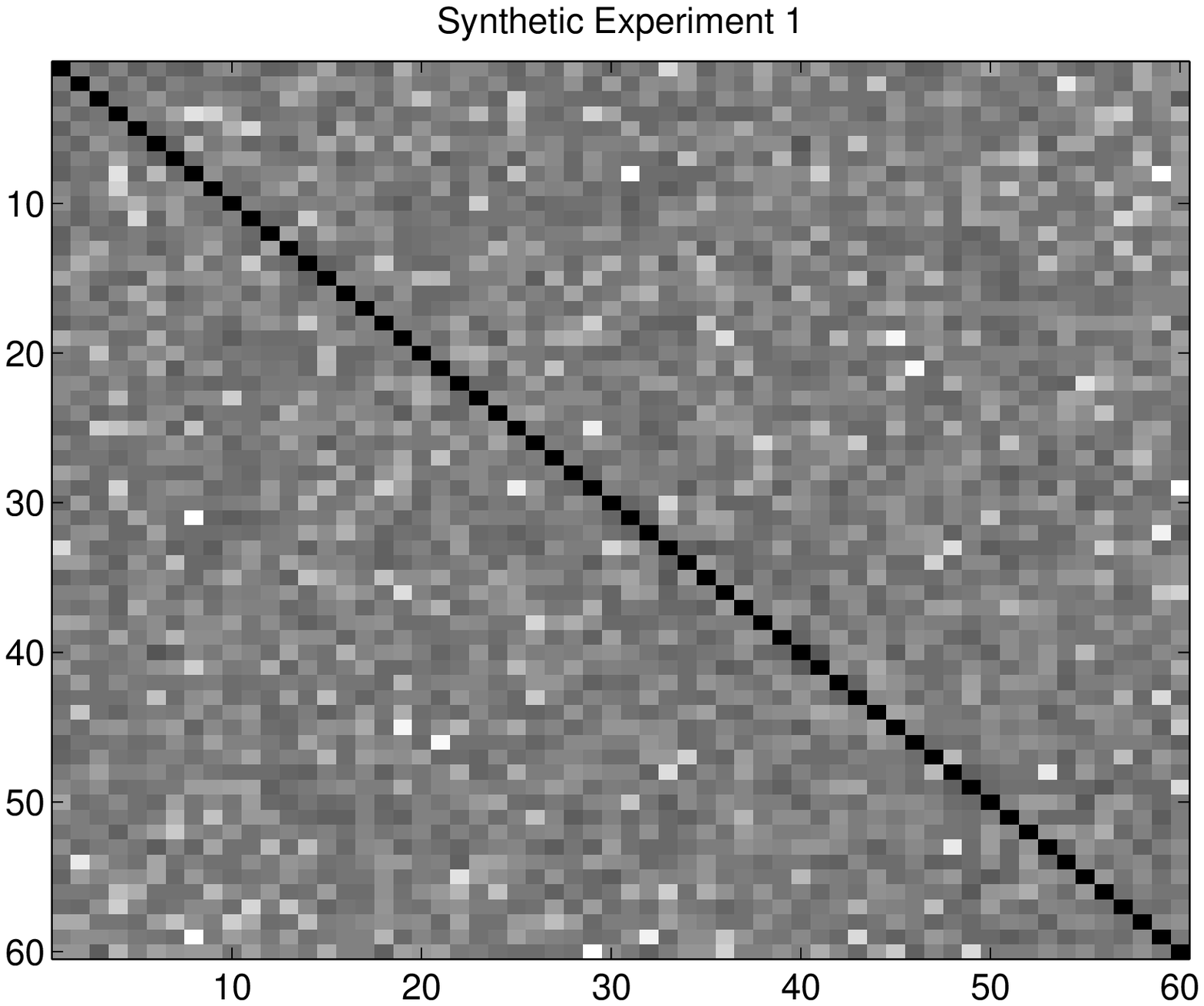} & \includegraphics[width=0.27\columnwidth]{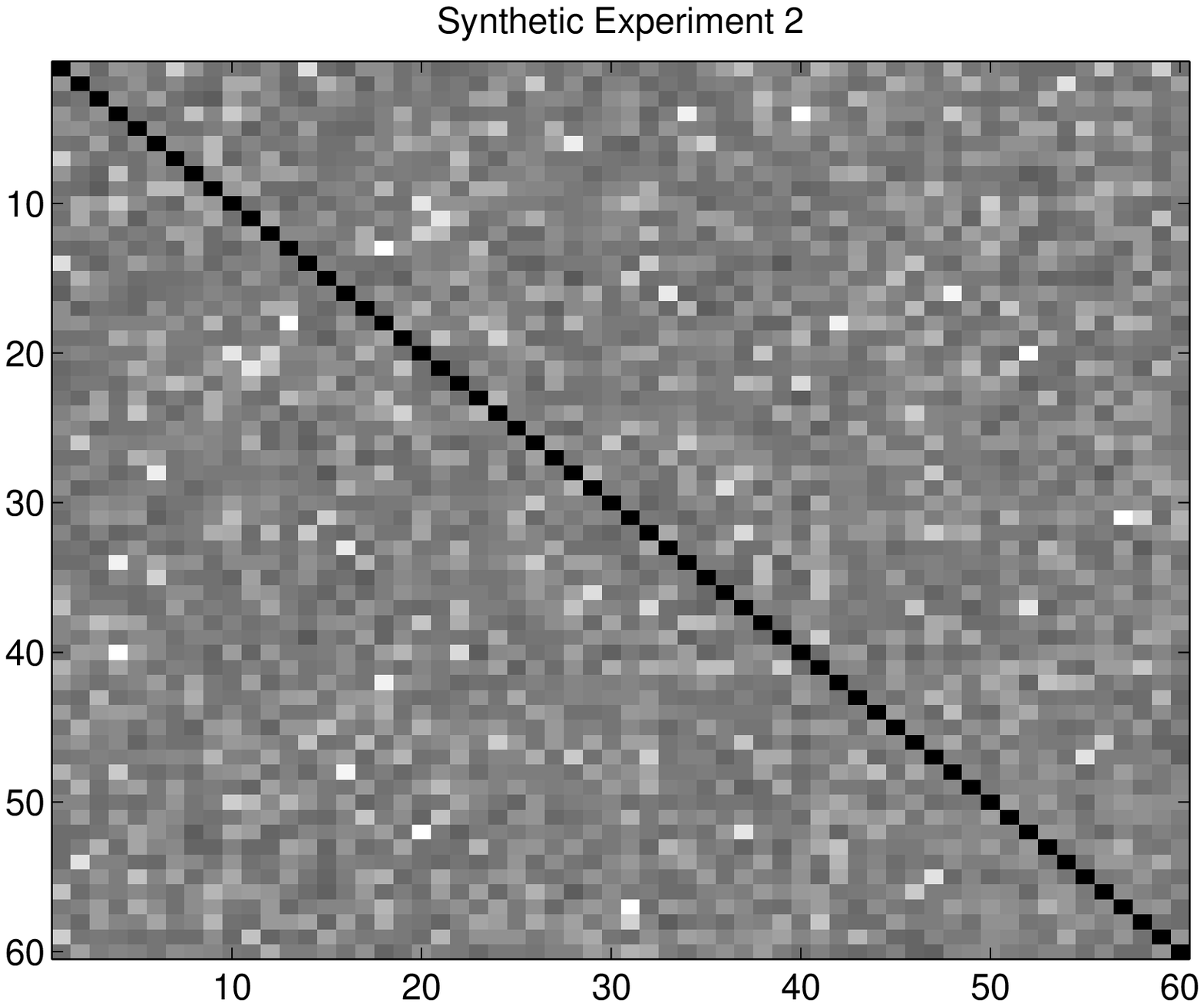} &
\includegraphics[width=0.27\columnwidth]{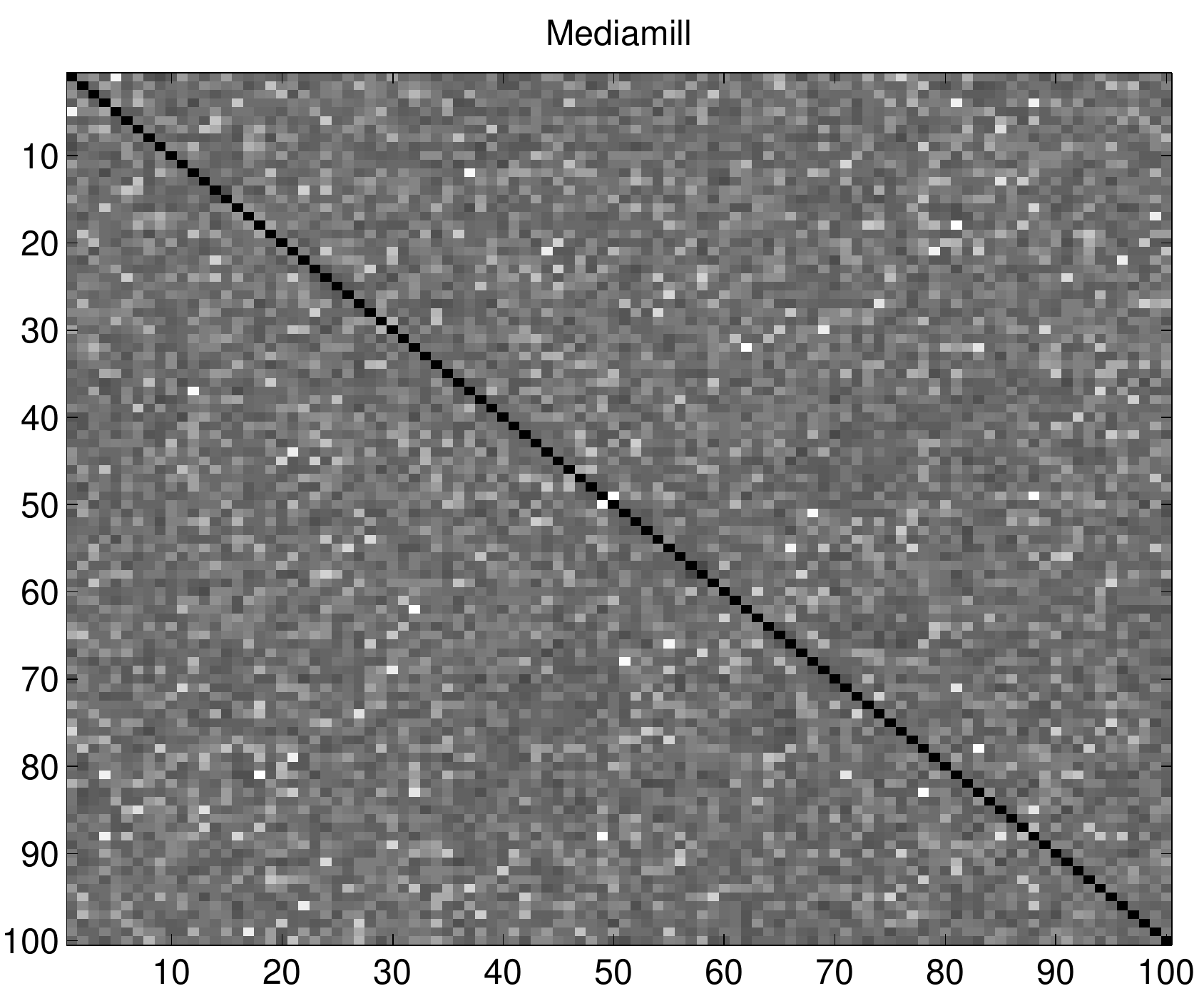}\tabularnewline
\textbf{(a)} & \textbf{(b)} & \textbf{(c)} \tabularnewline
\end{tabular}
\par\end{centering}
\caption{\label{fig:cc-bp}Visualization of the absolute value of the enteries in $\matP\transp \matB\transp \matB \matP$ in one of the runs. Color vary between white and black, with black is $1$ and white is $10^{-5}$.}
\end{figure*}

\subsection{ Synthetic Experiment 1} In this experiment we first draw five random matrices: three matrices $\matG, \matW, \matZ \in \R^{m \times n}$ with independent entries from the normal distribution, and two matrices $\matX, \matY \in \R^{n \times n}$ with independent entries from the uniform distribution on $[0, 1]$. We now set $\matA = \matG \matX + 0.1 \cdot \matW$ and $\matB = \matG \matY + 0.1 \cdot \matZ$. We use the sizes $m=120,000$ and $n=60$. Conceptually,  we first take a random basis (the columns of $\matG$), and linearly transform it in two different ways (by multiplying by $\matX$ and $\matY$). The transformation does not change the space spanned by the bases. We now add to each base some random noise ($0.1 \cdot \matW$ and $0.1 \cdot \matZ$). Since both $\matA$ and $\matB$ essentially span the same column space, only polluted by different noise, we expect $(\matA, \matB)$ to have mostly large canonical correlations (close to $1$), but also a few small ones. Indeed, Figure~\ref{fig:cc-values}(a), which plots the canonical correlations of this pair of matrices, confirms our hypothesis.

Figure~\ref{fig:cc-approx}(a) shows the (signed) error in approximating the canonical correlations, in five different runs. The actual error is always an order of magnitude smaller than the input $\epsilon$; the maximum absolute error is only $0.011$. For large canonical correlations the error is much smaller, and the approximated value is very accurate. For smaller correlations, the error starts to get larger, but it is still an order of magnitude smaller than the actual value for the smallest correlation.

Next, we checked whether $\matA \matW$ and $\matB \matP$ are close to having orthogonal columns, where $\matW$ and $\matP$ contain the canonical weights returned by the proposed algorithm. Figure~\ref{fig:cc-aw}(a) visualizes the entries of $\matW\transp \matA\transp \matA \matW$ and figure~\ref{fig:cc-bp}(a) visualizes the entries of $\matP\transp \matB\transp \matB \matP$ in one of the runs. We see that the diagonal is dominant, and close to 1, and the off diagonal entries are small (but not tiny). The maximum condition number of $\matA \matW$ and $\matB \matP$ we got in the five different runs was $1.08$, indicating the columns are indeed close to be orthogonal.

As for the running time, the proposed algorithm takes about 55\% less time than Bj{\"o}rck and Golub's algorithm ($0.915$ seconds vs. $2.04$ seconds).

\subsection{Synthetic Experiment 2} In this experiment we first draw three random matrices. The first matrix $\matX \in \R^{m \times n}$ has independent entries from the normal distribution. The second matrix $\matY \in \R^{m \times k}$ has independent entries which take value $\pm 1$ with equal probability. The third matrix $\matZ \in \R^{k \times n}$ has independent entries from the uniform distribution on $[0, 1]$. We now set $\matA = \matX + 0.1 \cdot \matY \cdot \left( \one_{k \times n} + \matZ \right)$ and $\matB = \matY$, where $\one_{k \times n}$ is the $k \times n$ all-ones matrix. We use the sizes $m=80,000$, $n=80$ and $k=60$. Here we basically have noise ($\matB$) and a matrix polluted with that noise ($\matA$). So there is some correlation, but really the two subspaces are different; there is one large correlation (almost $1$) and all the rest are small (Figure~\ref{fig:cc-values}(b)).

Figure~\ref{fig:cc-approx}(b) shows the (signed) error in approximating the correlations, in five different runs. The actual error is an order of magnitude smaller than the target $\epsilon$; the maximum absolute error is only $0.02$. Again, for the largest canonical correlation (which is close to $1$) the result is very accurate, with tiny errors. For the other correlations it is larger. For tiny correlations the error is about of the same magnitude as the actual value. Interestingly, we observe a bias towards over-estimating the correlations. 

Next, we checked whether $\matA \matW$ and $\matB \matP$ are close to having orthogonal columns, where $\matW$ and $\matP$ contain the canonical weights returned by the proposed algorithm. Figure~\ref{fig:cc-aw}(b) visualizes the entries of $\matW\transp \matA\transp \matA \matW$ and figure~\ref{fig:cc-bp}(b) visualizes the entries of $\matP\transp \matB\transp \matB \matP$ in one of the runs. We see that the diagonal is dominant, and close to 1, and the off diagonal entries are small (but not tiny). The maximum condition number of $\matA \matW$ and $\matB \matP$ we got in the five different runs was $1.08$, indicating the columns are indeed close to be orthogonal.

As for the running time, the proposed algorithm takes about 40\% less time than Bj{\"o}rck and Golub's algorithm ($1.77$ seconds vs. $1.77$ seconds).

\subsection{ Real-life dataset: Mediamill}
We also tested the proposed algorithm on the annotated video dataset from the Mediamill Challenge~\cite{SnoekEtAl06}\footnote{The dataset is publicly available at \url{http://www.csie.ntu.edu.tw/~cjlin/libsvmtools/datasets/multilabel.html\#\# mediamill.}}. Combining the training set and the challenge set, 43907 images are provided, each image is a representative keyframe image of a video shot. The dataset provides 120 features for each image, and the set is annotated with 101 labels. The label matrix is rank-deficient with rank 100. Figure~\ref{fig:cc-values}(c) shows the exact canonical correlations. We see there is a few high correlations, with very strong decay afterwards.

Figure~\ref{fig:cc-approx}(c) shows the (signed) error in approximating the correlations, in five different runs. The maximum absolute error is rather small (only 0.055). For the large correlations, which are the more interesting ones in this context, the error is much smaller, so we have a relatively high accuracy approximation. Again, there is an interesting bias towards over-estimating the correlations. 

Next, we checked whether $\matA \matW$ and $\matB \matP$ are close to having orthogonal columns, where $\matW$ and $\matP$ contain the canonical weights returned by the proposed algorithm. Figure~\ref{fig:cc-aw}(c) visualizes the entries of $\matW\transp \matA\transp \matA \matW$ and figure~\ref{fig:cc-bp}(c) visualizes the entries of $\matP\transp \matB\transp \matB \matP$ in one of the runs. We see that the diagonal is dominant, and close to 1, and the off diagonal  entries are small (but not tiny). The maximum condition number of $\matA \matW$ and $\matB \matP$ we got in the five different runs was $1.23$, which is larger than the previous two examples, but still indicating the columns are not too far from being orthogonal.

As for the running time, the proposed algorithm is considerably faster than Bj{\"o}rck and Golub's algorithm ($0.69$ sec vs. $2.03$ sec).

\subsection{Summary}
The experiments are not exhaustive, but they do suggest the following. First, it appears that the sampling size bounds are rather loose. The algorithm achieves much better approximation errors. Second, there seems to be a connection between the canonical correlation value and the error: for larger correlations the error is smaller. Our bounds fail to capture these phenomena. Finally, the experiments show that the proposed is faster than Bj{\"o}rck and Golub's algorithm {\em in practice} on both synthetic and real-life datasets, even if they are fairly small. We expect the difference to be much larger on big datasets.

\section{Conclusions}
We proved that dimensionality reduction via Randomized Fast Unitary Transforms leads to faster algorithms for Canonical Correlation Analysis, 
beating the seminal SVD-based algorithm of Bj{\"o}rck and Golub. 

The proposed algorithm builds upon a family of similar algorithms which, in recent years, 
led to similar running time improvements for other classical linear algebraic and machine learning problems: 
(i) Least-squares regression~\cite{RT08, BD09,DMMS11,AMT10};
(ii) approximate PCA (via low-rank matrix approximation)~\cite{HMT}; 
(iii) matrix multiplication~\cite{Sar06}; (v) K-means clustering~\cite{BZD10};
(vi) support vector machines~\cite{PBMD13}.

\section*{Acknowledgments}
Haim Avron and Christos Boutsidis acknowledge the support from XDATA program of the Defense Advanced Research Projects Agency (DARPA), administered through Air Force Research Laboratory contract FA8750-12-C-0323. Sivan Toledo was supported by grant 1045/09 from the Israel Science Foundation (founded by the Israel Academy of Sciences and Humanities) and by grant 2010231 from the US-Israel Binational Science Foundation.

\bibliographystyle{plain}

\end{document}